\newtheorem{theorem}{Theorem}
\newtheorem{lemma}{Lemma}
\newtheorem{claim}{Claim}
\newtheorem{definition}{Definition}
\newtheorem{rem}{Remark}
\DeclareMathOperator*{\argmax}{argmax}
\DeclareMathOperator{\Line}{\mathsf{line}}
\DeclareMathOperator{\Surface}{\mathsf{surface}}
\DeclareMathOperator{\Star}{\mathsf{star}}
\DeclareMathOperator{\Deg}{\mathsf{deg}}
\DeclareMathOperator*{\E}{\mathbb{E}}
\DeclareMathOperator{\SA}{\mathsf{SA}}
\DeclareMathOperator{\calc}{\mathcal{C}}
\DeclareMathOperator{\Nbr}{\mathsf{Nbr}}
\DeclareMathOperator{\Frac}{\mathsf{Frac}}
\newcommand{\C}{\mathcal{C}}
\begin{document}

\title{Inapproximability of $H$-Transversal/Packing\footnote{It is an expanded, generalized, and refocused version of our earlier unpublished manuscript~\cite{GL14b} and our conference version that appears in the proceedings of APPROX 15.}}

\author{
Venkatesan Guruswami\thanks{Supported in part by NSF grant CCF-1115525. {\tt guruswami@cmu.edu} } \and
Euiwoong Lee\thanks{Supported by a Samsung Fellowship and NSF CCF-1115525. {\tt euiwoonl@cs.cmu.edu} }}

\date{Computer Science Department \\ Carnegie Mellon University \\ Pittsburgh, PA 15213.}

\maketitle

\begin{abstract}

Given an undirected graph $G = (V_G, E_G)$ and a fixed ``pattern" graph $H = (V_H, E_H)$ with $k$ vertices, we consider the $H$-Transversal and $H$-Packing problems. The former asks to find the smallest $S \subseteq V_G$ such that the  subgraph induced by $V_G \setminus S$ does not have $H$ as a subgraph, and the latter asks to find the maximum number of pairwise disjoint $k$-subsets $S_1, ..., S_m \subseteq V_G$ such that the subgraph induced by each $S_i$ has $H$ as a subgraph. 

We prove that if $H$ is 2-connected, $H$-Transversal and $H$-Packing are almost as hard to approximate as general $k$-Hypergraph Vertex Cover and $k$-Set Packing, so it is NP-hard to approximate them within a factor of $\Omega (k)$ and $\widetilde \Omega (k)$ respectively. 
We also show that there is a 1-connected $H$ where $H$-Transversal admits an $O(\log k)$-approximation algorithm, so that the connectivity requirement cannot be relaxed from 2 to 1. 
For a special case of $H$-Transversal where $H$ is a (family of) cycles, 
we mention the implication of our result to the related Feedback Vertex Set problem, and give a different hardness proof for directed graphs.

\end{abstract}



\thispagestyle{empty}
\newpage

\section {Introduction}
Given a collection of subsets $S_1, ..., S_m$ of the underlying set $U$, 
the {\em Set Transversal} problem asks to find the smallest subset of $U$ that intersects every $S_i$,
and the {\em Set Packing} problem asks to find the largest subcollection $S_{i_1}, ..., S_{i_{m'}}$ which are pairwise disjoint.\footnote{These problems are called many different names in the literature. Set Transversal is also called Hypergraph Vertex Cover, Set Cover (of the dual set system), and Hitting Set. Set Packing is also called Hypergraph Matching. We try to use Transversal / Packing unless another name is established in the literature (e.g. $k$-Hypergraph Vertex Cover).}
It is clear that optimum of the former is always at least that of the latter (i.e. weak duality holds).
Studying the (approximate) reverse direction of the inequality (i.e. strong duality) as well as the complexity of both problems 
for many interesting classes of set systems is arguably the most studied paradigm in combinatorial optimization.

This work focuses on set systems where the size of each set is bounded by a constant $k$. 
With this restriction, Set Transversal and Set Packing are known as $k$-Hypergraph Vertex cover ($k$-HVC) and $k$-Set Packing ($k$-SP), respectively.
This assumption significantly simplifies the problem since there are at most $n^{k}$ sets. 
While there is a simple factor $k$-approximation algorithm for both problems, it is NP-hard to approximate $k$-HVC and $k$-SP within a factor less than $k - 1$~\cite{DGKR05} and $O(\frac{k}{\log k})$~\cite{HSS06} respectively. 

Given a large graph $G = (V_G, E_G)$ and a fixed graph $H = (V_H, E_H)$ with $k$ vertices, 
one of the natural attempts to further restrict set systems is to set $U = V_G$, and take the collection of subsets to be all {\em copies} of $H$ in $G$ (formally defined in the next subsection). 
This natural representation in graphs often results in a deeper understanding of the underlying structure and better algorithms, 
with Maximum Matching ($H = K_2$) being the most well-known example. 
Kirkpatrick and Hell~\cite{KH83} proved that Maximum Matching is essentially the only case where $H$-Packing can be solved exactly in polynomial time --- unless $H$ is the union of isolated vertices and edges, it is NP-hard to decide whether $V_G$ can be partitioned into $k$-subsets each inducing a subgraph containing $H$. 
A similar characterization for the edge version (i.e. $U = E_G$) was obtained much later by Dor and Tarsi~\cite{DT97}.

We extend these results by studying the approximability of $H$-Transversal and $H$-Packing. We use the term {\em strong inapproximability} to denote NP-hardness of approximation within a factor $\Omega(k / polylog(k))$.
We give a simple sufficient condition that implies strong inapproximability --- if $H$ is 2-vertex connected, $H$-Transversal and $H$-Packing are almost as hard to approximate as $k$-HVC and $k$-SP. 
We also show that there is a 1-connected $H$ where $H$-Transversal admits an $O(\log k)$-approximation algorithm, so 1-connectivity is not sufficient for strong inapproximability for $H$-Transversal.
It is an interesting open problem whether 1-connectivity is enough to imply strong inapproximability of $H$-Packing, or there is a class of connected graphs where $H$-Packing admits a significantly nontrivial approximation algorithm (e.g. factor $k^{\epsilon}$ for some $\epsilon < 1$). 

Our results give an unified answer to questions left open in many independent works studying a special case where $H$ is a cycle or clique, and raises some new open questions. 
In the subsequent subsections, we state our main results, review related work, and state potential future directions.

\subsection{Problems and Our Results}
\label{subsec:results}

Given an undirected graphs $G = (V_G, E_G)$ and $H = (V_H, E_H)$ with $|V_H| = k$, we define the following problems.
\begin{itemize}
\item $H$-Transversal asks to find the smallest $F \subseteq V_G$ such that the subgraph of $G$ induced by $V_G \setminus F$ does not have $H$ as a subgraph.
\item $H$-Packing asks to find the maximum number of pairwise disjoint $k$-subsets of $S_1, ..., S_m$ of $V_G$ such that the subgraph induced by each $S_i$ has $H$ as a subgraph. 
\end{itemize}

Our main result states that 2-connectivity of $H$ is sufficient to make $H$-Transversal and $H$-Packing hard to approximate.

\begin{theorem}
\label{thm:main_v}
If $H$ is a 2-vertex connected with $k$ vertices, unless $\mathsf{NP} \subseteq \mathsf{BPP}$, no polynomial time algorithm approximates $H$-Transversal within a factor better than $k - 1$,
and $H$-Packing within a factor better than $\Omega(\frac{k}{\log^7 k})$. 
\end{theorem}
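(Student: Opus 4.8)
The plan is to reduce from $k$-Hypergraph Vertex Cover (for $H$-Transversal) and from $k$-Set Packing (for $H$-Packing), using their known inapproximability within $k-1-\epsilon$~\cite{DGKR05} and within $O(k/\log k)$~\cite{HSS06} respectively. Given a $k$-uniform hypergraph $\mathcal{H}=(V,E)$, build $G$ on vertex set $V$ by turning each hyperedge into a clique, i.e.\ $uv\in E_G$ iff $u,v$ lie together in some $e\in E$. The completeness direction is then immediate for both problems: if $S\subseteq V$ is a vertex cover of $\mathcal{H}$ then $G\setminus S$ contains no hyperedge-clique, hence no copy of the connected graph $H$, so $S$ is an $H$-transversal; and a matching $\{e_1,\dots,e_m\}$ of $\mathcal{H}$ is literally an $H$-packing of $G$, since each $e_i$ induces $K_k\supseteq H$ and the $e_i$ are pairwise disjoint.

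Everything then hinges on a structural lemma: every $k$-subset $W\subseteq V$ with $H\subseteq G[W]$ is a hyperedge of $\mathcal{H}$. Granting it, an $H$-transversal $F$ must meet every hyperedge $e$ (else $G[e]=K_k\supseteq H$ survives), so $F$ is a vertex cover; and an $H$-packing, being a family of pairwise disjoint $k$-sets each of which is a hyperedge, is a matching. Thus $\mathsf{OPT}$ of $H$-Transversal on $G$ equals $\mathsf{OPT}$ of vertex cover on $\mathcal{H}$, and $\mathsf{OPT}$ of $H$-Packing on $G$ equals the matching number of $\mathcal{H}$, so both inapproximability factors carry over essentially unchanged.

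The structural lemma is false for an arbitrary $\mathcal{H}$ — two hyperedges meeting in $k-1$ vertices already create a spurious copy of $K_k$ minus an edge, and three hyperedges meeting pairwise in distinct single vertices create a spurious triangle — so I would first pass to \emph{structured} hard instances in which $\mathcal{H}$ is linear (any two hyperedges share at most one vertex) and of large girth (the bipartite incidence graph has girth exceeding $2k$). On such an instance, inside any $k$-set $W$ the hyperedges with at least two vertices in $W$ contribute cliques that pairwise share at most one vertex and whose intersection graph is a forest; the \emph{only} place $2$-connectivity of $H$ is used is the elementary fact that a $2$-connected subgraph of such a ``clique-forest'' must sit inside a single clique (a $2$-connected graph cannot be glued out of pieces meeting in single cut vertices). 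Applying this to the copy of $H$ inside $G[W]$ forces $W\subseteq e$ for some $e\in E$, hence $W=e$.

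I expect the main obstacle to be obtaining these structured hard instances with the right quantitative loss — that is, a gap-preserving girth-amplification of the instances of \cite{DGKR05} and \cite{HSS06} that keeps the hypergraph $k$-uniform while killing all short hyperedge-cycles and all multiple intersections (one may alternatively hope the canonical hard instances already carry enough of this structure). For $H$-Transversal this can be arranged without degrading the gap, yielding the bound $k-1$; for $H$-Packing the amplification, together with the padding that restores exact $k$-uniformity, is what downgrades $\Omega(k/\log k)$ to $\Omega(k/\log^{7}k)$, and the randomness used there (and in \cite{HSS06}) is why the statement is conditioned on $\mathsf{NP}\not\subseteq\mathsf{BPP}$ rather than $\mathsf{P}\neq\mathsf{NP}$. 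Once the starting instance is fixed, the clique construction and the structural lemma are routine, and $2$-connectivity enters only through the clique-forest fact above — precisely what fails for the $1$-connected $H$ that the paper separately shows admits an $O(\log k)$-approximation for $H$-Transversal.
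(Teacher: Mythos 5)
Your conceptual insight is exactly the right one: $2$-connectivity of $H$ forbids a copy from being ``glued'' out of pieces of distinct cliques that pairwise meet in cut vertices, and this is precisely the role $2$-connectivity plays in the paper. But the proposal as written rests on an unestablished premise: that $k$-HVC and $k$-Set Packing remain hard (with factors $k-1$ and $\Omega(k/\log k)$) on \emph{linear}, high-incidence-girth $k$-uniform hypergraphs. The hard instances of \cite{DGKR05} and \cite{HSS06} have neither property, and you do not supply the ``gap-preserving girth-amplification'' you acknowledge is the main obstacle. That step is not routine — it is the entire content of the proof, and the paper's constructions are built specifically so that such structured starting instances are never needed.

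For $H$-Transversal the paper sidesteps your structural lemma altogether: it blows each vertex into a cloud of $B$ copies and plants $aB$ independently random ``canonical'' embeddings of $H$ per hyperedge. Non-canonical copies of $H$ are not eliminated, merely shown to be rare — the same $2$-connectivity fact you invoke appears there as the bound $\sum_i |I_i'| \ge k+p$, giving an expected $o(N)$ spurious copies surviving the canonical transversal, which are then hit by an extra $\epsilon$-fraction of vertices. This ``approximate completeness'' (measure $c+\epsilon$ instead of $c$) is absorbed into the gap, so no purification of the hypergraph is required. For $H$-Packing the paper reduces not from $k$-Set Packing but from MIS on bounded-degree high-girth graphs (MIS-$k$-$g$), and the line-graph construction hands you the structured clique system for free: two stars $\Star(u),\Star(v)$ automatically share at most one vertex, and large girth of $M$ is exactly what forbids a $2$-connected $H$ from straddling two stars. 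The girth amplification you hoped for is carried out (Theorem~\ref{thm:mis}), but at the graph level via clouds plus random perfect matchings, deletion of short cycles, and sparsification; the polylog losses there are what degrade Chan's $\Omega(k/\log^4 k)$ to the stated $\Omega(k/\log^7 k)$, and the randomization is what costs you $\mathsf{NP}\not\subseteq\mathsf{BPP}$. So the lemma you want to prove is correct once the structured instances exist, but obtaining them is the hard part, and the paper's transversal argument in particular is qualitatively different: it tolerates spurious copies rather than ruling them out.
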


Let $k$-Star denote $K_{1, k - 1}$, the complete bipartite graph with 1 and $k-1$ vertices on each side. 
The following theorem shows that $k$-Star Transversal admits a good approximation algorithm, 
so the assumption of 2-connectedness in Theorem~\ref{thm:main_v} is required for strong inapproximability of $H$-Transversal.

\begin{theorem}
\label{thm:star}
$k$-Star Transversal can be approximated within a factor of $O(\log k)$ in polynomial time.
\end{theorem}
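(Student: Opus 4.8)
The first step is to recognize that $k$-Star Transversal is precisely the problem of deleting the fewest vertices so that the remaining graph has maximum degree at most $d := k-2$: a graph contains $K_{1,k-1}$ as a subgraph exactly when some vertex has degree at least $k-1$. (If $d=0$ this is Vertex Cover and the claim is immediate, so assume $d \ge 1$.) There is an obvious factor-$k$ algorithm, since the forbidden stars are $k$-element sets and thus have covering frequency $k$; the plan is to improve this to $O(\log k) = O(\log d)$ by LP rounding with a logarithmic blow-up.

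I would start from a covering LP. For each vertex $v$ with $\deg_G(v) > d$, set $r(v) := \deg_G(v) - d$; the inequality $r(v)\,x_v + \sum_{u \in N(v)} x_u \ge r(v)$ is valid for the indicator of any feasible deletion set, since a surviving $v$ must have at least $r(v)$ deleted neighbours. This relaxation alone has integrality gap $\Theta(k)$, so I would strengthen it with the \emph{Knapsack-Cover inequalities}: for every $v$ and every $A \subseteq N(v)$ with $|A| < r(v)$, the valid inequality $\sum_{u \in N(v) \setminus A} x_u + (r(v)-|A|)\,x_v \ge r(v) - |A|$. These are exponentially many but can be approximately optimized over in polynomial time in the standard way. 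Let $x^*$ be a near-optimal solution of the strengthened LP and $\mathrm{LP} := \sum_v x^*_v \le \mathrm{OPT}$.

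The rounding proceeds in three stages, governed by a parameter $\alpha = \Theta(\log k)$. First, delete $F_{\det} := \{v : x^*_v \ge 1/\alpha\}$ deterministically; this costs at most $\alpha \cdot \mathrm{LP}$. Second, delete each remaining vertex $v$ independently with probability $\alpha x^*_v < 1$; the expected cost is at most $\alpha \cdot \mathrm{LP}$. Third, clean up: delete every vertex whose degree in the remaining graph still exceeds $d$. The first two stages already cost $O(\log k)\cdot\mathrm{OPT}$ in expectation, so everything reduces to bounding the expected clean-up cost. If a vertex $v$ needs clean-up then $v \notin F_{\det}$ (so $x^*_v < 1/\alpha$) and it has a positive residual requirement $R := r(v) - |N(v) \cap F_{\det}| \ge 1$; the Knapsack-Cover inequality with $A = N(v) \cap F_{\det}$ gives $\sum_{u \in N(v)\setminus F_{\det}} x^*_u \ge R(1-1/\alpha) \ge R/2$, where every such $u$ has $x^*_u < 1/\alpha$. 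A Chernoff bound on how many of those neighbours are picked in the second stage then shows that $v$ needs clean-up with probability at most $e^{-\Omega(\alpha R)} \le k^{-\Omega(1)}$, once $\alpha$ is a large enough constant times $\log k$.

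The heart of the argument is to convert this per-vertex bound into a bound on the total expected clean-up cost, which is at most $|U| \cdot k^{-\Omega(1)}$, where $U := \{v \notin F_{\det} : r(v) - |N(v) \cap F_{\det}| \ge 1\}$ is the set of vertices that can possibly be left deficient. One controls $|U|$ by a charging argument: by the Knapsack-Cover consequence above, each $v \in U$ carries at least $\tfrac12$ unit of $x^*$-mass on its neighbours outside $F_{\det}$; and every vertex $u \notin F_{\det}$ has degree $O(\mathrm{LP} + k)$, because if $\deg_G(u) > d$ then its own LP constraint forces $\sum_{w \in N(u)} x^*_w \ge (\deg_G(u)-d)(1-1/\alpha)$, which cannot exceed $\mathrm{LP}$. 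Charging each $v \in U$ to its neighbours outside $F_{\det}$ in proportion to their $x^*$-values, and bounding how much charge one such vertex can absorb via the degree bound, gives the needed bound on $|U|$, hence on $\E[\text{clean-up cost}]$; one then derandomizes by the method of conditional expectations (or by taking the best of $O(\log n)$ independent runs). I expect this counting step to be the main obstacle: the naive union bound over all $n$ vertices only yields $O(\log n)$, so it is essential to use the Knapsack-Cover strengthening (which keeps the residual constraints tight after the deterministic stage) together with the degree bound on low-valued vertices, and the regime where $\mathrm{OPT}$ greatly exceeds any fixed power of $k$ is where this charging must be carried out most carefully.
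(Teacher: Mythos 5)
Your overall plan is a genuinely different route from the paper's. The paper iteratively solves 2 rounds of the Sherali--Adams hierarchy (re-solving after each deletion round until the set $S$ of high-valued vertices stabilizes), argues that at termination the subgraph on $V_G\setminus S$ has maximum degree at most $2k$, and then reduces to Constrained Set Multicover and uses the greedy $H_d$-approximation. You instead use a one-shot Knapsack-Cover LP with deterministic-then-randomized rounding plus cleanup. Both schemes implicitly rely on the same structural fact --- after removing the high-valued vertices $F_{\det}$, the residual maximum degree is $O(k)$ --- and that fact is exactly where your write-up has a gap.

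The concrete problem is the degree bound you feed into the charging. You argue $\deg_G(u) = O(\mathrm{LP}+k)$ for $u\notin F_{\det}$, which yields $|U| = O\bigl((k+\mathrm{LP})\,\mathrm{LP}\bigr)$ and hence a cleanup cost bound $|U|\,k^{-\Omega(1)} = O\bigl((k+\mathrm{LP})\,\mathrm{LP}\,k^{-c}\bigr)$. This is only $O(\mathrm{LP})$ when $\mathrm{LP}=O(k^{c})$; once $\mathrm{LP}$ grows beyond any fixed polynomial in $k$ the estimate is vacuous, and (as you note) falling back on $|U|\le n$ only recovers an $O(\log n)$ factor. You flag this as ``the main obstacle,'' but the bound you propose does not close it.

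The fix is that the Knapsack-Cover inequalities already give a degree bound \emph{in the residual graph} $G\setminus F_{\det}$, not just in $G$, and it is independent of $\mathrm{LP}$. For $u\notin F_{\det}$ with $\deg_{G\setminus F_{\det}}(u) > d$, apply the KC inequality with $A = N(u)\cap F_{\det}$ (here $r(u)-|A| = \deg_{G\setminus F_{\det}}(u)-d \ge 1$):
\[
\sum_{w\in N(u)\setminus F_{\det}} x^*_w \;\ge\; \bigl(\deg_{G\setminus F_{\det}}(u)-d\bigr)\bigl(1-x^*_u\bigr) \;\ge\; \bigl(\deg_{G\setminus F_{\det}}(u)-d\bigr)\Bigl(1-\tfrac{1}{\alpha}\Bigr),
\]
while every $w\notin F_{\det}$ has $x^*_w < 1/\alpha$, so the left side is less than $\deg_{G\setminus F_{\det}}(u)/\alpha$. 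Combining and solving gives $\deg_{G\setminus F_{\det}}(u) < \frac{\alpha-1}{\alpha-2}\,d = O(k)$ for any constant $\alpha\ge 4$. Since $U\subseteq V_G\setminus F_{\det}$ we have $|N(u)\cap U|\le \deg_{G\setminus F_{\det}}(u) = O(k)$, so the charging gives $|U| \le O(k)\cdot \mathrm{LP}$ unconditionally, and the expected cleanup cost is $O(k\,\mathrm{LP})\cdot k^{-c} = O(\mathrm{LP})$ for $\alpha$ a large enough constant times $\log k$. With this replacement your argument goes through. Note that the $O(k)$ residual-degree fact is precisely the lemma the paper proves by iterating Sherali--Adams (``after termination, every vertex has degree at most $2k$ in the subgraph induced by $V_G\setminus S$''); including all KC inequalities up front lets you obtain it in one shot, which is a modest simplification. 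The rounding stage is also different --- you use randomized rounding with Chernoff where the paper reduces to Constrained Set Multicover and uses the deterministic greedy $H_{2k}$ analysis --- but both are $O(\log k)$ once the degree bound is in place.
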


This algorithmic result matches $\Omega(\log k)$-hardness of $k$-Star Transversal via a simple reduction from Minimum Dominating Set on degree-$k$ graphs~\cite{CC08}. 
This problem has the following equivalent but more natural interpretation: given a graph $G = (V_G, E_G)$, find the smallest $F \subseteq V_G$ such that the subgraph induced by $V_G \setminus F$ has maximum degree at most $k - 2$. Our algorithm, which uses iterative roundings of 2-rounds of Sherali-Adams hierarchy of linear programming (LP) followed by a simple greedy algorithm for {\em Constrained Set Cover}, is also interesting in its own right, but we defer the details to Appendix~\ref{sec:star}.

Our hardness results for transversal problems rely on hardness of $k$-HVC which is NP-hard to approximate within a factor better than $k - 1$~\cite{DGKR05}. 
Our hardness results for packing problems rely on hardness of Maximum Independent Set on graphs with maximum degree $k$ and girth strictly {\em greater than} $g$ (MIS-$k$-$g$). Almost tight inapproximability of MIS on graphs with maximum degree $k$ (MIS-$k$) is recently proved in Chan~\cite{Chan13}, which rules out an approximation algorithm with ratio better than $\Omega(\frac{k}{\log^4 k})$. 
We are able to extend his result to MIS-$k$-$g$ with losing only a polylogarithmic factor. All applications in this work require $g = \Theta(k)$. 
\begin{theorem}
\label{thm:mis}
For any constants $k$ and $g$, unless $\mathsf{NP} \subseteq \mathsf{BPP}$, no polynomial time algorithm approximates MIS-$k$-$g$ within a factor of $\Omega(\frac{k}{\log^7 k})$.
\end{theorem}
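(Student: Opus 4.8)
The plan is to start from Chan's near-tight hardness for MIS on degree-$k$ graphs (MIS-$k$), which rules out an approximation ratio better than $\Omega(k/\log^4 k)$, and boost it to the girth-restricted variant MIS-$k$-$g$ by a \emph{girth-amplification} reduction that removes short cycles while preserving (up to polylogarithmic factors) both the degree bound and the independent-set gap. Concretely, given a hard instance $G$ of MIS-$k$, I would first replace $G$ by a graph $G'$ of girth greater than $g$ via a ``short-cycle killing'' operation: repeatedly find a cycle of length at most $g$ and subdivide one of its edges (or apply a more global construction; see below). Subdividing an edge $uv$ by inserting a path of $t$ new internal vertices does not decrease the girth of any cycle through $uv$ and, since $g=O(1)$ and the number of short cycles is polynomially bounded, can be iterated to destroy all cycles of length $\le g$ in polynomial time. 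The degree stays $\le k$ (new vertices have degree $2 \le k$). The delicate point is the effect on the independence number: each subdivision by an even number of vertices increases $\alpha$ in a controlled additive way, and one must argue that the \emph{gap} between yes and no instances is preserved. To keep the bookkeeping clean I would instead subdivide \emph{every} edge of $G$ by exactly $2g$ vertices in one shot, so that every original cycle of length $\ell$ becomes a cycle of length $\ell(2g+1) > g$, and the new short cycles simply do not exist; then $\alpha(G') = \alpha(G) + g\cdot|E_G|$ exactly (pick every other internal vertex on each subdivided path, plus a maximum independent set of $G$), and since $|E_G| \le k|V_G|/2$ while $\alpha(G) = \Omega(|V_G|/k)$ in Chan's construction, the additive term $g|E_G|$ is only a $\mathrm{poly}(k)$ factor times $\alpha(G)$. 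This is an additive shift by the \emph{same} amount on both yes and no instances, so after rescaling, the multiplicative gap degrades by at most a factor depending only on $k$ and $g$, i.e.\ a constant.

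The subtlety with the naive ``subdivide everything'' approach is that it blows up the graph size by a factor $2g$ (fine, $g=O(1)$) but it shifts $\alpha$ by an \emph{additive} $g|E_G|$ which, although only a $\mathrm{poly}(k)$-fraction of $\alpha(G)$, is comparable in magnitude to $\alpha(G)$ itself when $k$ is large — so the gap could collapse from $\Omega(k/\log^4 k)$ to $O(1)$. To avoid this I would instead use the standard trick of replacing each vertex of $G$ by a large independent set (a ``vertex blow-up'' or product with an edgeless graph) \emph{before} subdividing, or equivalently work with a weighted/capacitated version: make $\alpha(G)$ dominate by taking many disjoint copies. The cleanest route: take $N$ disjoint copies of $G$, giving $\alpha = N\alpha(G)$ and $|E| = N|E_G|$, subdivide, and note the additive shift is still $g N |E_G|$, which does \emph{not} help the ratio. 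So the copies trick alone is insufficient, and one genuinely needs a reduction in which the girth removal costs only a \emph{multiplicative polylog}, not a large additive term. This is the real content of the theorem.

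Hence the main step — and the main obstacle — is to open up Chan's reduction rather than use it as a black box. Chan's hardness is obtained from a PCP / Label-Cover instance via a hypergraph-to-graph gadget; I would verify that the gadget graph produced already has, or can be modified to have, large girth, by choosing the combinatorial objects inside the gadget (e.g.\ the bipartite-like incidence structures or the long-code/noise-graph pieces) to avoid short cycles. The key observation to exploit is that short cycles in such constructions come either (i) from within a single gadget, which one controls by picking a high-girth gadget of the same size and degree (expanders/Ramanujan-type graphs of girth $\Omega(\log)$ exist with bounded degree, and $g=O(1)$ is far weaker), or (ii) from ``wraparound'' between gadgets along two Label-Cover edges sharing two endpoints, which one controls by first making the underlying Label-Cover instance itself of high girth — and high-girth Label-Cover (or Unique-Games-style) instances with the same soundness are obtainable by intersecting the constraint graph with a high-girth graph, or by the standard observation that PCP constructions over high-girth expanders already have this property. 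Combining a high-girth Label-Cover outer instance with a high-girth inner gadget yields an MIS instance of girth $> g$ with degree $O(k)$ and the same gap up to constants; absorbing the $O(1)$ degree loss into the ratio gives $\Omega(k/\log^7 k)$ (the extra $\log^3$ slack over Chan's $\log^4$ is the budget for these modifications). Finally, I would spell out the yes/no case analysis: a large independent set in the modified instance restricts to a large independent set in Chan's instance (completeness), while a too-large independent set would, via the gadget's combinatorial properties, induce a forbidden structure in the Label-Cover instance (soundness), exactly mirroring Chan's argument with the high-girth choices substituted in.
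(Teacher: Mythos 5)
Your analysis of the obstacles is sharp and largely correct: naive edge subdivision shifts the independence number additively by $\Theta(g|E_G|)$, which collapses the gap, and disjoint copies or vertex blow-ups do not fix this. But the conclusion you draw from these failed attempts --- that ``one genuinely needs to open up Chan's reduction rather than use it as a black box'' --- is where the proposal goes wrong. The paper's proof does keep Chan's hardness entirely as a black box; the missing idea is a different black-box amplification.

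The reduction in the paper is a random lift (the ``random matching'' trick): start from a hard MIS-$d$ instance $G_0$, replace each vertex by a cloud of $B$ vertices, and replace each edge of $G_0$ by $a$ independent random perfect matchings between the two clouds. This produces a graph $G$ of degree $ad$ in which the expected number of cycles of length up to $g$ is $O_{a,d,g}(n)$, which is a vanishing fraction of the $nB$ vertices when $B$ is large, so deleting one vertex per short cycle gives girth $>g$ at negligible cost. Completeness is immediate (lift an independent set of $G_0$). The crucial step you did not find is the soundness argument, which does \emph{not} try to preserve the independence number directly; instead it establishes a \emph{density} property. Combining Chan's soundness with Tur\'an's theorem shows that every moderate-sized vertex subset of $G$ induces a constant fraction of edges (each random matching is $(\epsilon,\epsilon^2/8)$-bipartite dense with high probability). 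This density is what licenses the final \emph{sparsification}: subsample edges i.i.d.\ so the average degree drops to $O(k)$, and a Chernoff + union bound over subsets shows the sparsified graph still has no independent set of measure $\beta$. The polylog losses in $a$, $B$, and the bucketing in the density claim are exactly what turn Chan's $\log^4 k$ into $\log^7 k$. So the actual proof is elementary, works over the full range of constant $g$ (the $g$-dependence only affects the reduction's running time, $n^{O(g)}$), and requires no knowledge of Chan's internals.

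By contrast, your proposed route --- choose high-girth long-code-style gadgets and a high-girth outer Label-Cover instance, and re-verify Chan's soundness --- is not a proof sketch but a research program, and it is not clear it can be carried out: Chan's soundness analysis uses Fourier-analytic/invariance arguments that are tied to the hypercube/noise-graph structure of the gadget, and asserting that one can substitute a high-girth graph ``of the same size and degree'' while preserving the gap is exactly the kind of step that needs a proof. You also never address what happens to cycles that mix inner and outer structure (length between gadget size and constraint-graph girth), which is where most of the difficulty would lie. The random-matching plus density plus sparsification route sidesteps all of this.
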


We remark that assuming the Unique Games Conjecture (UGC) slightly improves our hardness ratios through better hardness of $k$-HVC~\cite{KR08} and MIS-$k$~\cite{AKS09}, and even simplifies the proof for some problems (e.g. $k$-Clique Transversal) through structured hardness of $k$-HVC~\cite{BK10}. Indeed, an earlier (unpublished) version of this work~\cite{GL14b} relied on the UGC to prove that MIS-$k$-$k$ is hard to approximate within a factor of $\Omega(\frac{k}{\log^4 k})$, while only giving $\widetilde\Omega(\sqrt{k})$-factor hardness without it.  
Now that we obtain almost matching hardness, we focus on proving hardness results without the UGC.

\subsection{Related Work and Special Cases}

After the aforementioned work characterizing those pattern graphs $H$ admitting the existence of a polynomial-time exact algorithm for $H$-Packing~\cite{KH83,DT97}, Lund and Yannakakis~\cite{LY93} studied the maximization version of $H$-Transversal (i.e. find the largest $V' \subseteq V_G$ such that the subgraph induced by $V'$ does not have $H$ as a subgraph), and showed it is hard to approximate within factor $2^{\log^{1/2 - \epsilon} n}$ for any $\epsilon > 0$. They also mentioned the minimization version of two extensions of $H$-Transversal. The most general node-deletion problem is APX-hard for every nontrivial hereditary (i.e. closed under node deletion) property, and the special case where the property is characterized by a finite number of forbidden subgraphs (i.e. $\{H_1, ..., H_l \}$-Transversal in our terminology) can be approximated with a constant ratio. They did not provide explicit constants (one trivial approximation ratio for $\{H_1, ..., H_l \}$-Transversal is $\max (|V_{H_1}|, ..., |V_{H_l}|)$), and our result can be viewed as a quantitative extension of their inapproximability results for the special case of $H$-Transversal.

$H$-Transversal / Packing has been also studied outside the approximation algorithms community. 
The duality between our $H$-Transversal and $H$-Packing is closely related to the famous Erd\H{o}s-P\'{o}sa property actively studied in combinatorics.
The recent work of Jansen and Marx~\cite{JD15} considered problems similar to our $H$-Packing with respect to fixed-parameter tractability (FPT).

Many other works on $H$-Transversal / Packing focus on a special case where $H$ is a cycle or clique. We define $k$-Cycle (resp. $k$-Clique) to be the cycle (resp. clique) on $k$ vertices.

\subsubsection{Cycles}
The initial motivation for our work was to prove a super-constant factor inapproximability for the Feedback Vertex Set (FVS) problem without relying on the Unique Games Conjecture.
Given a (directed) graph $G$, the FVS problem asks to find a subset $F$ of vertices
with the minimum cardinality that intersects every cycle in the graph (equivalently, the induced subgraph $G \setminus F$ is acyclic). 
One of Karp's 21 NP-complete problems, FVS has been a subject of active research for many years in terms of approximation algorithms and fixed-parameter tractability (FPT). For FPT results, see~\cite{Bodlaender94,CLLOR08,CPPW11,CCHM12} and references therein. 


FVS on undirected graphs has a 2-approximation algorithm~\cite{BBF95,BG96,CGHW98}, but the same problem is not well-understood in directed graphs.
The best approximation algorithm~\cite{Seymour95,ENSS98,ENRS00} achieves an approximation factor 
of $O(\log n \log \log n)$.
The best hardness result follows from a simple approximation preserving reduction from Vertex Cover, which implies that it is NP-hard to approximate FVS within a factor of $1.36$~\cite{DS05}. Assuming UGC~\cite{Khot02}, it is NP-hard to approximate FVS in directed graphs within any constant factor~\cite{GMR08,Svensson12} (we give a simpler proof in~\cite{GL14b}). 
The main challenge is to {\em bypass} the UGC and to show a super-constant inapproximability result for FVS assuming only $\mathsf{P} \neq \mathsf{NP}$ or $\mathsf{NP} \not \subseteq \mathsf{BPP}$.

By Theorem~\ref{thm:main_v}, we prove that $k$-Cycle Transversal is hard to approximate within factor $\Omega(k)$.
The following theorem improves the result of Theorem~\ref{thm:main_v} in the sense that in the completeness case, a small number of vertices not only intersect cycles of length exactly $k$, but intersect every cycle of length $3, 4, ..., O(\frac{\log n}{\log \log n})$.

\begin{theorem}
\label{thm:random}
Fix an integer $k \geq 3$ and $\epsilon \in (0,1)$. 
Given a graph $G = (V_G, E_G)$ (directed or undirected), unless $\mathsf{NP} \subseteq \mathsf{BPP}$, there is no polynomial time algorithm to tell apart the following two cases.
\begin{itemize}
\item Completeness: There exists $F \subseteq V_G$ with $\frac{1}{k-1} + \epsilon$ fraction of vertices that intersects every cycle of at most length $O(\frac{\log n}{\log \log n})$ (hidden constant in $O$ depends on $k$ and $\epsilon$). 
\item Soundness: Every subset $F$ with less than $1 - \epsilon$ fraction of vertices does not intersect at least one cycle of length $k$. Equivalently, any subset with more than $\epsilon$ fraction of vertices has a cycle of length exactly $k$ in the induced subgraph. 
\end{itemize}
\end{theorem}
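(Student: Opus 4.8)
The plan is to reduce from the $(k-1)$-inapproximability of $k$-Hypergraph Vertex Cover of Dinur--Guruswami--Kindler--Regev~\cite{DGKR05}, sharpening the reduction that underlies Theorem~\ref{thm:main_v} for the $2$-connected pattern $H=C_k$ so that, in the completeness case, the transversal kills \emph{every} short cycle rather than only $k$-cycles. Recall the $k$-HVC instance is a $k$-uniform hypergraph $\mathcal H$ on $N$ vertices for which it is $\mathsf{NP}$-hard to distinguish $\tau(\mathcal H)\le(\tfrac1{k-1}+\epsilon')N$ from $\tau(\mathcal H)\ge(1-\epsilon')N$. The natural gadget reduction sets $V_G=V_{\mathcal H}$ and replaces each hyperedge $e=\{v_1,\dots,v_k\}$ by a $k$-cycle $\gamma_e$ on $v_1,\dots,v_k$ in a chosen cyclic order; a set $F\subseteq V_G$ hitting all the $\gamma_e$'s is exactly a vertex cover of $\mathcal H$, which gives the claimed gap, and the soundness clause of Theorem~\ref{thm:random} is just the contrapositive: if $|F|<(1-\epsilon)N$ then $F$ misses some hyperedge $e$ entirely, so $\gamma_e$ is a $k$-cycle in $G\setminus F$.

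The new content is the completeness clause. For the lifted vertex cover $F$ (of size $(\tfrac1{k-1}+\epsilon)N$) to hit \emph{every} cycle of length at most $L:=\Theta(\log n/\log\log n)$, it is enough that the only cycles of $G$ of length $\le L$ are the gadget cycles $\gamma_e$: since $F$ hits each $\gamma_e$ and deleting vertices never creates a cycle, $G\setminus F$ then has girth $>L$. An ``accidental'' short cycle of $G$ corresponds exactly to a short cyclic pattern of hyperedges of $\mathcal H$ (a sequence $e_0,\dots,e_{\ell-1}$ with $e_i\cap e_{i+1}\neq\emptyset$ that closes up) together with a choice of cyclic orders permitting it to close into a genuine cycle. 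So the core of the proof is to first replace $\mathcal H$ by a hypergraph $\mathcal H'$ with the \emph{same} $(k-1)$-gap but with no cyclic pattern of at most $L$ hyperedges, and then to choose the cyclic orders (randomly, or greedily) so that even the patterns that do exist --- pairs of hyperedges sharing $\ge 2$ vertices --- do not produce cycles shorter than $k$.

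The main obstacle is producing $\mathcal H'$ \emph{without collapsing the $(k-1)$-factor gap}. The tempting move --- blow up each vertex $v$ into a cloud $C_v$ of size $T$ and replace $e$ by $T$ parallel hyperedges $e^{(j)}$ each using one random vertex of each relevant cloud --- does break all short cyclic patterns with high probability (for $L\approx\log N/\log(\text{degree})$, by a union bound over patterns), but it also relaxes the covering constraint: a cover $F$ of the blown-up instance only forces $\bigl(|F\cap C_v|/T\bigr)_v$ to be a \emph{fractional} vertex cover of $\mathcal H$, so soundness degrades to the LP bound $\approx NT/k$, which lies \emph{below} the completeness value $\approx NT/(k-1)$ --- the gap vanishes. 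Hence the sparsification must be performed on the DGKR instance itself: one deletes hyperedges (rather than adding fractionally many) so as to break every cyclic pattern of length $\le L$, while using the smoothness/robustness of the underlying (multilayered) Label Cover PCP --- in particular that each constraint contributes only a controlled bundle of hyperedges --- to argue that the soundness $1-\epsilon$ survives. The tension between ``delete enough to break all patterns of length $\le L$'' and ``delete few enough that soundness is preserved'' is precisely what fixes $L$ at $\Theta(\log n/\log\log n)$: it is governed by the number of hyperedges incident to a vertex, which is polylogarithmic in $n$ for the relevant instances, and then the Moore-type bound caps the achievable girth at $\Theta(\log n/\log\log n)$.

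For the directed case, orient each $\gamma_e$ as a directed $k$-cycle consistent with its cyclic order. Since $\mathcal H'$ and its (now oriented) cyclic patterns are unchanged, the only closed directed walks of length $\le L$ in the resulting digraph are the oriented $\gamma_e$'s, so both the completeness and soundness analyses carry over verbatim with the same $F$; the one routine point is that a directed closed walk using several gadgets still yields a short cyclic pattern of hyperedges, which holds because consecutive gadgets can only be entered and left at shared vertices. I expect the step requiring real work to be the structured hyperedge-deletion argument of the third paragraph; everything else is bookkeeping around the reduction of Theorem~\ref{thm:main_v}.
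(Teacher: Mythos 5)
Your proposal is derailed by a misanalysis that causes you to discard the approach the paper actually uses. You correctly identify the ``tempting move'' (blow each vertex into a cloud, replace each hyperedge by many random canonical $k$-cycles, each picking one random vertex per cloud), which is exactly the paper's construction (with $aB$ random canonical copies per hyperedge, $a$ a constant in $k,\epsilon$ and $B$ the cloud size). But your claim that ``a cover $F$ of the blown-up instance only forces $(|F\cap C_v|/T)_v$ to be a fractional vertex cover of $\mathcal H$, so soundness degrades to the LP bound'' is wrong: a fractional vertex cover does \emph{not} yield a transversal of the blown-up instance. If $F$ keeps only a fraction $x_v<1$ from each cloud in a hyperedge $e$ with $\sum_{v\in e}x_v\approx 1$, each random canonical copy of $e$ still avoids $F$ with probability $\prod_{v\in e}(1-x_v)\approx (1-1/k)^k=\Theta(1)$, so a constant fraction of the $aB$ random canonical cycles for $e$ survive and $F$ is not a transversal. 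Quantitatively, the paper's soundness lemma (Lemma~\ref{lem:random_soundness}) shows that with high probability over the random construction, \emph{every} subset of $V_G$ of measure $2\epsilon$ contains a canonical copy; this is proved by a union bound over all choices of $\epsilon B$-subsets $A^i\subseteq\mathsf{cloud}(v^i)$ of each cloud of each hyperedge, using that the $aB$ independent random copies evade $A^1\times\cdots\times A^k$ with probability $\le(1-\epsilon^k)^{aB}$. The random matchings \emph{tighten} the constraint from fractional to essentially integral, so the $(k-1)$-gap is preserved, contrary to your claim.

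Once that is understood, the paper's completeness argument is the short remaining work, and it is different from what you then propose. The paper does \emph{not} modify the hypergraph to eliminate all short cyclic patterns, nor does it rely on any smoothness of the underlying PCP. Instead it uses the same blown-up graph and shows (Claim in Appendix~\ref{sec:cycles}, analogous to Claim~\ref{claim:numberofcycles}) that the expected number of non-canonical cycles of each length $k'\le L$ is at most $n(Rk')^{k'}$ for a constant $R=R(a,d)$; summing over $k'\le L$ and taking $B\ge \Theta((RL)^{L+1}/\epsilon)$ makes the total number of bad cycles at most $\epsilon N=\epsilon nB$, and one then adds one vertex per bad cycle to $F$. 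This is what pins down $L=\Theta(\log n/\log\log n)$: with $B$ polynomial in $n$, $(RL)^{L}$ must be $n^{O(1)}$. Your alternative (delete hyperedges from the DGKR instance to break all cyclic patterns of length $\le L$ while preserving the soundness) is not developed, and it is not needed; moreover, in the DGKR hard instances the degree $d$ is a \emph{constant} depending only on $k$ and $\epsilon$ (see Appendix~\ref{subsec:hvc}), not polylogarithmic as your third paragraph assumes, so the parameter picture you describe for that route is also off.
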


This can be viewed as some (modest) progress towards showing inapproximability of FVS in the following sense. Consider the following standard linear programming (LP) relaxation for FVS.
%
\begin{equation*}
\min \sum_{v \in V_G} x_v  \quad
\mbox{subject to} \quad   \sum_{v \in C} x_v \geq 1 \quad \forall \mbox{ cycle }C \ , \quad \text{and} 
\quad 0 \leq x_v \leq 1 \ \ \forall v \in V_G
\end{equation*}
The integrality gap of the above LP is upper bounded by $O(\log n)$ for undirected graphs~\cite{BGNR98} and $O(\log n \log \log n)$ for directed graphs~\cite{ENSS98}. Suppose in the completeness case, there exists a set of measure $c$ that intersects every cycle of length at most $\log^{1.1} n$ (or any number bigger than the known integrality gaps). If we remove these vertices and consider the above LP on the remaining subgraphs, since every cycle is of length at least $\log^{1.1} n$, setting $x_v = 1 / \log^{1.1} n$ is a feasible solution, implying that the optimal solution to the LP is at most $n / \log^{1.1} n$. Since the integrality gap is at most $O(\log n \log \log n)$, we can conclude that the remaining cycles can be hit by at most $O(n \log \log n / \log^{0.1} n) = o(n)$ vertices, extending the completeness result to every cycle. 
Thus, improving our result to hit cycles of length $\omega(\log n \log \log n)$ in the completeness case will prove a factor-$\omega(1)$ inapproximability of FVS. 

Another interesting aspect about Theorem~\ref{thm:random} is that it also holds for undirected graphs. This should be contrasted with the fact that undirected graphs admit a 2-approximation algorithm for FVS, suggesting that to overcome $\log n$-cycle barrier mentioned above, some properties of directed graphs must be exploited. Towards developing a directed graph specific approach, we also present a different reduction technique called {\em labeling gadget} in Appendix~\ref{subsec:labeling_gadget} to prove a similar result only on directed graphs. It has an additional advantage of being derandomized and assumes only $\mathsf{P} \neq \mathsf{NP}$. 

For cycles of bounded length, Kortsarz et al.~\cite{KLN10} studied $k$-Cycle Edge Transversal, and suggested a $(k - 1)$-approximation algorithm as well as proved that improving the ratio $2$ for $K_3$ will have the same impact on Vertex Cover, refuting the Unique Games Conjecture~\cite{KR08}. 

For the dual problem of packing cycles of any length, called Vertex-Disjoint Cycle Packing (VDCP), the results of~\cite{KNSYY07,FS07} imply that the best approximation factor by any polynomial time algorithm lies between $\Omega(\sqrt{\log n})$ and $O(\log n)$. In a closely related problem Edge-Disjoint Cycle Packing (EDCP), the same papers showed that $\Theta(\log n)$ is the best possible. 
In directed graphs the vertex and edge version have the same approximability, the best known algorithms achieves $O(\sqrt{n})$-approximation while the best hardness result remains $\Omega(\log n)$. 

Variants of $k$-Cycle Packing have also been considered in the literature. 
Rautenbach and Regen~\cite{RR09} studied $k$-Cycle Edge Packing on graphs with girth $k$ and small degree. 
Chalermsook et al.~\cite{CLN14} studied a variant of $k$-Cycle Packing on directed graphs for $k \geq n^{1/2}$ where we want to pack as many disjoint cycles of length at most $k$ as possible, and proved that it is NP-hard to approximate within a factor of $n^{1/2 - \epsilon}$. This matches the algorithm implied by~\cite{KNSYY07}.

\subsubsection{Cliques}
\label{subsubsec:cliques}
Minimum Maximal (resp. Maximum) Clique Transversal asks to find the smallest subset of vertices that intersects every maximal (resp. maximum) clique in the graph. In mathematics, Tuza~\cite{Tuza91} and Erd\H{o}s et al.~\cite{EGT92} started to estimate the size of the smallest such set depending on structure of graphs. See the recent work of Shan et al.~\cite{SLK14} and references therein. 
In computer science, exactly computing the smallest set on special classes of graphs appears in many works \cite{GP01,LCS02,CKL01,DLMS08,Lee12}. 

Both the edge and vertex version of $k$-Clique Packing also have been studied actively both in mathematics and computer science. 
In mathematics, the main focus of research is lower bounding the maximum number of edge or vertex-disjoint copies of $K_k$ in very dense graphs (note that even $K_3$ does not exist in $K_{n,n}$ which has $2n$ vertices and $n^2$ edges). See the recent paper~\cite{Yuster14} or the survey~\cite{Yuster07} of Yuster. The latter survey also mentions approximation algorithms, including APX-hardness and the general approximation algorithm for $k$-Set Packing which now achieves $\frac{k + 1 + \epsilon}{3}$ for the vertex version and $\frac{\binom{k}{2} + 1 + \epsilon}{3}$ for the edge version~\cite{Cygan13}. 
Feder and Subi~\cite{FS12} considered $H$-Edge Packing and showed APX-hardness when $H$ is $k$-cycle or $k$-clique. 
Chataigner et al.~\cite{CMWY09} considered an interesting variant where we want to pack vertex-disjoint cliques of any size to maximize the total number of edges of the packed cliques, and proved APX-hardness and a 2-approximation algorithm.
Exact algorithms for special classes of graphs have been considered in~\cite{BCD97,GPCCW01,HKNP05,Kloks12}.

\subsection{Open Problems}
\label{sec:conclusions}
For $H$-Transversal, 1-connectivity is not sufficient for strong hardness, because $k$-Star Transversal admits an $O(\log k)$-approximation algorithm by Theorem~\ref{thm:star}. It is open whether 1-connectivity is sufficient or not for such strong hardness for $H$-Packing. $k$-Star Packing is at least as hard as MIS-$k$ by a trivial reduction, but the approximability of $k$-Path Packing appears to be still unknown. Whether $k$-Path Transversal admits a factor $o(k)$ approximation algorithm is also an intriguing question. 
For directed acyclic graphs, Svensson~\cite{Svensson12} proved that it is Unique Games-hard to approximate $k$-Path Transversal within a factor better than $k$.

The approximability of $H$-Edge Transversal and $H$-Edge Packing is less understood than the vertex versions. Proving tight characterizations for the edge versions similar to Theorem~\ref{thm:main_v} is an interesting open problem.

\subsection{Organization}
The rest of the main body is devoted to proving Theorem~\ref{thm:main_v} for $H$-Transversal / Packing and Theorem~\ref{thm:mis} for MIS-$k$-$g$. 
Section~\ref{sec:prelim} recalls and extends previous hardness results for the problems we reduce from; 
Sections~\ref{sec:transversal} and~\ref{sec:packing_main} prove hardness of $H$-Transversal and $H$-Packing respectively. 
Appendix~\ref{sec:star} gives an $O(\log k$)-approximation algorithm for $k$-Star Transversal, proving Theorem~\ref{thm:star}.
Appendix~\ref{sec:cycles} proves Theorem~\ref{thm:random} to illustrate the connection to FVS.

\section{Preliminary}
\label{sec:prelim}
\noindent {\bf Notation.} 
A $k$-uniform hypergraph is denoted by $P = (V_P, E_P)$ such that each $e \in E_P$ is a $k$-subset of $V_P$. 
We denote $e$ as an {\em ordered} $k$-tuple $e = (v^1, \dots, v^k)$. The ordering can be chosen arbitrarily given $P$, but should be fixed throughout. 
If $v$ indicates a vertex of some graph, we use a superscript $v^i$ to denote another vertex of the same graph, and $e^i$ to denote the $i$th (hyper)edge.
For an integer $m$, let $[m] = \{ 1, 2, \dots, m \}$. 
Unless otherwise stated, the {\em measure} of $F \subseteq V$ is obtained under the uniform measure on $V$, which is simply $\frac{|F|}{|V|}$. 
%

\medskip
\noindent {\bf $k$-HVC.}
An instance of $k$-HVC consists of a $k$-uniform hypergraph $P$, where the goal is to find a set $C \subseteq V_P$ with the minimum cardinality such that  it intersects every hyperedge. The result of Dinur, Guruswami, Khot and Regev~\cite{DGKR05} states that

\begin{theorem} [\cite{DGKR05}]
\label{thm:dgkr_body}
Given a $k$-uniform hypergraph ($k \geq 3$) and $\epsilon > 0$, it is NP-hard to tell apart the following cases:
\begin{itemize}
\item Completeness: There exists a vertex cover of measure $\frac{1 + \epsilon}{k - 1}$.
\item Soundness: Every vertex cover has measure at least $1 - \epsilon$.
\end{itemize}
Therefore, it is NP-hard to approximate $k$-HVC within a factor $k - 1 + 2\epsilon$. 
\end{theorem}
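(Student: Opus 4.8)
Since Theorem~\ref{thm:dgkr_body} is imported verbatim from~\cite{DGKR05}, the only piece I would check in detail is the ``Therefore'' clause, which is immediate: a polynomial-time algorithm approximating $k$-HVC within a factor strictly below $\frac{1-\epsilon}{(1+\epsilon)/(k-1)} = (k-1)\cdot\frac{1-\epsilon}{1+\epsilon} \ \ge\ (k-1)(1-2\epsilon)$ could be run on an input instance and, by comparing the measure of the cover it returns against the threshold $1-\epsilon$, would separate the completeness and soundness cases; rescaling $\epsilon$ gives an inapproximability factor arbitrarily close to $k-1$. So the real content is the gap hardness, which I would establish along the lines of~\cite{DGKR05} as follows.

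\emph{The reduction.} Start not from a plain two-prover Label Cover but from the multilayered PCP of~\cite{DGKR05}: an $L$-layered constraint system with variable sets $X_1,\dots,X_L$ over a common label alphabet $[R]$, projection constraints between every ordered pair of layers, perfect completeness, soundness $\delta(L,R)\to 0$, and a ``weak density'' property guaranteeing that no assignment concentrated on a bounded number of labels satisfies many constraints; the number of layers $L$ is taken large in terms of $k$ and $\epsilon$, and this layering is precisely what lets the analysis work with hyperedge arity as small as $k$. Now replace each variable of each layer by a block of hypergraph vertices indexed by $\{0,1\}^R$, weighted by the $p$-biased product measure $\mu_p$ with $p = 1-\frac{1+\epsilon}{k-1}$, folded over the projection constraints so that the ``relevant'' coordinates of different blocks are identified. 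Hyperedges are sampled by picking a random pair of layers and a random constraint between them, then drawing a $k$-tuple $(z^1,\dots,z^k)\in(\{0,1\}^R)^k$ by choosing, independently for each coordinate, a pattern from a fixed distribution on $\{0,1\}^k$ that never outputs $1^k$ (and carries enough noise for the Fourier step below), and placing into the hyperedge one vertex from each of the $k$ relevant blocks according to $z^1,\dots,z^k$.

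\emph{Completeness and soundness.} A satisfying labeling $\sigma$ gives, in each block $b$, the long-code set $I_b=\{z:z_{\sigma(b)}=1\}$; since no coordinate pattern equals $1^k$, no hyperedge lies inside $\bigcup_b I_b$, so $\bigcup_b I_b$ is an independent set of measure $p$ and its complement is a vertex cover of measure $1-p=\frac{1+\epsilon}{k-1}$. Conversely, if some vertex cover has measure $<1-\epsilon$, its complement is an independent set $I$ with $\mu_p$-measure $>\epsilon$, and writing $g_b=\mathbf{1}_{I_b}$ the independence of $I$ means $\E[\prod_{j=1}^{k} g_{b_j}(z^j)]=0$ over the hyperedge distribution. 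A $p$-biased Fourier / low-degree-influence analysis shows this can only happen if for a non-negligible fraction of blocks $g_b$ has a coordinate of non-negligible low-degree influence; picking a label uniformly among such coordinates and combining this with the projection consistency and the density property of the multilayered PCP yields a labeling satisfying $\gg \delta$ of the constraints, a contradiction. Hence every vertex cover has measure $\ge 1-\epsilon$.

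\emph{Main obstacle.} The delicate part is the soundness argument: making the influence-based decoding robust to the correlated $k$-wise pattern distribution and, above all, stitching the per-block decodings together consistently across all $L$ layers so that the weak-density property forecloses the ``junta on a few coordinates'' escape route. This is the technical heart of~\cite{DGKR05}, and for the present paper it is invoked only as a black box.
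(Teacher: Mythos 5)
Your proposal is correct and matches the paper's treatment: the paper invokes Theorem~\ref{thm:dgkr_body} as a black box from~\cite{DGKR05} rather than re-proving it, and your derivation of the $(k-1)(1-\epsilon)/(1+\epsilon)$ factor from the completeness/soundness gap is the right way to read off the ``Therefore'' clause. Your sketch of the underlying~\cite{DGKR05} argument (multilayered PCP with weak density, $p$-biased long codes at $p = 1 - \frac{1+\epsilon}{k-1}$, a $k$-wise test distribution avoiding $1^k$, and influence-based decoding on the soundness side) accurately describes the structure that the paper later revisits in Appendix~\ref{subsec:hvc} only to extract the bounded-degree and density refinements it needs.
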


Moreover, the above result holds even when the degree of a hypergraph is bounded by $d$ depending on $k$ and $\epsilon$. See Appendix~\ref{subsec:hvc} for details. 

\medskip
\noindent {\bf MIS-$k$.}
Given a graph $G = (V_G, E_G)$, a subset $S \subseteq V_G$ is {\em independent} if the subgraph induced by $S$ does not contain any edge. 
The Maximum Independent Set (MIS) problem asks to find the largest independent set, and MIS-$k$ indicates the same problem where $G$ is promised to have maximum degree at most $k$. The recent result of Chan~\cite{Chan13} implies
\begin{theorem}[\cite{Chan13}]
\label{thm:chan}
Given a graph $G$ with maximum degree at most $k$, it is NP-hard to tell apart the following cases:
\begin{itemize}
\item Completeness: There exists an independent set of measure $\Omega(1/(\log k))$.
\item Soundness: Every subset of vertices of measure $O(\frac{\log^3 k}{k})$ contains an edge. 
\end{itemize}
Therefore, it is NP-hard to approximate MIS-$k$ within a factor $\Omega(\frac{k}{\log^4 k})$. 
\end{theorem}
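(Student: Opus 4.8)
\medskip\noindent\textbf{Proof proposal for Theorem~\ref{thm:chan}.}
The plan is to produce the gap instance as a dictatorship-test (``long code'') graph layered over a hard Label Cover instance, in the style of FGLSS-type reductions, with the test designed so that (i) honest solutions stay independent with density $\Omega(1/\mathrm{polylog}(k))$, hence in particular $\Omega(1/\log k)$, (ii) the graph has maximum degree $k$, and (iii) any independent set of density $\gg \mathrm{polylog}(k)/k$ decodes back to a good Label Cover labeling. I would start from a Label Cover instance $\mathcal L$ that is NP-hard to approximate (PCP theorem composed with parallel repetition), regularized so that every symbol of the encoded proof is queried about equally often, and, if the soundness analysis needs it, made \emph{smooth}. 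Each vertex of $\mathcal L$ is replaced by a cloud carrying (a restriction of) its long code; for each constraint of $\mathcal L$ the test reads one symbol from each of $q$ clouds and accepts only when the read pattern falls in a fixed small set $\Pi\subseteq\{0,1\}^q$; and two (randomness, accepting-view) vertices are joined by an edge when the views query a common proof symbol and disagree on it.

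The crux --- this is Chan's main idea --- is the choice of $\Pi$: take it to be the support of a \emph{pairwise-independent subgroup}, i.e.\ a distribution on $\{0,1\}^q$ that is uniform on every pair of coordinates yet is supported on a set of size only $\mathrm{polylog}(q)$ times the minimum possible. Pairwise independence of the test distribution is exactly what forces any proof fooling the test with probability noticeably above the ``random'' value $|\Pi|/2^q$ to be correlated with a dictator, \emph{without} invoking the Unique Games Conjecture (the approximation-resistance-from-pairwise-independence phenomenon, in the spirit of Austrin--Mossel), while keeping $|\Pi|$ nearly minimal is what pins down the polylogarithmic losses. With $\Pi$ fixed, completeness is routine: a satisfying labeling of $\mathcal L$, encoded by honest dictators, yields a consistent family of accepting local views, i.e.\ an independent set, of the claimed density. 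For soundness, an independent set of density $\gg \mathrm{polylog}(k)/k$ gives, after averaging over the constraints of $\mathcal L$, many clouds whose indicator function has acceptance probability well above $|\Pi|/2^q$; the Fourier-analytic soundness of the pairwise-independent test extracts an influential coordinate in each, and a standard list-decoding step turns these into a labeling of $\mathcal L$ satisfying a non-negligible fraction of constraints, contradicting the hardness of $\mathcal L$. Finally, the degree bound: regularity of $\mathcal L$ (each proof symbol queried a bounded number of times) together with $|\Pi| = \mathrm{polylog}(q)$ gives graph degree $k = q\cdot|\Pi|\cdot\mathrm{polylog}$, and driving the Label Cover soundness polynomially down in $q$ makes the YES/NO gap $\Omega(k/\log^4 k)$, where the $\log^4$ absorbs the contributions of $|\Pi|$, the query-regularity, and the test's soundness error.

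I expect the main obstacle to be the quantitative soundness analysis of the pairwise-independent long-code test: one must show that, up to an error only polylogarithmically larger than the trivial $2^{-q}$, the test's acceptance probability is controlled by the low-degree dictatorial part of the relevant functions, and then convert that into an honest decoding to Label Cover. This demands an invariance-principle / hypercontractivity argument tailored to distributions supported on a pairwise-independent subgroup, with all error terms tracked as functions of $q$, $|\Pi|$, and the Label Cover parameters, so that the net loss is $\mathrm{polylog}(k)$ rather than the $2^{\Theta(\sqrt{\log k})}$ of earlier bounded-degree Independent Set hardness results. The remaining ingredients --- the FGLSS-style bookkeeping, the completeness direction, and the regularization of Label Cover --- are standard once this estimate is in place; alternatively one could attempt a gap-preserving degree reduction applied to a strong general-graph Independent Set hardness, but controlling the degree/gap tradeoff there appears no easier.
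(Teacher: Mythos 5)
This theorem is not proved in the paper at all --- it is imported verbatim from Chan~\cite{Chan13} as a black-box starting point for the reductions in Section~\ref{sec:packing_main}. Your sketch is an accurate outline of the approach actually taken in that cited work (FGLSS-style graph over a Label Cover/long-code test whose accepting set is the support of a pairwise-independent subgroup, with soundness via approximation resistance from pairwise independence), so it matches the intended source; just be aware that, as you yourself note, the quantitatively delicate soundness analysis that yields the specific $\log^3 k$ and $\log^4 k$ factors is precisely the content of~\cite{Chan13} and is not something the present paper supplies or that your sketch yet establishes.
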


\section{$H$-Transversal}
\label{sec:transversal}
In this section, given a 2-connected graph $H = (V_H, E_H)$ with $k$ vertices, we give a reduction from $k$-HVC to $H$-Transversal. 
The simplest try will be, given a hypergraph $P = (V_P, E_P)$ (let $n = |V_P|, m = |E_P|$), to produce a graph $G = (V_G, E_G)$ where $V_G = V_P$, and for each hyperedge $e = ( v^1, \ldots , v^k )$ add $|E_H|$ edges that form a {\em canonical copy} of $H$ to $E_G$. While the soundness follows directly (if $F \subseteq V_P$ contains a hyperedge, the subgraph induced by $F$ contains $H$), the completeness property does not hold since edges that belong to different canonical copies may form an unintended non-canonical copy. To prevent this, a natural strategy is to replace each vertex by a set of many vertices (call it a {\em cloud}), and for each hyperedge $( v^1, \ldots , v^k )$, add many canonical copies on the $k$ clouds (each copy consists of one vertex from each cloud). If we have too many canonical copies, soundness works easily but completeness is hard to show due to the risk posed by non-canonical copies, and in the other extreme, having too few canonical copies could result in the violation of the soundness property. Therefore, it is important to control the structure (number) of canonical copies that ensure both completeness and soundness at the same time.

Our technique, which we call {\em random matching}, proceeds by creating a carefully chosen number of {\em random} copies of $H$ for each hyperedge to ensure both completeness and soundness.
We remark that properties of random matchings are also used to bound the number of short non-canonical paths in 
 inapproximability results for edge-disjoint paths on undirected graphs~\cite{AZ-edp,ACGKTZ10}. The details in our case are different as we create {\em many} copies of $H$ based on a {\em hypergraph}.

Fix $\epsilon > 0$, apply Theorem~\ref{thm:dgkr_body}, let $c := \frac{1 + \epsilon}{k - 1} , s := 1 - \epsilon$ be the measure of the minimum vertex cover in the completeness and soundness case respectively, and $d := d(k, \epsilon)$ be the maximum degree of hard instances. 
Let $a$ and $B$ be integer constants greater than 1, which will be determined later. 
Lemma~\ref{lem:random_completeness} and~\ref{lem:random_soundness} with these parameters imply the first half of Theorem~\ref{thm:main_v}.

\medskip\noindent {\bf Reduction.} 
Without loss of generality, assume that $V_H = [k]$. 
Given a hypergraph $P = (V_P, E_P)$, construct an undirected graph $G = (V_G, E_G)$ such that
\begin{itemize}
\item $V_G = V_P \times [B]$. Let $n = |V_P|$ and $N = |V_G| = nB$. For $v \in V_P$, let $\mathsf{\mathsf{cloud}}(v) := \left\{ v \right\}  \times [B]$ be the copy of $[B]$ associated with $v$. 
\item For each hyperedge $e = (v^1, \ldots, v^k)$, for $a B$ times, take $l^1, \ldots, l^k$ independently and uniformly from $[B]$. 
For each edge $(i, j) \in H$ $(1 \leq i < j \leq k)$, add $((v^i, l^i), (v^{j}, l^{j}))$ to $E_G$. Each time we add $|E_H|$ edges isomorphic to $H$, and we have $aB$ of such copies of $H$ per each hyperedge. Call such copies {\em canonical}.
\end{itemize}

\medskip\noindent {\bf Completeness.} 
The next lemma shows that if $P$ has a small vertex cover, $G$ also has a small $H$-Transversal.
\begin{lemma}
Suppose $P$ has a vertex cover $C$ of measure $c$. For any $\epsilon > 0$, with probability at least $3/4$, there exists a subset $F \subseteq V_G$ of measure at most $c + \epsilon$ such that the subgraph induced by $V_G \setminus F$ has no copy of $H$. 
\label{lem:random_completeness}
\end{lemma}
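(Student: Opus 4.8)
The plan is to take the natural candidate transversal $F := (C \times [B]) \cup R$, where $C$ is the given vertex cover of $P$ of measure $c$, and $R$ is a small "repair set" consisting of one vertex per non-canonical copy of $H$ in $G$. First I would observe that after removing $C \times [B]$ from $G$, every canonical copy of $H$ is destroyed: each canonical copy for a hyperedge $e$ uses one vertex from each cloud $\mathsf{cloud}(v^1),\dots,\mathsf{cloud}(v^k)$, and since $C$ is a vertex cover of $P$, some $v^i \in C$, so that entire cloud (hence that vertex of the copy) is deleted. Thus the only copies of $H$ that can survive in $V_G \setminus (C\times[B])$ are non-canonical ones — copies whose edge set is not exactly the edge set of a single canonical copy. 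The crux is then to bound the expected number of non-canonical copies of $H$ in $G$.

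The main technical step is the claim that $\E[\#\text{non-canonical copies of } H] \le n \cdot f(k)$ for some function $f$ depending only on $k$ (and on $a, d$, which are constants). Here is where $2$-connectivity of $H$ and the bounded degree $d$ enter. A non-canonical copy of $H$ is a subgraph $H' \cong H$ on $k$ vertices of $V_G$ whose $|E_H|$ edges are drawn from the multiset of added edges; since every added edge comes from some canonical copy, $H'$ decomposes its edges among $t \ge 2$ distinct canonical copies. I would set up the first-moment computation by fixing the "trace" of $H'$: which vertices of $V_G$ it uses, which clouds they lie in, and which canonical copies contribute which edges. For a single fixed canonical copy (one specific hyperedge $e$ and one of its $aB$ samples), the probability that it realizes a prescribed set of its edges among prescribed clouds with prescribed cloud-indices $\ell^i$ is $B^{-(\text{number of distinct clouds it touches in } H')}$, because the $\ell^i$'s are independent uniform over $[B]$. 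Multiplying over the $t \ge 2$ canonical copies and summing over the $O(n)$ choices of a "root" vertex/cloud, the poly-in-$n$ factors from choosing the other clouds (bounded by $d$ per adjacency, and the copies overlap in a connected pattern) get exactly cancelled by the $B$-powers — \emph{provided} that in each canonical copy the contributed sub-pattern, together with how it glues to the rest, forces enough distinct clouds to be pinned. This is precisely what $2$-connectivity buys: because $H$ has no cut vertex, two canonical copies that share only a single cloud cannot jointly build a copy of $H$ by contributing edge-disjoint connected pieces meeting at that one cloud; any genuine splitting of $E_H$ among $\ge 2$ copies forces at least one of them to "redundantly" pin a cloud already pinned by another, yielding an extra factor of $1/B$ and driving the expected count down by a power of $B$. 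Carefully, one shows the dominant non-canonical configurations contribute $O(n)$ in expectation with a constant (in $n$) multiplier $f(k)$; all others are $o(n)$, or even $O(n/B)$.

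Given the bound $\E[\#\text{non-canonical copies}] \le n f(k)$, Markov's inequality gives that with probability $\ge 3/4$ the number of non-canonical copies is at most $4 n f(k)$. Let $R$ be a set containing one vertex from each such copy; then $|R| \le 4 n f(k)$, which has measure $\frac{4 n f(k)}{nB} = \frac{4 f(k)}{B}$. Removing $F = (C\times[B]) \cup R$ destroys all canonical copies (via $C\times[B]$) and all non-canonical copies (via $R$), so $V_G \setminus F$ has no copy of $H$, and $F$ has measure at most $c + \frac{4f(k)}{B}$. Choosing the constant $B$ large enough that $\frac{4 f(k)}{B} \le \epsilon$ completes the proof. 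I expect the first-moment estimate for non-canonical copies — in particular isolating, from the combinatorics of how $E_H$ can be partitioned among overlapping canonical copies, exactly where $2$-connectivity forces the extra $1/B$ savings — to be the main obstacle; the bounded-degree hypothesis $d = d(k,\epsilon)$ is what keeps the number of relevant hyperedge-configurations finite so that $f(k)$ is a genuine constant.
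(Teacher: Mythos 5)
Your proposal is correct and follows essentially the same route as the paper's proof: set $F = (C\times[B])\cup R$, observe that canonical copies are killed by $C\times[B]$, bound the expected number of surviving non-canonical copies by $n\cdot f(k)$ via a first-moment argument in which 2-connectivity of $H$ forces each contributing canonical copy to redundantly pin boundary clouds (yielding the decisive extra $B^{-p}$ for a $p$-way split, exactly the content of the paper's inner lemma $\sum_i|I'_i|\ge k+p$), then apply Markov and choose $B$ large enough. The one part you leave at a sketch level — the quantitative bookkeeping of how many clouds each group must pin — is precisely what the paper's Claim~\ref{claim:numberofcycles} and its inner lemma make rigorous, but your identification of the mechanism is accurate.
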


\begin{proof}
Let $F = C \times [B]$.
We consider the expected number of copies of $H$ that avoid $F$ and argue that a small fraction of additional vertices intersect all of these copies.
Choose $k$ vertices $(v^1, l^1), \ldots, (v^{k}, l^{k})$ which satisfy
\begin{itemize}
\item $v^1 \in V_P$ can be any vertex.
\item $l^1, \ldots, l^{k} \in B$ can be arbitrary labels. 
\item For each $(i, j) \in E_H$, there must be a hyperedge of $P$ containing both $i$ and $j$. 
\end{itemize}
There are $n$ possible choices for $v^1$, $B$ choices for each $l^i$, and at most $kd$ choices for each $v^i \, (i > 1)$. 
The number of possibilities to choose such $(v^1, l^1), \ldots, (v^{k}, l^{k})$ is bounded by $n (dk)^{k} B^{k}$. 
Note that no other $k$-tuple of vertices induce a connected graph and contain a copy of $H$. Further discard the tuple when two vertices are the same. 

We calculate the probability that the subgraph induced by $((v^1, l^1), \ldots, (v^{k}, l^{k}))$ contains a copy {\em in this order} --- formally, for all $(i, j) \in E_H$, $((v^i, l^i), (v^j, l^j)) \in E_G$. 
For each $(i, j) \in E_H$, we call a pair $((v^i, l^i), (v^j, l^j)) \in \binom{V_G}{2}$ a {\em purported edge}. 
For a set of purported edges, we say that this set can be {\em covered by a single canonical copy} if one copy of canonical copy of $H$ can contain all purported edges with nonzero probability. 
Suppose that all $|E_H|$ purported edges can be covered by a single canonical copy of $H$. 
It is only possible when there is a hyperedge whose $k$ vertices are exactly $\{ v^1, \ldots, v^{k} \}$. 
In this case, $((v^1, l^1), \ldots, (v^{k}, l^{k}))$ intersects $F$. (right case of Figure~\ref{fig:partition}). 
When $|E_H|$ purported edges have to be covered by more than one canonical copy, some vertices must be covered by more than one canonical copy, and each canonical copy covering the same vertex should give the same label to that vertex. This redundancy makes it unlikely to have all $k$ edges exist at the same time. (left case of Figure~\ref{fig:partition}). 
The below claim formalizes this intuition.

\begin{figure}
  \centering
      \includegraphics[height=7cm]{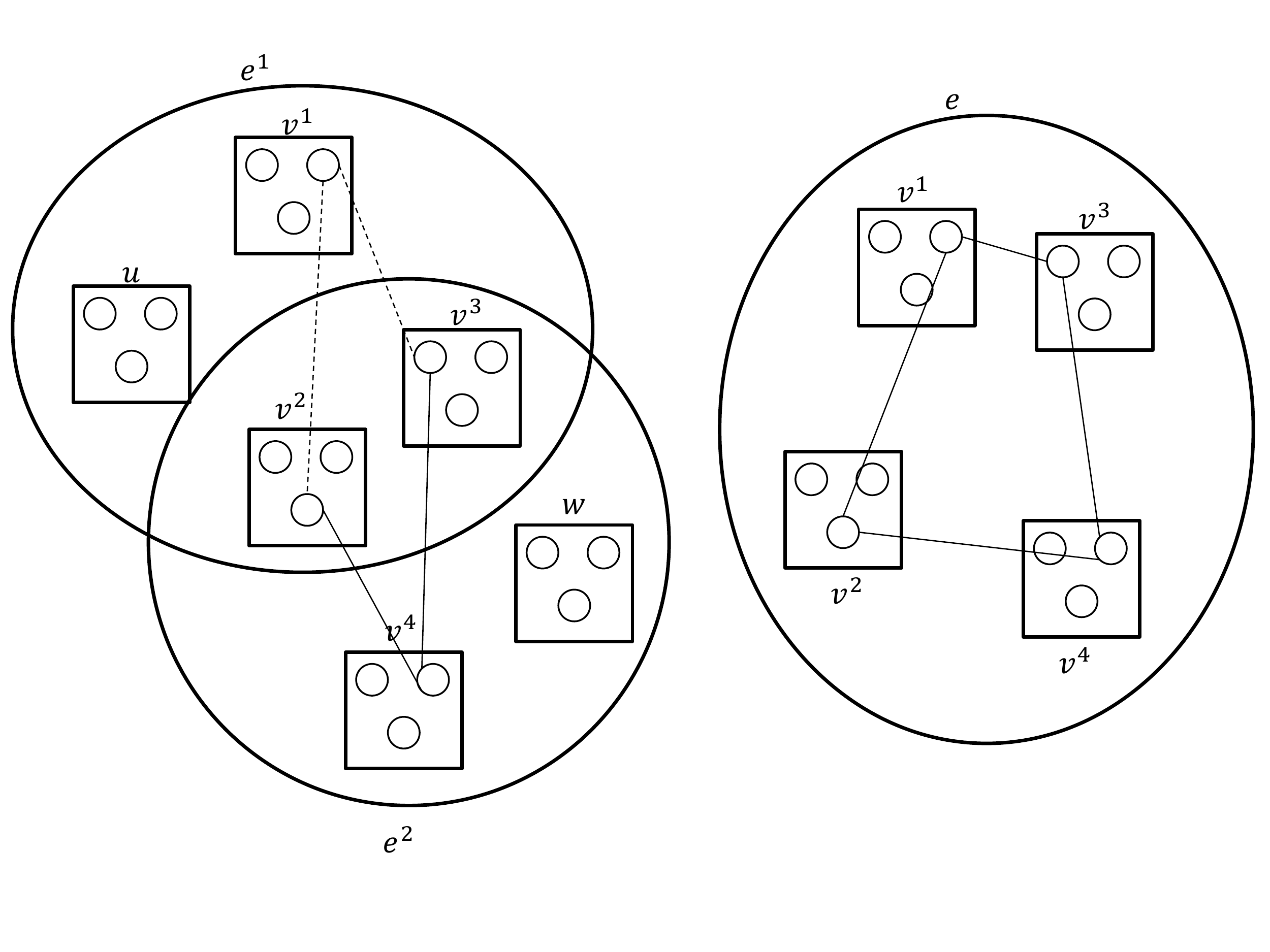}
  \caption{{\small Two examples where $k = 4$ and $H$ is a 4-cycle. On the left, purported edges are divided into two groups (dashed and solid edges). Each copy of canonical cycle should match the labels of three vertices to ensure it covers 2 designated edges (6 labels total). On the right, one canonical copy can cover all the edges, and it only needs to match the labels of four vertices (4 labels total).}
}
  \label{fig:partition}
\end{figure}

\begin{claim}
Suppose that $((v^1, l^1), \ldots, (v^{k}, l^{k}))$ cannot be covered by a single canonical copy. Then the probability that it forms a copy of $H$ is at most $\frac{(adk)^{k^2}}{B^k}$.
\label{claim:numberofcycles}
\end{claim}

\begin{proof}
Fix $2 \leq p \leq |E_H|$. Partition $|E_H|$ purported edges into $p$ nonempty {\em groups} $I_1, \ldots, I_p$
such that each group can be covered by a single canonical copy of $H$. 
There are at most $p^{|E_H|}$ possibilities to partition.
For each $v \in V_P$, there are at most $d$ hyperedges containing $v$ and at most $aBd$ canonical copies intersecting $\mathsf{\mathsf{cloud}}(v)$. 
Therefore, all edges in one group can be covered simultaneously by at most $aBd$ copies of canonical copies. 
There are at most $(aBd)^p$ possibilities to assign a canonical copy to each group. 
Assume that one canonical copy is responsible for exactly one group. 
This is without loss of generality since if one canonical copy is responsible for many groups, we can merge them and this case can be dealt with smaller $p$. 

Focus on one group $I$ of purported edges, and one canonical copy $L = (V_L, E_L)$ which is supposed to cover them. Let $I' \subseteq V_G$ be the set of vertices which are incident on the edges in $I$. 
Suppose $V_L = \{ (u^1, l'^1), \ldots, (u^k, l'^k) \}$, which is created by a hyperedge $f = (u^1, \ldots, u^k) \in E_P$. 
We calculate the probability that $L$ contains all edges in $I$ over the choice of labels $l'^1, \ldots, l'^k$ for $L$. 
One necessary condition is that 
$\left\{ v | (v, l) \in I' \mbox{ for some } l \in [B] \right\}$ (i.e. the set $I'$ projected to $V_P$) is contained in $f$. Otherwise, some vertices of $I'$ cannot be covered by $L$. 
Another necessary condition is $v^i \neq v^j$ for any $(v^i, l^i) \neq (v^j, l^j) \in I'$.
Otherwise (i.e. $(v, l^i), (v, l^j) \in I'$ for $l^i \neq l^j$), since $L$ gives only one label to each vertex in $f \subseteq V_P$, $(v, l^i)$ and $(v, l^j)$ cannot be contained in $L$ simultaneously. Therefore, we have a nice characterization of $I'$: It consists of at most one vertex from the cloud of each vertex in $f$.

The probability that $L$ contains $I$ is at most the probability that for each $(v^i, l^i) \in I'$, $l^i$ is equal to the label $L$ assigns to $v^i$, which is $B^{-|I'|}$. Now we need the following lemma saying that the sum of $|I'|$ is large, which relies on 2-connectivity of $H$.

\begin{lemma}
Fix $p \geq 2$. For any partition $I_1, ..., I_p$ of purported edges into $p$ non-empty groups, $\sum_{i = 1}^p |I'_i| \geq k + p$.
\end{lemma}
\begin{proof}
Let $t$ be the number of vertices contained in at least two $I'_i$s. 
Call them {\em boundary vertices}. 
Note that exactly $k - t$ vertices belongs to exactly one $I'_i$. 
For $i = 1, ..., p$, let $b_i$ be the number of boundary vertices in $|I'_i|$. Since $(I'_i, I_i)$ is a proper subgraph of $H$ and $H$ is 2-vertex connected, $b_i \geq 2$ for each $i$. Therefore, 
\[
\sum_{i=1}^{p} |I'_i| = (k - t) + \max(2p, 2t) \geq k + p.
\]
\end{proof}


We conclude that for each partition, the probability of having all the edges is at most 
\[ (aBd)^p \prod_{q= 1}^p{B^{-|I'_q|}}=  
\frac{(aBd)^p}{B^{k + p}} = \frac{(ad)^p}{B^{k}} \ .\]
The probability that $((v^1, l^1), \ldots, (v^{k}, l^{k}))$ forms a copy is therefore bounded by 
\[  \sum_{p=2}^{|E_H|} \ p^{|E_H|} \frac{(ad)^p}{B^{k}} \leq \frac{(adk)^{k^2}}{B^k} \ . \]
\end{proof}

Therefore, the expected number of copies that avoid $F$ is bounded by
$n(kd)^{k}B^{k} \cdot \frac{(adk)^{k^2}}{B^k}$. With probability at least $3/4$, the number of such copies is at most $4 n (adk)^{2k^2}$.
Let $B \geq \frac{4 (adk)^{2k^2}}{\epsilon}$. Then these copies of $H$ can be covered by at most $\epsilon n B = \epsilon N$ vertices. 
\end{proof}

\medskip\noindent {\bf Soundness.} 
The soundness claim above is easier to establish. By an averaging argument, a subset $I$ of $V_G$ of measure $2 \epsilon$ must contain $\epsilon B$ vertices from the clouds corresponding to a subset $S$ of measure $\epsilon$ in $V_P$. There must be a hyperedge $e$ contained within $S$, and the chosen parameters ensure that one of the canonical copies corresponding to $e$ is likely to lie within $I$.

\begin{lemma}
For $a = a(k, \epsilon)$ and $B = \Omega(\log |E_P|)$, 
if every subset of $V_P$ of measure at least $\epsilon$ contains a hyperedge in the induced subgraph, with probability at least $3/4$, every subset of $V_G$ with measure $2\epsilon$ contains a canonical copy of $H$. 
\label{lem:random_soundness}
\end{lemma}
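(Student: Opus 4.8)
The plan is to prove Lemma~\ref{lem:random_soundness} by a union bound over all ``bad'' subsets of $V_G$ of measure exactly $2\epsilon$, where a set is bad if it contains no canonical copy of $H$. First I would reduce to subsets that are \emph{cloud-aligned} up to a slack: given any $I \subseteq V_G$ with $|I| = 2\epsilon N$, an averaging argument shows that the set $S := \{ v \in V_P : |I \cap \mathsf{cloud}(v)| \geq \epsilon B \}$ has measure at least $\epsilon$ in $V_P$ (otherwise the vertices in light clouds contribute fewer than $\epsilon N$ vertices, and the at most $\epsilon n$ heavy clouds contribute fewer than $\epsilon N$, total $< 2\epsilon N$). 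By the soundness hypothesis on $P$, there is a hyperedge $e = (v^1, \dots, v^k)$ with all $v^i \in S$, so $|I \cap \mathsf{cloud}(v^i)| \geq \epsilon B$ for every $i$. It then suffices to show that, with high probability over the random construction, \emph{for every} hyperedge $e$ and every choice of subsets $T_1, \dots, T_k$ with $T_i \subseteq \mathsf{cloud}(v^i)$ and $|T_i| = \lceil \epsilon B \rceil$, at least one of the $aB$ canonical copies of $H$ created for $e$ has its vertex in $\mathsf{cloud}(v^i)$ lying in $T_i$ for all $i$.

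The key probabilistic estimate is the following: fix $e$ and fix such $T_1, \dots, T_k$. Each of the $aB$ canonical copies for $e$ picks labels $l^1, \dots, l^k$ independently and uniformly from $[B]$, so the probability that a \emph{single} copy lands inside $T_1 \times \cdots \times T_k$ is $\prod_i \frac{|T_i|}{B} \geq \epsilon^k$. Hence the probability that \emph{none} of the $aB$ copies lands inside is at most $(1 - \epsilon^k)^{aB} \leq \exp(-\epsilon^k a B)$. Now I union-bound: there are $|E_P| = m$ choices of hyperedge, and for each the number of choices of $(T_1, \dots, T_k)$ is at most $\binom{B}{\lceil \epsilon B \rceil}^k \leq 2^{kB}$. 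So the failure probability is at most $m \cdot 2^{kB} \cdot \exp(-\epsilon^k a B)$. Choosing $a = a(k,\epsilon)$ large enough that $\epsilon^k a > k \ln 2 + 1$ makes this at most $m \cdot \exp(-B)$, which is at most $1/4$ once $B = \Omega(\log m) = \Omega(\log |E_P|)$ with a suitable constant. This also records that the final graph has maximum degree $O(a k d)$, consistent with the reduction producing a bounded-degree instance.

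One small subtlety to handle carefully is the quantifier order: I want the conclusion ``with probability $\geq 3/4$, \emph{every} measure-$2\epsilon$ subset $I$ contains a canonical copy,'' and there are exponentially many such $I$, so I cannot union-bound directly over $I$. The fix is exactly the reduction above — I instead union-bound over the finitely-parametrized family of (hyperedge, $k$-tuple of $\lceil\epsilon B\rceil$-subsets) events, which is a property of the \emph{construction alone}, not of $I$; once that good event holds, \emph{every} bad $I$ is ruled out deterministically via the averaging argument. I should also be slightly careful that $\epsilon B$ need not be an integer, so I use $\lceil \epsilon B \rceil$ throughout and absorb the rounding into constants, and that I may assume $2\epsilon < 1$ and $\epsilon n \geq 1$ (true for $|V_P|$ large, which is the interesting regime). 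I do not expect a genuine obstacle here; the only thing to get right is that $a$ is chosen as a function of $k$ and $\epsilon$ \emph{before} $B$, and that $B$ is then taken large enough both for this lemma ($B = \Omega(\log|E_P|)$) and for Lemma~\ref{lem:random_completeness} ($B \geq 4(adk)^{2k^2}/\epsilon$), so that a single choice of $B$ makes both the completeness and soundness events hold simultaneously with probability $\geq 1/2$ by a union bound.
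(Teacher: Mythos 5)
Your proposal is correct and follows essentially the same route as the paper's proof: both reduce to the good event that for every hyperedge $e$ and every $k$-tuple of size-$\lceil\epsilon B\rceil$ subsets of the corresponding clouds, some canonical copy of $H$ lands inside, then bound the failure probability per $(e, T_1,\ldots,T_k)$ by $(1-\epsilon^k)^{aB}$, union-bound over the $\leq |E_P|\cdot 2^{kB}$ such events, and finally apply an averaging argument plus the soundness of $P$ to deterministically handle every measure-$2\epsilon$ set once the good event holds. The only differences are cosmetic (you are slightly more explicit about the quantifier order, the $\lceil\epsilon B\rceil$ rounding, and coordinating the choice of $B$ with the completeness lemma).
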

\begin{proof}
We want to show that the following property holds for every hyperedge $e = (v^1, \dots, v^k)$:  
if a subset of vertices $I \subseteq V_G$ has at least $\epsilon$ fraction of vertices from each $\mathsf{\mathsf{cloud}}(v^i)$, then $I$ will contain a canonical copy. 
Fix $A^1 \subseteq \mathsf{\mathsf{cloud}}(v^1), \dots , A^k \subseteq \mathsf{\mathsf{cloud}} (v^k)$ be such that for each $i$, $|A^i| \geq \epsilon B$.
There are at most $2^{kB}$ ways to choose such $A$'s. 
The probability that one canonical copy associated with $e$ is not contained in $(v^1, A^1) \times \cdots \times (v^k, A^k)$ is at most $1 - \epsilon^k$. 
The probability that none of canonical copy associated with $e$ is contained in $(v^1, A^1) \times \cdots \times (v^k, A^k)$ is 
$(1 - \epsilon^k)^{aB} \leq \exp(-aB\epsilon^k)$. 

By union bound over all $A^1, \ldots, A^k$, the probability that there exists $A^1, \ldots, A^k$ containing no canonical copy is at most $\exp(kB - aB \epsilon^k) = \exp(-B) \leq \frac{1}{4|E_P|}$ by taking $a$ large enough constant depending on $k$ and $\epsilon$, and $B = \Omega( \log |E_P|)$. 
Therefore, with probability at least $3/4$, the desired property holds for all hyperedges. 

Let $I$ be a subset of $V_G$ of measure at least $2\epsilon$. By an averaging argument, at least $\epsilon$ fraction of {\em good} vertices $v \in V_P$ satisfy that $|\mathsf{\mathsf{cloud}}(v^i) \cap I| \geq \epsilon B$. 
By the soundness property of $P$, there is a hyperedge $e$ contained in the subgraph induced by the good vertices, and the above property for $e$ ensures that $I$ contains a canonical copy.
\end{proof}

\section{$H$-Packing and MIS-$k$-$g$}
\label{sec:packing_main}
Given a 2-connected graph $H$, the reduction from MIS-$k$-$k$ to $H$-Packing is relatively straightforward. 
Here we assume that hard instances of MIS-$k$-$k$ are indeed $k$-regular for simplicity. 
Given an instance $M = (V_M, E_M)$ of MIS-$k$-$k$, we take $G = (V_G, E_G)$ to be its {\em line graph} --- $V_G = E_M$, and $e, f \in V_G$ are adjacent if and only if they share an endpoint as edges of $M$. 

For each vertex $v \in V_M$, let $\Star(v) := \{ e \in V_G : v \in e \}$. $\Star(v)$ induces a $k$-clique, and for $v, u \in V_M$, $\Star(v)$ and $\Star(u)$ share one vertex if $u$ and $v$ are adjacent, and share no vertex otherwise. Given an independent set $S$ of $M$, we can find $|S|$ pairwise disjoint stars in $G$, which gives $|S|$ vertex-disjoint copies of $H$. On the other hand, 2-connectivity of $H$ and large girth of $M$ implies that any copy of $H$ must be entirely contained in one star, proving that many disjoint copies of $H$ in $G$ also give a large independent set of $M$ with the same cardinality, completing the reduction from MIS-$k$-$k$ to $H$-Packing.
The following theorem formalizes the above intuition.


\begin{lemma}
\label{lem:packing_reduction}
For a 2-connected graph $H$ with $k$ vertices, there is an approximation-preserving reduction from MIS-$k$-$k$ to $H$-Packing.
\end{lemma}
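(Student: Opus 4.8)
The plan is to make precise the "line graph of an MIS-$k$-$k$ instance" construction sketched just above and verify the two directions of the approximation-preserving reduction. First I would fix an instance $M = (V_M, E_M)$ of MIS-$k$-$k$; by Theorem~\ref{thm:mis} I may assume (after the standard trick of adding pendant edges or, more cleanly, by observing that inapproximability survives passing to a $k$-regular supergraph of the same girth) that $M$ is $k$-regular with girth strictly greater than $k$. Set $G = (V_G, E_G)$ to be the line graph of $M$: $V_G = E_M$, and two edges of $M$ are adjacent in $G$ iff they share an endpoint. For each $v \in V_M$ put $\Star(v) = \{ e \in E_M : v \in e\}$; since $M$ is $k$-regular, $|\Star(v)| = k$ and $\Star(v)$ induces a clique on $k$ vertices in $G$, hence a copy of $H$ (as $H$ has $k$ vertices and is a subgraph of $K_k$). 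Moreover, because $M$ has no multi-edges, $\Star(u) \cap \Star(v)$ is a single vertex of $G$ if $uv \in E_M$ and empty otherwise.

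The forward direction is immediate: given an independent set $S \subseteq V_M$, the stars $\{\Star(v)\}_{v \in S}$ are pairwise disjoint (no two vertices of $S$ are adjacent in $M$), so they give $|S|$ vertex-disjoint copies of $H$ in $G$; thus $\mathrm{Opt}(G\text{-}H\text{-Packing}) \ge \mathrm{Opt}(M\text{-MIS})$. For the reverse direction, the key structural claim is: \emph{every} copy of $H$ in $G$ is contained in $\Star(v)$ for some $v \in V_M$. Granting this, any collection of vertex-disjoint copies of $H$ in $G$ yields a set of distinct vertices $v \in V_M$ (distinct because disjoint copies lie in distinct stars, as two distinct stars meet in at most one vertex of $G$ while a copy of $H$ has $k \ge 3$ vertices) — and this set is independent in $M$, since if $u,v$ were adjacent the copies $\Star(u), \Star(v)$ would share the vertex $uv$ of $G$, contradicting disjointness. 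Hence $\mathrm{Opt}(M\text{-MIS}) \ge m$ whenever $G$ has $m$ disjoint copies of $H$, and the same argument turns any $\alpha$-approximate packing into an $\alpha$-approximate independent set, giving the approximation-preserving reduction. I would also note the construction is polynomial-time: $|V_G| = |E_M| = k|V_M|/2$.

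The main obstacle — and the only place 2-connectivity and the girth bound are used — is the structural claim that every copy of $H$ in $G$ lies in a single star. I would argue it as follows. A copy of $H$ in $G$ is a set of $k$ edges of $M$, say $T \subseteq E_M$, whose induced subgraph in the line graph contains $H$. Since $H$ is connected, $T$ induces a connected subgraph of $G$, meaning the edges in $T$, viewed in $M$, form a connected subgraph $M_T$ of $M$ with exactly $k$ edges. If $M_T$ is not a star, then $M_T$ is a connected graph on some vertex set of size at least $3$ with $k$ edges, and in particular it is not a tree plus possibly — actually I should be careful: I want to show $M_T$ must contain a cycle or must be "spread out" enough to contradict girth, OR that if $M_T$ is acyclic the line graph of $M_T$ cannot be $2$-connected. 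Concretely: if $M_T$ contains a cycle, that cycle has length at most $k$ (it uses at most $k$ edges), contradicting $\mathrm{girth}(M) > k$. So $M_T$ is a tree with $k$ edges, i.e. $k+1$ vertices. But the line graph of a tree is known to be a "block graph" in which every block is a clique corresponding to a star around a vertex; such a line graph, on $k \ge 3$ vertices, is $2$-connected only if it is itself a single clique, which happens exactly when the tree is a star. Since $H$ is $2$-connected and is a subgraph of the line graph of $M_T$ on all $k$ of its vertices, the line graph of $M_T$ must be $2$-connected, forcing $M_T$ to be a star — i.e. all edges of $T$ share a common vertex $v$, so $T \subseteq \Star(v)$ (and in fact $T = \Star(v)$ by $k$-regularity). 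I expect writing this last paragraph carefully — in particular the claim that a $2$-connected subgraph of the line graph of a forest on its full vertex set forces the forest to be a star, perhaps phrased via "the line graph of a tree that is not a star has a cut vertex" — to be the delicate step, though it is elementary; the girth bound cleanly kills the cyclic case.
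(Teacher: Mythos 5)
Your proof of the key structural claim (every $k$-subset of $V_G$ whose induced subgraph contains $H$ is $\Star(v)$ for some $v$) takes a genuinely different and arguably cleaner route than the paper. The paper first shows, by an ad-hoc case analysis using girth, that $T$ must contain two disjoint edges of $M$, then invokes 2-connectivity to produce two internally disjoint paths in $G[T]$ between them, converts these to a short cycle in $M$, and contradicts girth. You instead observe that $T$ corresponds to a connected $k$-edge subgraph $M_T$ of $M$ with $G[T] = L(M_T)$; if $M_T$ contains a cycle it has length $\le k$ (contradicting girth), and otherwise $M_T$ is a tree with $k$ edges, and the line graph of a tree that is not a star has a cut vertex (so $L(M_T)$, which must be $2$-connected as a spanning supergraph of $H$, forces $M_T$ to be a star). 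Both arguments use girth and $2$-connectivity at the same junctures, but yours replaces the paper's explicit path-juggling with the classical block structure of line graphs of trees, which I think is tighter and easier to check.

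The one place where your proof is genuinely incomplete is the reduction to the $k$-regular case. Neither of your proposed justifications quite works: adding pendant edges creates degree-$1$ vertices (still not regular), and passing to a $k$-regular supergraph of $M$ with the same girth is neither obviously constructible in polynomial time nor gap-preserving (adding edges can only shrink independent sets, so the completeness guarantee may degrade). The paper sidesteps regularity entirely by padding each $\Star(v)$ with $k - \Deg(v)$ fresh dummy vertices and then adding clique edges so that every star induces $K_k$; this keeps the two key structural facts (each star is a $k$-clique, two stars meet in at most one vertex) and is trivially polynomial time. Your line-graph argument survives this padding with a tiny adaptation — a dummy vertex belongs to exactly one star and has all its $G$-neighbors inside that star, so any $2$-connected $T$ containing a dummy is forced into that star by the same reasoning — but you should state and prove this rather than assume regularity.
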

\begin{proof}
Let $M = (V_M, E_M)$ be an instance of MIS-$k$-$k$ $M$ with maximum degree $k$ and girth greater than $k$. 
First, let $G = (V_G = E_M, E_G)$ be the line graph of $M$. For each vertex $v \in V_M$ with degree strictly less than $k$, we add $k - \Deg(v)$ 
new vertices to $V_G$. Let $\Star(v) \subseteq V_G$ be the union of the edges of $M$ incident on $v$ and the newly added vertices for $v$. 
Note that $|\Star(v)| = k$ for all $v \in V_M$. Add edges to $G$ to ensure that every $\Star(v)$ induces a $k$-clique. 
For two vertices $u$ and $v$ of $M$, $\Star(u)$ and $\Star(v)$ share exactly one vertex if $u$ and $v$ are adjacent in $M$, and share no vertex otherwise. 

Let $S$ be an independent set of $M$. The $|S|$ stars $\{ \Star(v) \}_{v \in S}$ are pairwise disjoint and each induces a $k$-clique, so $G$ contains at least $|S|$ disjoint copies of $H$. 

We claim that any $k$-subset of $V_G$ that induces a 2-connected subgraph must be $\Star(v)$ for some $v$. 
Assume towards contradiction, let $T$ be a $k$-subset inducing a 2-connected subgraph of $G$ that cannot be contained in a single star.
We first show $T$ must contain two disjoint edges of $M$. Take any $(u, v) \in T$. Since $T \notin \Star(u)$, $T$ contains an edge of $M$ not incident on $u$. If it is not incident on $v$ either, we are done. Otherwise, let $(w, v)$ be this edge. The same argument from $T \notin \Star(v)$ gives another edge $(w', u)$ in $T$. If $w \neq w'$, $(w, v)$ and $(w', u)$ are disjoint. Otherwise, $w, u, v$ form a triangle in $M$, contradicting a large girth.
Let $(u, v)$, $(w, x)$ be two disjoint edges of $M$ in contained in $T$. 

Since the subgraph of $G$ vertex-induced by $T$ is 2-connected, 
there are two internally vertex-disjoint paths $P_1$, $P_2$ in $G$ from $(u, v)$ to $(w, x)$. 
The sum of the two lengths is at most $k$, where the length of a path is defined to be the number of edges.
By considering the internal vertices of $P_i$ (edges of $M$) and deleting unnecessary portions, 
we have two edge-disjoint paths $P'_1$, $P'_2$ in $M$ where each $P'_i$ connects $\{ u, v \}$ and $\{ w, x \}$, with length at most the length of $P_i$ minus one. 
There is a cycle in $M$ consists only of the edges of $P'_1$, $P'_2$ together with $(u, v), (w, x)$. 
Since $|P'_1| + |P'_2| + 2 \leq k$, it contradicts that $M$ has girth strictly greater than $k$. 
\end{proof}

We prove that MIS-$k$-$g$ is also hard to approximate by a reduction from MIS-$d$ ($d = \widetilde\Omega(k)$),
using a slightly different {\em random matching} idea.
Given a degree-$d$ graph with possibly small girth, 
we replace each vertex by a cloud of $B$ vertices, 
and replace each edge by $a$ copies of random matching between the two clouds. 
While maintaining the soundness guarantee, we show that there are only a few small cycles, and by deleting a vertex from each of them and {\em sparsifying} the graph we obtain a hard instance for MIS-$k$-$g$.
Note that $g$ does not affect the inapproximability factor but only the runtime of the reduction.

\begin{theorem}
[Restatement of Theorem~\ref{thm:mis}]
For any constants $k$ and $g$, unless $\mathsf{NP} \subseteq \mathsf{BPP}$, no polynomial time algorithm approximates MIS-$k$-$g$ within a factor of $\Omega(\frac{k}{\log^7 k})$.
\end{theorem}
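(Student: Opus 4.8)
The plan is to prove Theorem~\ref{thm:mis} by a reduction from MIS-$d$ (the hardness of which is given by Theorem~\ref{thm:chan}) for a suitable $d = \Theta(k/\log^3 k)$, using the \emph{random matching} construction sketched in Section~\ref{subsec:packing_idea}. Given an instance $M = (V_M, E_M)$ of MIS-$d$ on $n$ vertices, build $G$ as follows: replace each vertex $v \in V_M$ by a cloud $\cloud(v) := \{v\} \times [B]$ of $B = B(k,g)$ vertices, and for each edge $(u,v) \in E_M$ insert $a$ independent uniformly random perfect matchings between $\cloud(u)$ and $\cloud(v)$, for a constant $a = a(k)$ to be fixed. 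The raw graph $G_0$ thus has maximum degree at most $ad$; we will choose $a$ a large enough constant multiple of $\log^3 k$ so that after sparsification the degree drops to $k$. Denote $N = nB$.

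The argument has four parts, carried out in this order. \textbf{(1) Few short cycles.} As in Claim~\ref{claim:numberofcycles}, I would bound the expected number of cycles of length at most $g$ in $G_0$: a cycle of length $\ell \le g$ projects to a closed walk in $M$, and for it to close up in $G_0$ the random matchings must agree on labels along the walk, which happens with probability $O(B^{-1})$ per extra constraint; counting the $O(n (ad)^g B)$ candidate walks, the expected number of short cycles is $O(n (ad)^g) = o(N)$ once $B = B(k,g)$ is a large enough constant. With probability $\ge 3/4$ there are $o(N)$ short cycles, and deleting one vertex from each destroys all cycles of length $\le g$ while removing only a $o(1)$ fraction of vertices. \textbf{(2) Completeness.} If $M$ has an independent set of measure $\Omega(1/\log d) = \Omega(1/\log k)$ (Theorem~\ref{thm:chan}), then $S \times [B]$ is an independent set in $G_0$ (matchings only go between distinct clouds), hence also in $G$ after deleting $o(N)$ vertices — so $G$ has an independent set of measure $\Omega(1/\log k)$. \textbf{(3) Density/soundness before sparsification.} If every subset of $V_M$ of measure $O(\log^3 d / d)$ contains an edge, then by Tur\'an-type / averaging reasoning every subset of $V_M$ of measure above this threshold induces $\Omega(\delta^2 |E_M|)$ edges for a set of relative size $\delta$; pushing this through the cloud construction, I would show (union bound over the at most $2^{N}$ candidate sets, as in Lemma~\ref{lem:random_soundness}, but more carefully over the $\binom{B}{\cdot}$ choices in each cloud) that for $a$ a large enough constant every subset of $V_{G_0}$ of measure above $O(\log^3 k/k)$ \emph{also} induces many edges — roughly an $\Omega(a)$ fraction of those edges survive in expectation because each $M$-edge becomes $a$ matchings. \textbf{(4) Sparsification.} Keep each edge of $G_0$ independently with probability $p = \Theta(1/a)$; the expected degree becomes $\Theta(d) = k$ (after a final pruning of the few vertices that exceed $k$, which is again a $o(1)$ fraction by a Chernoff bound), and the density guarantee from step (3), chosen with enough slack in $a$, ensures that with high probability every subset of measure above $O(\log^3 k/k)$ still contains a surviving edge. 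Combining: $G$ has max degree $\le k$, girth $> g$, an independent set of measure $\Omega(1/\log k)$ in the completeness case, and no independent set of measure $\omega(\log^3 k / k)$ in the soundness case, giving hardness factor $\Omega(k/\log^4 k)$; the extra logarithmic losses from the union bounds and the sparsification slack degrade this to $\Omega(k/\log^7 k)$.

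The main obstacle, I expect, is step (3) combined with (4): getting the soundness/density bound to survive \emph{both} the union bound over exponentially many vertex subsets of $G_0$ \emph{and} the random sparsification, while keeping the parameter $a$ only polylogarithmic in $k$ (so that the final degree $k = \Theta(ad)$ with $d = \Omega(k/\log^3 k)$ is consistent). The delicate point is that a bad set in $G_0$ need not be cloud-aligned, so one cannot directly quote the soundness of $M$; instead one argues that a set of measure $\mu$ in $V_{G_0}$ must have measure $\ge \mu/2$ inside the clouds of a set of measure $\ge \mu/2$ in $V_M$, invoke the $M$-density there, and then show the random matchings transfer a constant fraction of those $M$-edges into $G_0$ with probability $1 - 2^{-\Omega(aB\mu^2)}$, which beats the $2^{N} = 2^{nB}$ union-bound factor precisely when $a\mu^2 \gg 1$, i.e. when $\mu = \Omega(\log^3 k/k)$ and $a = \Theta(\log^{\Theta(1)} k)$ — tracking these exponents carefully is where the $\log^7 k$ (rather than $\log^4 k$) arises. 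The short-cycle count in step (1) is routine once $B$ is allowed to depend on $g$, and completeness in step (2) is immediate.
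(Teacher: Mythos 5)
Your four-step skeleton (girth control, completeness via cloud-aligned independent sets, a density/expansion argument, then sparsification) is exactly the paper's architecture, so at the level of plan you have the right idea. But the parameter $a$ is where the proposal comes apart: you assert $a = \Theta(\log^{\Theta(1)} k)$ (a ``large enough constant multiple of $\log^3 k$''), yet the bipartite density property that drives the soundness analysis forces $a$ to be \emph{polynomial} in $d$, not polylogarithmic. Concretely, for a single cloud pair, the probability that $a$ random matchings fail to put $\Omega(\epsilon^2 a B)$ edges between two $\epsilon B$-subsets is about $\exp(-\Omega(a\epsilon^2 B))$, while the number of candidate subset pairs is $\exp(\Theta(\epsilon \log(1/\epsilon) B))$; beating this at the critical scale $\epsilon = s = \Theta(\log^3 d/d)$ requires $a \gtrsim \log(1/s)/s = \Theta(d/\log^2 d)$. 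With $a$ merely polylogarithmic, the random matchings simply do not form expanders down to density $s$, and the soundness transfer from $M$ to $G$ collapses. The right parameter chain is $d = \Theta(k/\log^2 k)$ (not $k/\log^3 k$), $a = \Theta(d/\log^2 d)$, raw degree $ad = \Theta(d^2/\log^2 d)$, and a sparsification probability of $\Theta(k/(ad))$ bringing the degree down to $k$; your statement that the sparsified degree is ``$\Theta(d) = k$'' is internally inconsistent with $d = k/\log^3 k$.

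A secondary issue is the union-bound bookkeeping. You write that a per-subset failure probability of $2^{-\Omega(aB\mu^2)}$ ``beats the $2^N = 2^{nB}$ union-bound factor precisely when $a\mu^2 \gg 1$,'' but $a\mu^2 \gg 1$ gives exponent $\Omega(aB\mu^2)$ against $nB$, which would need $a\mu^2 \gg n$ --- hopeless. The paper avoids any brute-force union bound over all $2^N$ subsets of $V_G$: the bipartite density claim is established with a union bound only over $\exp(O(B))$ subsets per cloud pair (times $m$ pairs), and the global density is then deduced deterministically via a bucketed averaging argument over $V_{G_0}$ combined with Tur\'an's theorem (Claims labelled disperser2 and density). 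The only global union bound over subsets of $V_{G'}$ appears in the sparsification step, where Chernoff gives exponent $\Omega(\alpha k N)$ against $\exp(O(\beta \log(1/\beta) N))$, and the constraint $\alpha k \gtrsim \beta\log(1/\beta)$ is precisely what fixes $k = \Theta(d\log^2 d)$. Your ``measure $\ge \mu/2$ inside the clouds of a set of measure $\ge \mu/2$'' heuristic for handling non-cloud-aligned sets is too lossy --- the paper needs the logarithmic bucketing to locate a scale $\gamma \in [s,1]$ at which both the Tur\'an lower bound and the $(\gamma,\gamma^2/8)$-bipartite density can be invoked simultaneously, which is where some of the extra logarithmic losses originate.
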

\begin{proof}
We reduce from MIS-$d$ to MIS-$k$-$g$ where $k = O(d \log^2 d)$. 
Given an instance $G_0 = (V_{G_0}, E_{G_0})$ of MIS-$d$, we construct $G = (V_G, E_G)$ and $G' = (V_{G'}, E_{G'})$ by the following procedure:
\begin{itemize}
\item $V_G = V_{G_0} \times [B]$. As usual, let $\mathsf{cloud}(v) = \left\{ v \right\} \times [B]$.
\item For each edge $(u, v) \in E_{G_0}$, for $a$ times, add a {\em random matching} as follows.
\begin{itemize}
\item Take a random permutation $\pi : [B] \rightarrow [B]$.
\item Add an edge $((u, i), (v, \pi(i))$ for all $i \in [B]$.
\end{itemize}
\item Call the resulting graph $G$. To get the final graph $G'$, 
\begin{itemize}
\item For any cycle of length at most $g$, delete an arbitrary vertex from the cycle. Repeat until there is no cycle of length at most $g$.
\end{itemize}
\end{itemize}
Note that the step of eliminating the small cycles can be implemented trivially in time $O(n^g)$.
Let $n = |V_{G_0}|, m = |E_{G_0}|, N = nB = |V_{G}| \geq |V_{G'}|, M = m \cdot aB = |E_G| \geq |E_{G'}|$.
The maximum degree of $G$ and $G'$ is at most $ad$. By construction, girth of $G'$ is at least $g + 1$.

\medskip\noindent {\bf Girth Control.}
We calculate the expected number of small cycles in $G$, and argue that the number of these cycles is much smaller than the total number of vertices, so that $|V_G|$ and $|V_{G'}|$ are almost the same. 
Let $k'$ be the length of a purported cycle. Choose $k'$ vertices $(v^1, l^1), \ldots, (v^{k'}, l^{k'})$ which satisfy
\begin{itemize}
\item $v^1 \in V_{G_0}$ can be any vertex.
\item For each $1 \leq i < k'$, $(v^i, v^{i+1}) \in E_{G_0}$. 
\item $l^1, \dots , l^{k'} \in B$ can be arbitrary labels. 
\end{itemize}
There are $n$ possible choices for $v^1$, $B$ choices for each $l^i$, and $d$ choices for each $v^i \, (i > 1)$. 
The number of possibilities to choose such $(v^1, l^1), \dots , (v^{k'}, l^{k'})$ is bounded by $n d^{k'-1} B^{k'}$. 
Without loss of generality, assume that no vertices appear more than once. 

For each edge $e = (u, w) \in G_0$, consider the intersection of the purported cycle $((v^1, l^1), ..., (v^{k'}, l^{k'}))$ and the subgraph induced by $\mathsf{cloud}(u) \cup \mathsf{cloud}(w)$. It is a bipartite graph with the maximum degree 2. Suppose there are $q$ purported edges $e^1, \dots , e^{q}$ (ordered arbitrarily) in this bipartite graph. 
By slightly abusing notation, let $e^i$ also denote the event that $e^i$ exists in $G$. The following claim upper bounds $\Pr[e^i | e^1, \dots , e^{i-1}]$ for each $e^i$.

\begin{claim}
\label{claim:probability}
$\Pr[e^i | e^1, \dots , e^{i-1}] \leq \frac{a}{B - i}$.
\end{claim}
\begin{proof}
There are $a$ random matchings between $\mathsf{cloud}(u)$ and $\mathsf{cloud}(w)$, and
for each $j < i$, there is at least one random matching including $e^j$. 
We fix one random matching and calculate the probability that the random matching contains $e^i$, conditioned on the fact that it already contains some of $e^1, \dots,  e^{i-1}$. 

If there is $e^{j} \, (j < i)$ that shares a vertex with $e^i$, $e^i$ cannot be covered by the same random matching with $e^j$. 
If a random matching covers $p$ of $e^1, \dots ,  e^{i-1}$ which are disjoint from $e^i$, 
the probability that $e^i$ is covered by that random matching is $\frac{1}{B - p}$, and this is maximized when $p = i - 1$.

By a union bound over the $a$ random matchings, $\Pr[e^i | e^1, \dots , e^{i-1}] \leq \frac{a}{B - i}$.
\end{proof}

The probability that all of $e^1, \ldots, e^q$ exist is at most 
\[
\prod_{i=1}^{q} \frac{a}{B - i} \leq \left(\frac{a}{B - q}\right)^q \leq \left(\frac{a}{B - k'}\right)^q \ .
\]
Since edges of $G_0$ are processed independently, 
the probability of success for one fixed purported cycle is $(\frac{a}{B- k'})^{k'}$.
The expected number of cycles of length $k'$ is 
\begin{align*}
& n d^{k'-1} B^{k'} \cdot  \Bigl(\frac{a}{B - k'}\Bigr)^{k'} = n d^{k'-1} a^{k'} \biggl( 1 + \frac{k'}{B-k'} \biggr)^{k'}  \\
\leq & n d^{k'-1} a^{k'} \exp \Bigl(\frac{k'^2}{B - k'}\Bigr)
\leq e n(ad)^{k'}
\end{align*}
by taking $B - k' \geq k'^2$. 
Summing over $k' = 1, \dots , g$, the expected number of cycles of length up to $g$, is bounded by $eg(ad)^{g}n$. Take $B \geq 4d^2 \cdot eg(ad)^{g}$. Then with probability at least $3/4$, 
the number of cycles of length at most $g$ is at most $\frac{Bn}{d^2}$. 
By taking $1/d^2$ fraction of vertices away (one for each short cycle), we have a girth at least $g + 1$, which implies
$
\Bigl(1 - \frac{1}{d^2}\Bigr) |V_{G}| \leq  |V_{G'}| \leq |V_{G}|.
$

Hardness of MIS-$d$ states that it is NP-hard to distinguish the case $G_0$ has an independent set of measure $c := \Omega(\frac{1}{\log d})$ and the case where the maximum independent set has measure at most $s := O(\frac{\log^3 d}{d})$. 

\medskip\noindent {\bf Completeness.}
Let $I_0$ be an independent set of $G_0$ of measure $c$. Then $I = I_0 \times [B]$ is also an independent set of $G$ of measure $c$. Let $I' = I \cap V_{G'}$. $I'$ is independent in both $G$ and $G'$, and the measure of $I'$ in $G'$ is at least the measure of $I'$ in $G$, which is at least 
$c - 1/d^2 = \Omega(\frac{1}{\log d})$.

\medskip\noindent {\bf Soundness.}
Suppose that every subset of $V_{G_0}$ of measure at least $s$ contains an edge. 
Say a graph is  \emph{$(\beta, \alpha)$-dense} if we take $\beta$ fraction of vertices, at least $\alpha$ fraction of edges lie within the induced subgraph.
We also say a bipartite graph is \emph{$(\beta, \alpha)$-bipartite dense} if we take $\beta$ fraction of vertices from each side, at least $\alpha$ fraction of edges lie within the induced subgraph.
\begin{claim}
For $a = O(\frac{\log(1/s)}{s})$ and $B=O(\frac{\log m}{s})$ the following holds with probability at least $3/4$: For every  $(u, w) \in E_{G_0}$, the bipartite graph between $\mathsf{cloud}(u)$ and $\mathsf{cloud}(w)$ is $(\epsilon, \epsilon^2 / 8)$-bipartite dense for all $\epsilon \geq s$. 
\label{claim:disperser2}
\end{claim}
\begin{proof}

Fix $(u, w)$, and $\epsilon \in [s, 1]$,
and $X \subseteq \mathsf{cloud}(u)$ and $Y \subseteq \mathsf{cloud}(w)$ be such that
$|X| = |Y| = \epsilon B$. 
The possibilities of choosing $X$ and $Y$ is 
\[
{{B}\choose{\epsilon B}}^2 \leq \exp(O(\epsilon \log (1/\epsilon) B))
\]

Without loss of generality, let $X = Y = [\epsilon B]$. 
In one random matching, let $X_i \, (i \in [\epsilon B])$ be the random variable indicating whether vertex $(u, i) \in X$ is matched with a vertex in $Y$ or not. $\Pr[X_1 = 1] = \epsilon$, and $\Pr[X_i = 1| X_1, \ldots, X_{i-1}] \geq \epsilon / 2$ for $i \in [\epsilon B / 2]$ and any $X_1, \ldots, X_{i-1}$. Therefore, the expected number of edges between $X$ and $Y$ is at least $\epsilon^2 B / 4$.
With $a$ random matchings, the expected number is at least $a \epsilon^2 B / 4$. By Chernoff bound, 
the probability that it is less than $a \epsilon^2 B / 8$ is at most
$\exp(\frac{a \epsilon^2 B}{32})$. By union bound over all possibilities of choosing $X$ and $Y$, the probability that the bipartite graph is not $(\epsilon, \epsilon^2 / 8)$-bipartite dense is 
\[
\exp(\epsilon \log (1/\epsilon) B) \cdot \exp\Bigl( -\frac{a \epsilon^2 B}{32}\Bigr) \leq \frac{1}{4mB}
\]
by taking $a = O(\frac{\log (1/s)}{s})$ and $B = O\bigl(\frac{\log m}{s}\bigr)$.
A union bound over all possible choices of $\epsilon$ ($B$ possibilities) and $m$ edges of $E_0$ implies the claim.
\end{proof}

\begin{claim}
With the parameters $a$ and $B$ above, $G$ is $(4s \log (1/s), \Omega(\frac{s}{d}))$-dense. 
\label{claim:density}
\end{claim}
\begin{proof}
Fix a subset $S$ of measure $4 s \log (1/s)$. 
For a vertex $v$ of $G_0$, let $\mu(v) := \frac{|\mathsf{cloud}(v) \cap S|}{B}$. 
Note that $\mathbb{E}_v [\mu(v)] = 4s \log (1/s)$.
Partition $V_{G_0}$ into $t+1$ buckets $B_0, \ldots, B_{t}$ ($t := \lceil \log_2 (1/ s) \rceil $), 
such that $B_0$ contains $v$ such that $\mu(v) \leq s$, and for $i \ge 1$,
$B_i$ contains $v$ such that $\mu(v) \in (2^{i-1} s, 2^{i} s]$. 
Denote
\[ \mu(B_i) := \frac{\sum_{v \in B_i} \mu(v)}{|V_{G_0}|} \ . \]
Clearly $\mu(B_0) \leq s$. Pick $i \in \{1, \ldots, t \}$ with the largest $\mu(B_i)$.
We have $\mu(B_i) \geq 2 s$ since $\E_v [ \mu(v)] \ge 4 s \log (1/s)$. Let $\gamma = 2^{i-1} s$. 
All vertices of $B_i$ has $\mu(v) \in [\gamma, 2\gamma]$, so 
$|B_i| \geq (s / \gamma) n$. 

Since $G_0$ has no independent set with more than $ns$ vertices, Tur\'{a}n's Theorem says that 
the subgraph of $G_0$ induced by $B_i$ has at least $\frac{|B_i|}{2}(\frac{|B_i|}{ns} - 1) = \Omega(\frac{s}{\gamma^2}n)$ edges.
This is at least $\Omega(\frac{s}{d \gamma^2})$ fraction of the total number of edges. 

For each of these edges, by Claim~\ref{claim:disperser2}, at least $\gamma^2 / 8$ fraction of the edges from the bipartite graph connecting the clouds of its two endpoints, lie in the subgraph induced by $S$ (since $\gamma \geq s$). 
Overall, we conclude that there are at least $\Omega(\frac{s}{d\gamma^2}) \cdot \frac{\gamma^2}{8} = \Omega(\frac{s}{d})$ fraction of edges inside the subgraph induced by $S$.
\end{proof}

\medskip\noindent {\bf Sparsification.}
Recall that $G'$ is obtained from $G$ by deleting at most $\frac{1}{d^2}$ fraction of vertices to have girth greater than $g$. 
In the completeness case, $G'$ has an independent set of measure at least $c - 1/d^2 = \Omega(\frac{1}{\log d})$. 
In the soundness case, $G$ is $(4s \log (1/s), \Omega(\frac{s}{d}))$-dense, so
$G'$ is $(\beta, \alpha)$-dense where $\beta := \Omega(\frac{\log^4 d}{d}), 
\alpha := \Omega(\frac{\log^3 d}{d^2})$. 
Using density of $G'$, we sparsify $G'$ again --- keep each edge of $G'$ by probability $\frac{kn}{|E_{G'}|}$ so that the expected total number of edges is $kn$.

Fix a subset $S \subseteq V_{G'}$ of measure $\beta$. 
Since there are at least $\alpha$ fraction of edges in the subgraph induced by $S$, 
the expected number of picked edges in this subgraph is at least $\alpha k n$. 
By Chernoff bound, the probability that it is less than $\frac{\alpha k n }{8}$ is at most $\exp(- \frac{\alpha k n }{32})$. 
By union bound over all sets of measure exactly $\beta$ (there are at most ${{n} \choose {n \beta}} \leq \exp(2 \beta \log (1 / \beta) n)$ of them), and over all possible values of $\beta$ (there are at most $n$ possible sizes), the desired property fails with probability at most
\[
n \cdot \max_{\beta \in [\beta_0,1]} \bigl\{
\exp(-\alpha k n/32) \cdot \exp(2 \beta \log (1 / \beta) n) \bigr\} \leq n \cdot e^{-n}
\]
when $k = O(\frac{\beta \log (1 / \beta)}{\alpha}) = O(d \log^2 d)$.
In the last step we remove all the vertices of degree more than $10k$.
Since the expected degree of each vertex is at most $2k$, the expected fraction of deleted vertices is $\exp(-\Omega(k)) \ll \beta$.

Combining all these results, we have a graph with small degree $10k = O(d \log^2 d)$ and girth strictly greater than $g$, 
where it is NP-hard to approximate MIS within a factor of $\frac{c - \frac{1}{d^2}}{\beta} = \Omega(\frac{d}{\log^5 d}) = \Omega(\frac{k}{\log^7 k})$. 
Therefore, it is NP-hard to approximate MIS-$k$-$g$ within a factor of $\Omega(\frac{k}{\log^7 k})$.
\end{proof}

\bibliographystyle{abbrv}
\bibliography{mybib}

\parskip=1ex
\appendix
\parskip=1ex

\section{Approximation Algorithm for $k$-Star Transversal}
\label{sec:star}
In this section, we show that $k$-Star Transversal admits an $O(\log k)$-approximation algorithm, matching the $\Omega(\log k)$-hardness obtained via a simple reduction from Minimum Dominating Set on degree-$(k - 1)$ graphs~\cite{CC08}, and proving Theorem~\ref{thm:star}.
Let $G = (V_G, E_G)$ be the instance of $k$-Star Transversal. 
This problem has a natural interpretation that it is equivalent to finding the smallest $F \subseteq V_G$ such that the subgraph induced by $V_G \setminus F$ has maximum degree at most $k - 2$. 
Our algorithm consists of two phases. 
\begin{enumerate}
\item Iteratively solve 2-rounds of Sherali-Adams linear programming (LP) hierarchy and put vertices with a large fractional value in the transversal. If this phase terminates with a partial transversal $F$, the remaining subgraph induced by $V_G \setminus F$ has small degree (at most $2k$) and the LP solution to the last iteration is {\em highly fractional.} 
\item We reduce the remaining problem to {\em Constrained Set Multicover} and use the standard greedy algorithm. 
While the analysis of the greedy algorithm for Constrained Set Multicover is used as a black-box, 
low degree of the remaining graph and high fractionality of the LP solution imply that the analysis is almost tight for our problem as well. 
\end{enumerate}

\medskip \noindent {\bf Iterative Sherali-Adams.}
Given $G$, 2-rounds of Sherali-Adams hierarchy of LP relaxation has variables $\{ x_{v} \}_{v \in V_G} \cup \{ x_{u, v} \}_{u, v \in V_G}$. An integral solution $y : V_G \mapsto \{ 0, 1 \}$, where $y(v) = 1$ indicates that $v$ is picked in the transvesal, naturally gives a feasible solution to the hierarchy by $x_v = y_v$, $x_{u, v} = y_u y_v$. Consider the following relaxation for $k$-Star Transversal. 

\begin{align*}
\mbox{minimize   } \qquad & \sum_{v \in V_G} x_v & \\
\mbox{subject to   }\qquad  
& 0 \leq x_{u, v}, x_v \leq 1 & \forall u, v \in V_G \\
& x_{u, v} \leq x_u & \forall u, v \in V_G \\
& x_{u} + x_v - x_{u, v} \leq 1 & \forall u, v \in V_G \\
& \sum_{v : (u, v) \in E_G} (x_v - x_{u, v}) \geq (\Deg(u) - k + 2)(1 - x_u) & \forall u \in V_G \\
\end{align*}

The first three constraints are common to any 2-rounds of Sherali-Adams hierarchy, and ensure that for any $u, v \in V_G$, the {\em local distribution} on four assignments $\alpha : \{ u, v \} \mapsto \{ 0, 1 \}$ forms a valid distribution. In other words, the following four numbers are nonnegative and sum to 1: $\Pr[\alpha(u) = \alpha(v) = 1] := x_{u, v}$, $\Pr[\alpha(u) = 0, \alpha(v) = 1] := x_v - x_{u, v}$, $\Pr[\alpha(u) = 1, \alpha(v) = 0] := x_u - x_{u, v}$, $\Pr[\alpha(u) = \alpha(v) = 0] := 1 - x_u - x_v + x_{u, v}$. 

The last constraint is specific to $k$-Star Transversal, and it is easy to see that it is a valid relaxation: Given a feasible integral solution $y : V_G \mapsto \{ 0, 1 \}$, the last constraint is vacuously satisfied when $y_u = x_u = 1$, and if not, it requires that at least $\deg(u) - k + 2$ vertices should be picked in the transversal so that there is no copy of $k$-Star in the induced subgraph centered on $u$. 
The first phase proceeds as the following.

\begin{itemize}
\item Let $S \leftarrow \emptyset$. 
\item Repeat the following until the size of $S$ does not increase in one iteration.
\begin{itemize}
\item Solve the above Sherali-Adams hierarchy for $V_G \setminus S$ --- it means to solve the above LP with additional constraints $x_v = 1$ for all $v \in S$, which also implies $x_{u, v} = x_u$ for $v \in S, u \in V_G$. Denote this LP by $\SA(S)$. 
\item $S \leftarrow \{ v : x_v \geq \frac{1}{\alpha} \}$, where $\alpha := 10$.
\end{itemize}
\end{itemize}

We need to establish three properties from the first phase: 
\begin{itemize}
\item The size of $S$ is close to that of the optimal $k$-Star Transversal. 
\item Maximum degree of the subgraph induced by $V_G \setminus S$ is small.
\item The remaining solution has small fractional values --- $x_v < \frac{1}{\alpha}$ for all $v \in V_G \setminus S$. 
\end{itemize}

The final property is satisfied by the procedure. The following two lemmas establish the other two properties. 

\begin{lemma}
Let $\Frac$ be the optimal value of  $\SA(\emptyset)$.
When the above procedure terminates, $|S| \leq \alpha \Frac$. 
\end{lemma}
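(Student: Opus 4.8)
The plan is to track the potential $\phi(S) := \Opt(\SA(S)) - |S|$, where $\Opt(\SA(S))$ denotes the optimal value of $\SA(S)$; thus $\phi(\emptyset) = \Frac$, and $\phi(S) \ge 0$ always, since any feasible solution of $\SA(S)$ gives value $1$ to every vertex of $S$. Intuitively $\phi(S)$ is the fractional mass the hierarchy still places outside the already-committed set $S$. Let $\emptyset = S_0 \subseteq S_1 \subseteq \dots \subseteq S_T = S$ be the sequence of committed sets produced by the first phase (they are nested because $v \in S_{t-1}$ forces $x_v = 1 \ge 1/\alpha$, so $v$ remains committed), and let $x^{(t-1)}$ be the optimal solution of $\SA(S_{t-1})$ from which $S_t$ is formed, so that $x^{(t-1)}_v \ge 1/\alpha$ for every $v \in S_t \setminus S_{t-1}$.

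The crux is a ``rounding up preserves feasibility'' lemma: starting from any feasible solution $x$ of $\SA(S)$ and any $W \subseteq V_G \setminus S$, form $x'$ by setting $x'_v := 1$ for $v \in S \cup W$ and $x'_v := x_v$ otherwise, and setting the pair variables consistently --- $x'_{u,v} := 1$ if $u, v \in S \cup W$, $x'_{u,v} := x'_v$ if $u \in S \cup W$ and $v \notin S \cup W$, and $x'_{u,v} := x_{u,v}$ otherwise. I claim $x'$ is feasible for $\SA(S \cup W)$ and satisfies $\sum_v x'_v = \sum_v x_v + \sum_{v \in W}(1 - x_v)$. This is where I expect the real work to be. All constraints are immediate except the degree constraint at a vertex $u \notin S \cup W$ having a neighbor $w \in W$: there the summand $x'_w - x'_{u,w}$ equals $1 - x_u$, and the Sherali--Adams consistency constraint $x_u + x_w - x_{u,w} \le 1$ rearranges to exactly $x_w - x_{u,w} \le 1 - x_u$, so the left-hand side of the degree constraint does not decrease while its right-hand side $(\Deg(u) - k + 2)(1 - x_u)$ is unchanged. (One may also argue this by rounding up the vertices of $W$ one at a time.)

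Applying the lemma with $S = S_{t-1}$, $W = S_t \setminus S_{t-1}$, and $x = x^{(t-1)}$ (optimal for $\SA(S_{t-1})$) gives $\Opt(\SA(S_t)) \le \Opt(\SA(S_{t-1})) + \sum_{v \in S_t \setminus S_{t-1}} (1 - x^{(t-1)}_v)$. Subtracting $|S_t| = |S_{t-1}| + |S_t \setminus S_{t-1}|$ from both sides turns this into
\[
\phi(S_t) \le \phi(S_{t-1}) - \sum_{v \in S_t \setminus S_{t-1}} x^{(t-1)}_v \le \phi(S_{t-1}) - \tfrac{1}{\alpha}\,\bigl|S_t \setminus S_{t-1}\bigr|,
\]
using $x^{(t-1)}_v \ge 1/\alpha$ on $S_t \setminus S_{t-1}$ in the last step. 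The sets $S_t \setminus S_{t-1}$ are pairwise disjoint with union $S_T$, so summing over $t = 1, \dots, T$ telescopes to $\phi(S_T) \le \phi(\emptyset) - \tfrac{1}{\alpha}|S_T| = \Frac - \tfrac{1}{\alpha}|S_T|$. Since $\phi(S_T) \ge 0$, this gives $|S_T| = |S| \le \alpha\,\Frac$. The only nontrivial ingredient is the rounding-up lemma --- in particular, the observation that the one constraint that could fail, the degree constraint, is rescued precisely by the pairwise Sherali--Adams inequality; everything else is bookkeeping with the potential $\phi$.
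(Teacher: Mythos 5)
Your proof is correct and follows essentially the same route as the paper: the quantity you call $\phi(S_i)$ is exactly the paper's $\Frac_i$, and the heart of both arguments is the same partial-rounding step showing that rounding up the newly committed vertices preserves feasibility for the tightened LP, which yields the decrement $\Frac_{i+1} \leq \Frac_i - \tfrac{1}{\alpha}|S_{i+1}\setminus S_i|$; you chain this forward by telescoping while the paper chains it backward by induction, a cosmetic difference. If anything you are more careful than the paper, which dismisses feasibility of the rounded solution as ``easy to check'' --- your explicit verification that the degree constraint is saved precisely by the pairwise Sherali--Adams inequality $x_u + x_w - x_{u,w} \leq 1$ makes the argument self-contained.
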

\begin{proof}
Assume that the above loop iterated $l$ times, and for $i = 0, ..., l$, let $S_i$ be $S$ after the $i$th loop such that $S_0 = \emptyset, ..., S_l = S$. 
We use induction from the last iteration. 
Let $\Frac_i$ be the optimal fractional solution to $\SA(S_i)$ minus $|S_i|$ such that $\Frac = \Frac_0$.

We first establish $|S_l| - |S_{l - 1}| \leq \alpha \Frac_{l - 1}$. This is easy to see because, when $x$ is the optimal fraction solution to $\SA(S_{l - 1})$, 
\[
|S_l| - |S_{l - 1}| = | \{ v \notin S_{l - 1} : x_v \geq \frac{1}{\alpha} \} | 
\leq \alpha \Frac_{l - 1}.
\]

For $i = l - 2, l-1, ..., 0$, we show that $|S_l| - |S_i| \leq \alpha \Frac_i$. 
Let $x$ be the optimal fraction solution to $\SA(S_{i})$, and $x'$ be the solution obtained by {\em partially rounding} $x$ in the following way.
\begin{itemize}
\item $x'_v = 1$ if $v \in S_i$. Otherwise, $x'_v = x_v$. 
\item $x'_{u, v} = x'_u$ ($v \in S_i$), $x'_v$ ($u \in S_i$), or $x_{u, v}$ otherwise. 
\end{itemize}
It is easy to check that it is a feasible solution to $\SA(S_{i + 1})$ (intuitively, {\em rounding up} only helps feasibility), 
so its value is 
\[
|S_i| + \sum_{v \notin S_i, x_v < \frac{1}{\alpha}} x_v \geq |S_i| + \Frac_{i + 1}, 
\]
which implies 
\[
\Frac_i = \sum_{v \notin S_i, x_v \geq \frac{1}{\alpha}} x_v + 
\sum_{v \notin S_i, x_v < \frac{1}{\alpha}} x_v 
\geq \frac{1}{\alpha} (|S_{i+1}| - |S_i|) + \Frac_{i + 1}.
\]
Finally, we have 
\begin{align*}
  |S_l| - |S_i| \\
= (|S_l| - |S_{i + 1}|) + (|S_{i + 1}| - |S_i|) \\
\leq \alpha \Frac_{i + 1} + (|S_{i + 1}| - |S_i|) \\
\leq \alpha \Frac_{i},
\end{align*}
where the first inequality follows from the induction hypothesis. This completes the induction.
\end{proof}

\begin{lemma}
After the termination, every vertex has degree at most $2k$ in the subgraph induced by $V_G \setminus S$.
\end{lemma}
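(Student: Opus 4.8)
\medskip\noindent\textbf{Proof plan.}
The plan is to read the degree bound directly off the problem-specific constraint of the last Sherali--Adams LP solved in the first phase. Let $x$ be the optimal solution of $\SA(S)$ in the final iteration, where $S$ is the set returned by the procedure. Because the loop terminated, $\{v : x_v \ge 1/\alpha\}$ has the same size as $S$, and since every $v \in S$ has $x_v = 1 \ge 1/\alpha$ this forces $\{v : x_v \ge 1/\alpha\} = S$; in particular $x_v < 1/\alpha$ for every $v \in V_G \setminus S$.

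Next I would fix $u \in V_G \setminus S$ and write out its constraint $\sum_{v : (u,v) \in E_G} (x_v - x_{u,v}) \ge (\Deg(u) - k + 2)(1 - x_u)$, splitting the sum according to whether a neighbour $v$ of $u$ lies in $S$. For $v \in S$ we have $x_v = 1$, so the constraints $x_{u,v} \le x_u$ and $x_u + x_v - x_{u,v} \le 1$ together give $x_{u,v} = x_u$, hence $x_v - x_{u,v} = 1 - x_u$; write $d_S(u)$ for the number of such neighbours. Let $d'(u)$ be the number of neighbours of $u$ in $V_G \setminus S$, i.e.\ the degree of $u$ in the induced subgraph, and note $\Deg(u) = d_S(u) + d'(u)$. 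The $d_S(u)(1-x_u)$ contributions cancel on both sides, leaving
\[
\sum_{v \notin S,\ (u,v) \in E_G} (x_v - x_{u,v}) \ \ge\ (d'(u) - k + 2)(1 - x_u).
\]

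Then I would bound the two sides. If $d'(u) \le k - 2$ then $d'(u) < 2k$ and we are done, so assume $d'(u) \ge k - 1$, which makes $d'(u) - k + 2 \ge 1 > 0$; combined with $x_u < 1/\alpha$ the right-hand side is $> (d'(u) - k + 2)(1 - 1/\alpha)$. On the left, each term satisfies $x_v - x_{u,v} \le x_v < 1/\alpha$ (using $x_{u,v} \ge 0$ and $v \notin S$), so the left-hand side is $< d'(u)/\alpha$. Chaining these gives $d'(u)/\alpha > (d'(u) - k + 2)(1 - 1/\alpha)$, i.e.\ $d'(u) > (\alpha - 1)(d'(u) - k + 2)$; with $\alpha = 10$ this rearranges to $8\,d'(u) < 9(k-2)$, hence $d'(u) < \tfrac{9(k-2)}{8} < 2k$.

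The only place that needs care is the rewriting of the constraint: one must correctly deduce $x_{u,v} = x_u$ for $v \in S$ from the Sherali--Adams constraints so that the $S$-neighbours contribute exactly $1 - x_u$ and drop out cleanly, and one must keep straight that $\Deg(u)$ appearing in the LP is the degree in the original graph $G$ whereas the quantity to be bounded is the degree in $G[V_G \setminus S]$. Once the reduced constraint is in hand the bound is immediate from the termination guarantee $x_u, x_v < 1/\alpha$ and the choice $\alpha = 10$, so there is no substantive obstacle.
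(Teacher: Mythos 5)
Your proof is correct and follows essentially the same route as the paper's: both read the degree bound off the problem-specific Sherali--Adams constraint for $u$ at the final iteration, using the termination guarantee $x_v < 1/\alpha$ for $v \notin S$ and the identity $x_{u,v}=x_u$ for $v \in S$ to cancel the $S$-neighbour contributions. The only cosmetic difference is that the paper argues by contrapositive across iterations (a high-degree vertex in $V_G\setminus S$ would force some neighbour to have $x_v \ge 1/\alpha$ and hence be added), whereas you argue directly from the terminal LP solution, incidentally obtaining the slightly sharper bound $d'(u) < 9(k-2)/8 < 2k$.
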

\begin{proof}
We prove that at least one vertex is added to $S$ if the subgraph induced by $V_G \setminus S$ has a vertex of degree more than $2k$. 
Fix one such iteration, and let $S_1$ and $S_2$ be $S$ before and after the iteration respectively. 
Let $G'$ be the subgraph of $G$ induced by $V_G \setminus S_1$. 
If the subgraph induced by $V_G \setminus S_2$ does not have any vertex with degree more than $2k$, we are done. Otherwise, fix one such vertex $u \in V_G \setminus S_2$. 
Note that the degree of $u$ in $G'$ is also more than $2k$. 

We show that at least one neighbor $v$ of $u$ satisfies $v \notin S_1$ but $v \in S_2$. 
Let $x$ be the optimal fractional solution to $\SA(S_1)$ and consider the following constraint for $u$. 
\[
\sum_{v : (u, v) \in E_G} (x_v - x_{u, v}) \geq (\Deg(u) - k + 2)(1 - x_u).
\]
Let $\Nbr(u)$ and $\Nbr'(u)$ be the set of neighbors of $u$ in $G$ and $G'$ respectively, and $\Deg'(u) = |\Nbr'(u)|$. Note that $\Nbr'(u) = \Nbr(u) \setminus S_1$, and for $v \in \Nbr(u) \cap S_1$, $x_v = 1$ and $x_{u, v} = x_u$. Therefore, the above constraint is equivalent to 
\begin{align*}
& \sum_{v : \Nbr(u) \cap S_1} (1 - x_{u}) + 
\sum_{v : \Nbr'(u)} (x_v - x_{u, v}) \geq (\Deg(u) - k + 2)(1 - x_u) \\
\Leftrightarrow &
\sum_{v : \Nbr'(u)} (x_v - x_{u, v}) \geq (\Deg'(u) - k + 2)(1 - x_u).
\end{align*}
The fact that $u \notin S_2$ implies that $x_u < \frac{1}{\alpha}$, which implies
\begin{align*}
& \sum_{v \in \Nbr'(u)} x_v  \\
\geq & \sum_{v \in \Nbr'(u)} (x_v - x_{u, v}) \geq (1 - \frac{1}{\alpha})(\Deg'(u) - k) 
= & (1 - \frac{1}{\alpha}) \Deg'(u) (1 - \frac{k}{\Deg'(u)}).
\end{align*}
Therefore, there is one $v \in \Nbr'(u)$ with $x_v \geq (1 - \frac{1}{\alpha}) (1 - \frac{k}{\Deg'(u)}) \geq \frac{9}{10}\cdot\frac{1}{2} > \frac{1}{\alpha}$. $v$ satisfies $v \notin S_1$ but $v \in S_2$. 
\end{proof}

\medskip \noindent {\bf Constrained Set Multicover.}
The first phase returns a set $S$ whose size is at most $\alpha$ times the optimal solution and the subgraph induced by $V_G \setminus S$ has maximum degree at most $2k$. As above, let $G'$ be the subgraph induced by $V_G \setminus S$, $\Nbr(u), \Nbr'(u)$ be the neighbors of $u$ in $G$ and $G'$ respectively, and $\Deg(u) = |\Nbr(u)|, \Deg'(u) = |\Nbr'(u)|$.
The remaining task is to find a small subset $F \subseteq V_G \setminus S$ such that the subgraph of $G'$ (and $G$) induced by $V_G \setminus (S \cup F)$ has no vertex of degree at least $k - 1$. 
We reduce the remaining problem to the {\em Constrained Set Multicover} problem defined below. 

\begin{definition}
Given an set system $U = \{ e_1, ..., e_n \}$, a collection of subsets $\calc = \{ C_1, ..., C_m \}$, and a positive integer $r_e$ for each $e \in U$, 
the Constrained Set Multicover problem asks to find the smallest subcollection (each set must be used at most once) such that
each element $e$ is covered by at least $r_e$ times.
\end{definition}

Probably the most natural greedy algorithm does the following:
\begin{itemize}
\item Pick a set $C$ with the largest cardinality (ties broken arbitrarily). 
\item Set $r_e \leftarrow r_e - 1$ for $e \in C$. If $r_e = 0$, remove it from $U$. For each $C \in \calc$, let $C \leftarrow C \cap U$.
\item Repeat while $U$ is nonempty.
\end{itemize}

Constrained Set Cover has the following standard LP relaxation, and Rajagopalan and Vazirani~\cite{RV98} showed that the greedy algorithm gives an integral solution whose value is at most $H_d$ (i.e. the $d$th harmonic number) times the optimal solution to the LP, where $d$ is the maximum set size. 
\begin{align*}
\mbox{minimize  } \qquad & \sum_{C \in \calc} z_C & \\
\mbox{subject to   }\qquad  
& \sum_{C : e \in C} z_C \geq r_e & e \in U \\
& 0 \leq z_C \leq 1 & C \in \calc
\end{align*}

Our remaining problem, $k$-Star Transversal on $G'$, can be thought as an instance of Constrained Set Cover in the following way:
$U := \{ u \in V_G \setminus S : \Deg'(u) \geq k - 1 \}$ with $r_u := \Deg'(u) - k + 2$, 
and for each $v \in V_G \setminus S$, add $\Nbr'(v) \cap U$ to $\calc$. 
Intuitively, this formulation requires at least $r_u$ neighbors be picked in the transversal whether $u$ is picked or not. 
This is not a valid reduction because the optimal solution of the above formulation can be much more than the optimal solution of our problem. 
However, at least one direction is clear (any feasible solution to the above formulation is feasible for our problem), and it suffices to show that the above LP admits a solution whose value is close to the optimum of our problem. 
The LP relaxation of the above special case of Constrained Set Cover is the following:
\begin{align*}
\mbox{minimize  } \qquad & \sum_{v \in V_G \setminus S} z_v & \\
\mbox{subject to   }\qquad  
& \sum_{v : v \in \Nbr'(u)} z_v \geq \Deg'(u) - k + 2 & u \in U \\
& 0 \leq z_v \leq 1 & v \in V_G \setminus S
\end{align*}

Consider the last iteration of the first phase where we solved $\SA(S)$. Let $x$ be the optimal solution to $\SA(S)$ and $\Frac := \sum_v x_v - |S|$. 
Note that $x_v < \frac{1}{\alpha}$ when $v \notin S$. Define $\{ y_v \}_{v \in V \setminus S}$ such that $y_v := 2x_v$. 
\begin{lemma}
$\{ y_v \}$ is a feasible solution to the above LP for Constrained Set Cover.
\end{lemma}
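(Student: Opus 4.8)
The plan is to verify the two families of constraints of the Constrained Set Cover LP directly, using the last solved relaxation $\SA(S)$ as the source of feasibility. First I would recall the per-vertex constraint of $\SA(S)$ for a vertex $u \in U$: since $U \subseteq V_G \setminus S$, we have $x_u < \tfrac1\alpha$, and, exactly as in the degree lemma above, splitting $\Nbr(u)$ into $\Nbr(u)\cap S$ (on which $x_v = 1$ and $x_{u,v} = x_u$) and $\Nbr'(u) = \Nbr(u)\setminus S$, the original constraint
$\sum_{v:(u,v)\in E_G}(x_v - x_{u,v}) \ge (\Deg(u)-k+2)(1-x_u)$
collapses to
$\sum_{v\in\Nbr'(u)}(x_v - x_{u,v}) \ge (\Deg'(u)-k+2)(1-x_u)$.

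Next I would drop the $x_{u,v}$ terms, which are nonnegative, to get $\sum_{v\in\Nbr'(u)} x_v \ge (\Deg'(u)-k+2)(1-x_u)$. Using $x_u < \tfrac1\alpha = \tfrac1{10}$ gives $1-x_u > \tfrac{9}{10} \ge \tfrac12$, so $\sum_{v\in\Nbr'(u)} x_v \ge \tfrac12(\Deg'(u)-k+2)$, and therefore $\sum_{v\in\Nbr'(u)} y_v = \sum_{v\in\Nbr'(u)} 2x_v \ge \Deg'(u)-k+2$, which is precisely the covering constraint for $u$. (Here $\Deg'(u)-k+2 \ge 1 > 0$ because $u\in U$ means $\Deg'(u)\ge k-1$, so this is a genuine constraint and the inequality is meaningful.) Finally, the box constraints $0 \le y_v \le 1$ hold since $x_v \ge 0$ and $x_v < \tfrac1\alpha$ gives $y_v = 2x_v < \tfrac2{10} = \tfrac15 < 1$.

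There is essentially no hard step here; the only thing to be careful about is that the constraint simplification for $u$ is legitimate, i.e.\ that $\SA(S)$ really does contain the $u$-constraint in the stated form and that the substitutions $x_v = 1$, $x_{u,v} = x_u$ for $v \in S$ are forced by $\SA(S)$ (they are, as noted when $\SA(S)$ was defined). The upshot, to be recorded after this lemma, is that the greedy algorithm of Rajagopalan--Vazirani applied to this Constrained Set Cover instance (whose sets have size at most $\Deg'(v)\le 2k$) outputs a cover of value at most $H_{2k}\cdot\sum_v z_v \le H_{2k}\cdot 2\,\Frac = O(\log k)\cdot\Frac$, while $\Frac$ lower-bounds the optimal $k$-Star Transversal; combined with $|S| \le \alpha\,\Frac$ from the first phase, this yields the claimed $O(\log k)$-approximation.
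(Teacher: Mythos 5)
Your proposal is correct and follows the paper's proof essentially step for step: simplify the Sherali--Adams constraint at $u$ by substituting $x_v = 1$, $x_{u,v} = x_u$ for $v \in S$ to get the $\Nbr'(u)$ form, drop the nonnegative $x_{u,v}$ terms, and use $1 - x_u > 1 - \tfrac1\alpha > \tfrac12$ to conclude $\sum_{v\in\Nbr'(u)} 2x_v \ge \Deg'(u)-k+2$, with the box constraints on $y_v$ immediate from $x_v < \tfrac1\alpha$. The extra remarks (that the constraint is nonvacuous because $\Deg'(u)\ge k-1$ for $u\in U$, and the downstream use via Rajagopalan--Vazirani) are accurate but not part of the paper's lemma proof.
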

\begin{proof}
By construction $0 \leq y_v < \frac{2}{\alpha}$, so it suffices to check for each $u \in U$, 
\[
\sum_{v : v \in \Nbr'(u) } y_v \geq \Deg'(u) - k + 2.
\]
Fix $u \in U$. Recall that Sherali-Adams constraints on $x$ imply that
\begin{align*}
& \sum_{v : \Nbr'(u)} (x_v - x_{u, v}) \geq (\Deg'(u) - k + 2)(1 - x_u) \\
\Rightarrow & \sum_{v : \Nbr'(u)} x_v \geq (\Deg'(u) - k + 2)(1 - x_u) \\
\Rightarrow & \sum_{v : \Nbr'(u)} 2 x_v \geq \Deg'(u) - k + 2,
\end{align*}
where the last line follows from the fact that $1 - \frac{1}{\alpha} > \frac{1}{2}$. 
\end{proof}
Therefore, Constrained Set Cover LP admits a feasible solution of value $2\Frac$, and the greedy algorithm gives a $k$-Star Transversal $F$ with $|F| \leq 2 \cdot  \Frac \cdot  H_{2k}$. Since $\Frac$ is at most the size of the optimal $k$-Star Transversal for $G'$ (and clearly $G$), $|S \cup F|$ is at most $O(\log k)$ times the size of the smallest $k$-Star Transversal of $G$.

\section{Hardness for Longer Cycles and Connection to FVS}
\label{sec:cycles}

We introduce several notations convinient for cycles. Given an integer $k$ and $i$, let $(i)$ denote the integer in $[k]$ such that $i = (i)\mod k$ --- the choice of $k$ will be clear in the context. 
Recall that we use superscripts $v^i$ and $e^i$ to indicate a vertex and an edge of a graph, respectively. 
In some cases in this section, a vertices is represented as a vector (i.e. in $n$-dimensional hypercube, $V = \{ 0, 1 \}^n$ and each vertex $v = (v_1, ..., v_n)$ is a $n$-dimensional vector). A subscript $v_i$ is used to denote the $i$th coordinate of $v$ in this case. 


\subsection{Proof of Theorem~\ref{thm:random}}
We prove Theorem~\ref{thm:random}, which improves Theorem~\ref{thm:main_v} in the sense that in the completeness case, a small subset $F \subseteq V_G$ intersects not only cycles of length exactly $k$, but also all cycles of length $3, 4, ..., O(\frac{\log n}{\log \log n})$.
The reduction and the soundness analysis are exactly the same. 
We show the following lemma for the completeness case which is again almost identical to Lemma~\ref{lem:random_completeness}, but carefully keeps track of parameters to consider cycles of increasing length. 

\begin{lemma}
Suppose $P$ has a vertex cover $C$ of measure $c$. For any $\epsilon > 0$, with probability at least $3/4$, there exists a subset $F \subseteq V_G$ of measure at most $c + \epsilon$ such that the induced subgraph $V_G \setminus F$ has no cycle of length $O(\frac{\log n}{\log \log n})$. The constant hidden in $O$ depends on $k, \epsilon$ and the degree $d$ of $P$. 
\end{lemma}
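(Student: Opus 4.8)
The plan is to mimic the proof of Lemma~\ref{lem:random_completeness} almost verbatim, taking $F = C \times [B]$ as the transversal and then bounding the expected number of ``bad'' short cycles --- cycles of length $k' \le L$ (for a threshold $L$ to be chosen) that avoid $F$ --- and deleting one extra vertex per bad cycle. The only new ingredient is that we must let the cycle length $k'$ range up to $L = \Theta(\log n / \log\log n)$ rather than staying bounded, and carefully track how the combinatorial and probabilistic factors depend on $k'$.

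First I would set up the counting exactly as before. A purported cycle of length $k'$ in $V_G$ that avoids $F$ and is not degenerate is specified by a walk $v^1, \dots, v^{k'}$ in the ``constraint hypergraph'' sense: $v^1 \in V_P \setminus C$ is arbitrary ($n$ choices), each subsequent $v^{i+1}$ must share a hyperedge of $P$ with $v^i$ (at most $dk$ choices, using bounded degree $d$), and each label $l^i \in [B]$ is arbitrary ($B$ choices). So there are at most $n (dk)^{k'} B^{k'}$ such purported cycles of length $k'$. Then, exactly as in Claim~\ref{claim:numberofcycles} (whose proof is insensitive to whether the ambient subgraph of $H$ we are trying to realize is a cycle or something else --- what matters is the redundancy forced by needing multiple canonical copies, which is where $2$-connectivity of $H$, here a cycle, enters), a purported cycle that cannot be covered by a single canonical copy of $H$ exists with probability at most $(a d k')^{(k')^2} / B^{k'}$; and one that can be covered by a single canonical copy necessarily passes through a cloud of a vertex of $C$, hence meets $F$. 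Multiplying, the expected number of bad cycles of length $k'$ is at most $n (dk)^{k'} B^{k'} \cdot (adk')^{(k')^2}/B^{k'} = n (adk')^{O((k')^2)}$.

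Next I would sum over $k' = 3, \dots, L$. The dominant term is $k' = L$, giving expected total at most $L \cdot n \cdot (adL)^{O(L^2)}$. Here is the key quantitative point: we want this to be at most, say, $n B / 4$ (so that with probability $\ge 3/4$ the number of bad cycles is at most $nB$, hence removing one vertex per cycle costs at most $nB \le \epsilon N$ extra vertices once $B$ is large enough). Since $a, d$ are constants depending only on $k$ and $\epsilon$, we have $(adL)^{O(L^2)} = \exp(O(L^2 \log L))$, so it suffices to take $B \ge \exp(O(L^2 \log L))$, i.e. $\log B = \Theta(L^2 \log L)$. On the other hand $n = |V_P|$ and $N = nB$, and the reduction must run in polynomial time, so $B = n^{O(1)}$, i.e. $\log B = O(\log n)$. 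Reconciling $\log B = \Theta(L^2 \log L)$ with $\log B = O(\log n)$ forces $L^2 \log L = O(\log n)$, hence $L = O(\sqrt{\log n / \log\log n})$ --- which already gives a super-constant cycle length. To reach the claimed $L = \Theta(\log n/\log\log n)$ one instead chooses $B$ as large as polynomial in $n$ allows and solves for the largest $L$ with $(adL)^{c L^2} \le B$; but actually, re-examining the exponent, the bound $(adk')^{(k')^2}$ in Claim~\ref{claim:numberofcycles} comes from $p^{|E_H|}$-type terms with $|E_H| = k'$ for a cycle, which is the source of the $(k')^2$; being slightly more careful (the partition of $k'$ purported edges into $p \le k'$ groups contributes $p^{k'} \le (k')^{k'}$, and the per-group factor is $(ad)^p$, so the relevant bound is $(k')^{k'}(ad)^{k'}/B^{k'-\text{(savings)}}$ rather than exponent $(k')^2$) one gets expected bad-cycle count of length $k'$ at most $n\,(adk')^{O(k')}$, so the constraint becomes $L \log L = O(\log n)$, i.e. $L = \Theta(\log n/\log\log n)$, matching the statement. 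I would carry out exactly this sharper version of the $k'$-dependent bound.

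The main obstacle, then, is purely bookkeeping: getting the dependence of the failure probability on the cycle length $k'$ to be $\exp(O(k' \log k'))$ rather than $\exp(O((k')^2))$, which requires being careful in the analogue of Claim~\ref{claim:numberofcycles} about how the number of partitions and the number of canonical-copy assignments scale with $k'$ (and noting the ``savings'' in the exponent of $B$ coming from the lemma $\sum |I'_i| \ge k' + p$, which still holds since a cycle is $2$-connected). Everything else --- the choice $F = C\times[B]$, the soundness analysis, the conclusion that $\mathrm{meas}(F) \le c + \epsilon$ --- is identical to Lemma~\ref{lem:random_completeness} and Lemma~\ref{lem:random_soundness}, and needs only to be cited. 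With $L = \Theta(\log n/\log\log n)$ chosen so that $(adL)^{O(L \log L)} \le B = n^{O(1)}$, the expected number of cycles of length at most $L$ avoiding $F$ is $o(N)$, deleting one vertex per such cycle yields the final $F$ of measure $c + \epsilon$ whose removal destroys all cycles of length up to $L$, and the soundness direction is unchanged, completing the proof of Theorem~\ref{thm:random}.
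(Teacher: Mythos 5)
Your proposal is correct and follows essentially the same route as the paper's proof: take $F = C\times[B]$, bound the expected number of non-canonical cycles of each length $k' \le L$, and the crucial refinement you identify — that for a length-$k'$ cycle the partition contributes $p^{k'}(ad)^p$ rather than $p^{O((k')^2)}$, and the savings $\sum|I'_q| \ge k'+p$ (the paper derives this directly from a proper subgraph of a simple cycle having one more incident vertex than edges, but your appeal to the $2$-connectivity lemma is equivalent) gives expected count $n(Rk')^{O(k')}$ per length — is exactly what the paper does, yielding $L\log L = O(\log n)$, i.e. $L = \Theta(\log n/\log\log n)$. (Minor slip: in your last paragraph you write $(adL)^{O(L\log L)}\le B$; it should be $(adL)^{O(L)}\le B$, whose logarithm is $O(L\log L) = O(\log n)$, consistent with your earlier correct derivation.)
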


\begin{proof}
Let $F = C \times [B]$.
We consider the expected number of cycles that avoid $F$ and argue that a small fraction of additional vertices intersect all of these cycles.
Let $k'$ be the length of a purported cycle. Choose $k'$ vertices $(v^1, l^1), \ldots, (v^{k'}, l^{k'})$ which satisfy
\begin{itemize}
\item $v^1 \in V_P$ can be any vertex.
\item $l^1, \ldots, l^{k'} \in B$ can be arbitrary labels. 
\item For each $1 \leq i < k'$, there must be a hyperedge $e = ( u^1, \ldots, u^k )$ and $j \in [k]$ such that 
($v^i = u^j$ and $v^{i+1} = u^{(j+1)}$) or 
($v^i = u^{(j+1)}$ and $v^{i+1} = u^{j}$). Equivalently, there are edges between $\mathsf{\mathsf{cloud}}(v^i)$ and $\mathsf{\mathsf{cloud}}(v^{i+1})$. 
\end{itemize}
There are $n$ possible choices for $v^1$, $B$ choices for each $l^i$, and $2d$ choices for each $v^i \, (i > 1)$ (there are at most $d$ hyperedges containing one vertex, and for each canonical cycle, there are two possibilities to choose a neighbor). 
The number of possibilities to choose such $(v^1, l^1), \ldots, (v^{k'}, l^{k'})$ is bounded by $n (2d)^{k'-1} B^{k'}$. 
Note that no other $k'$-tuple of vertices can form a cycle. Further discard the tuple when two vertices are the same (the resulting cycle is not simple and its simple pieces will be considered for smaller $k'$). 

We calculate the probability that $((v^1, l^1), \ldots, (v^{k'}, l^{k'}))$ forms a cycle (i.e. all $k'$ edges exist) that does not intersect $F$. For a set of purported edges, we say that this set can be {\em covered by a single canonical cycle} if one copy of canonical cycle can contain all $k'$ edges with nonzero probability. 
Suppose that all $k'$ edges in the purported cycle can be covered by a single canonical cycle. 
It is only possible when $k' = k$ and there is a hyperedge $e$ such that after an appropriate shifting, $e = ( v^1, \ldots, v^{k} )$ (recall that $e$ is considered to be an ordered $k$-tuple). 
In this case, $((v^1, l^1), \ldots, (v^{k}, l^{k}))$ intersects $F$ (right case of Figure~\ref{fig:partition}). 
When $k'$ edges of the purported cycle have to be covered by more than one canonical cycle, some vertices must be covered by more than one canonical cycle, and each canonical cycle covering the same vertex should give the same label to that vertex. This redundancy makes it unlikely to have all $k'$ edges exist at the same time (left case of Figure~\ref{fig:partition}). 
The below claim, similar to Claim~\ref{claim:numberofcycles} but desginated for cycles to obtain better parameters, formalizes this intuition.

\begin{claim}
Suppose that $((v^1, l^1), \ldots, (v^{k'}, l^{k'}))$ cannot be covered by a single canonical cycle. Then the probability that it forms a cycle is at most $k'(\frac{adk'}{B})^{k'}$.
\end{claim}
\begin{proof}
Fix $2 \leq p \leq k'$. Partition $k'$ purported edges into $p$ nonempty {\em groups} $I_1, \ldots, I_p$
such that each group can be covered by a single canonical cycle. 
There are at most $p^{k'}$ possibilities to partition.
For each $v \in V_P$, there are at most $d$ hyperedges containing $v$ and at most $aBd$ canonical cycles intersecting $\mathsf{\mathsf{cloud}}(v)$. 
Therefore, all edges in one group can be covered simultaneously by at most $aBd$ copies of canonical cycles. 
There are at most $(aBd)^p$ possibilities to assign a canonical cycle to each group. 
Assume that one canonical cycle is responsible for exactly one group. 
This is without loss of generality since if one canonical cycle is responsible for many groups, we can merge them and this case can be dealt with smaller $p$.

Focus on one group $I$ of purported edges, and one canonical cycle $L$ which is supposed to cover them. Let $I' \subseteq V_G$ be the set of vertices which are incident on the edges in $I$. 
Suppose $L = ((u^1, l'^1), \ldots, (u^k, l'^k))$, which is produced by a hyperedge $f = (u^1, \ldots, u^k) \in E_P$. 
We calculate the probability that $L$ contains all edges in $I$ over the choice of labels $l'^1, \ldots, l'^k$ for $L$. 
One necessary condition is that 
\[ \left\{ v | (v, l) \in I' \mbox{ for some } l \in [B] \right\} \] 
(i.e., the set $I'$ projected to $V_P$) is contained in $f$. Otherwise, some vertices of $I'$ cannot be covered by $L$. 
Another necessary condition is $v^i \neq v^j$ for any $(v^i, l^i) \neq (v^j, l^j) \in I'$.
Otherwise ($(v, l^i), (v, l^j) \in I'$ for $l^i \neq l^j$), since $L$ gives only one label to each vertex in $f \subseteq V_P$, $(v, l^i)$ and $(v, l^j)$ cannot be contained in $L$ simultaneously. Therefore, we have a nice characterization of $I'$: It consists of at most one vertex from the cloud of each vertex in $f$.

Now we make a crucial observation that $|I'| \geq |I| + 1$.
This is because $I$ is a proper subset of the edges that form a simple cycle. 
Formally, in the graph with vertices $I'$ and edges $I$, the maximum degree is at most 2, and there are at least two vertices of degree 1. 
The probability that $L$ contains $I$ is at most the probability that for each $(v^i, l^i) \in I'$, $l^i$ is equal to the label $L$ assigns to $v^i$, which is $B^{-|I'|} \leq B^{-|I| - 1}$. 

We conclude that for each partition, the probability of having all the edges is at most 
\[ (aBd)^p \prod_{q= 1}^p{B^{-|I_q|-1}}=  \frac{(aBd)^p}{B^{k' + p}} = \frac{(ad)^p}{B^{k'}} \ .\]
The probability that $((v^1, l^1), \ldots, (v^{k'}, l^{k'}))$ forms a cycle is therefore bounded by 
\[  \sum_{p=2}^{k'} \ p^{k'} \frac{(ad)^p}{B^{k'}} \leq k' \bigl( \frac{adk'}{B}\bigr)^{k'} \ . \]
\end{proof}

Therefore, the expected number of cycles of length $k'$ that avoid $F$ is bounded by
$n(2d)^{k'-1}B^{k'} \cdot k' (\frac{adk'}{B})^{k'} \leq n (Rk')^{k'}$
where $R$ is a constant depending only on $a$ and $d$ (both are independent of $k'$). With probability at least $3/4$, the number of such cycles of length up to $k'$ is at most $4 n (Rk')^{k'+1}$.
Let $B \geq \frac{4 (Rk')^{k'+1}}{\epsilon}$. Then these cycles can be covered by at most $\epsilon n B = \epsilon N$ vertices. 
If $k' = \frac{\log n}{\log \log n}$, then $k'^{k'} = \exp(k' \log k')$ is also $o(n)$, we can take $B$ linear in $n$ and $k' \geq \Omega(\frac{ \log N}{\log \log N})$. 
\end{proof}

\subsection{Hardness of $k$-HVC with Bounded Degree and Density}
\label{subsec:hvc}
In this subsection, we observe the implicit properties of the best known hardness result for $k$-HVC~\cite{DGKR05} such as {\em bounded degree} and {\em density}. Only bounded degree is needed for our main Theorem~\ref{thm:main_v} as well as its extension Theorem~\ref{thm:random} for cycles proved in the previous subsection, while another derandomized proof of hardness in the next subsection requires density property as well.
The following is the theorem explicitly stated in~\cite{DGKR05}. 

\begin{theorem} [\cite{DGKR05}]
[Restatement of Theorem~\ref{thm:dgkr_body}]
Given a $k$-uniform hypergraph ($k \geq 3$) and $\epsilon > 0$, it is NP-hard to tell apart the following cases:
\begin{itemize}
\item Completeness: There exists a vertex cover of measure $c := \frac{1 + \epsilon}{k - 1}$.
\item Soundness: Every vertex cover has measure at least $s := 1 - \epsilon$.
\end{itemize}
Therefore, it is NP-hard to approximate $k$-HVC within a factor $k - 1 + 2\epsilon$. 
\end{theorem}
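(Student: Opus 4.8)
This statement is the main result of Dinur--Guruswami--Khot--Regev~\cite{DGKR05}, so the plan is to recall the structure of their argument rather than to reprove it from scratch. The starting point is a \emph{$k$-layered smooth Label Cover} instance, produced from an arbitrary SAT instance by the PCP theorem, parallel repetition, and a layering/direct-product construction: for every $\delta>0$ one obtains (in polynomial time) a Label Cover instance whose vertex set is split into $k$ layers with projection constraints between layers, such that in the YES case a labeling satisfies every constraint, in the NO case no labeling weakly satisfies even a $\delta$-fraction of constraints, and the instance is moreover \emph{smooth} (a random constraint at a fixed vertex is injective on any fixed bounded set of labels) and of bounded degree.

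From this one builds a $k$-uniform hypergraph. Its vertices are pairs (Label Cover vertex $u$, subset $F$ of the label set of $u$), organized into one \emph{block} per Label Cover vertex, with layer $i$'s blocks carrying a $p_i$-biased product measure. A hyperedge picks one Label Cover vertex per layer, all projecting consistently to a common vertex, together with one subset from each of the $k$ resulting blocks, and is included exactly when these subsets, pulled back, fail a fixed ``$k$-wise covering'' predicate. The biases $p_i$ and the predicate are arranged --- and this is the technical heart of~\cite{DGKR05} --- so that (i) in the YES case a satisfying labeling decodes, block by block, into a vertex cover of density $\tfrac{1}{k-1}$ (consistency of the labeling is what makes it hit every hyperedge), with the extra $\epsilon$ absorbed into lower-order terms; and (ii) any vertex cover of density $<1-\epsilon$ has complement whose restriction to many blocks is a monotone family of measure $>\epsilon$, and the hyperedge predicate forces these monotone families to be cross-$k$-wise-intersecting along consistent columns. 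A Fourier-analytic influence-decoding argument then shows such families must concentrate on a bounded number of influential coordinates; decoding a random influential coordinate of each block as a label and using smoothness produces a labeling weakly satisfying a constant --- hence $>\delta$ --- fraction of the Label Cover constraints, contradicting soundness of the PCP. Hence every vertex cover has density $\geq 1-\epsilon$, and since $\epsilon$ was arbitrary this yields the claimed factor-$(k-1)$ inapproximability of $k$-HVC.

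The delicate part, and the reason~\cite{DGKR05} had to construct a new multilayered PCP, is exactly reconciling (i) and (ii): getting the completeness density down to $\tfrac{1}{k-1}$ (rather than the $\tfrac{1}{2}$ one would get from a single ordinary long code) while keeping the soundness decoding valid needs the symmetry provided by the $k$ layers, the smoothness property for cross-projection consistency, and the structure theory of cross-intersecting monotone families. For the present paper only the bounded-degree conclusion of this construction is needed in later sections, and it is inherited directly from the bounded degree of the underlying layered smooth Label Cover instance.
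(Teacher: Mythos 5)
The paper treats this statement as a black box: it is cited verbatim from Dinur--Guruswami--Khot--Regev~\cite{DGKR05} and no proof is given in the paper. So there is nothing in the paper for your sketch to be compared against directly; what the paper actually proves in the appendix is the strengthened Theorem~\ref{thm:dgkr} immediately afterward, which extracts bounded-degree and density properties by inspecting the construction of~\cite{DGKR05}.

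Your high-level account of the~\cite{DGKR05} argument is broadly accurate: multi-layered smooth Label Cover obtained from parallel repetition and a layering construction; a biased long-code style gadget turning Label Cover vertices into weighted blocks; completeness by decoding a satisfying labeling into a vertex cover of density close to $\frac{1}{k-1}$; soundness by showing that the complement of a small vertex cover yields monotone families of large biased measure that are forced to be cross-intersecting along consistent projections, and that such families concentrate on few coordinates, which then decode into a labeling satisfying too many Label Cover constraints. One descriptor I would soften is ``Fourier-analytic influence-decoding'': the heart of the soundness analysis in~\cite{DGKR05} is a combinatorial core lemma about $s$-wise $t$-intersecting families in the spirit of Frankl and Dinur--Safra; Friedgut-type junta theorems do underlie that structure theory, so your description is not wrong so much as emphasizing a different layer of the toolkit.

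The one place your proposal genuinely undersells the paper is the last sentence, claiming bounded degree is ``inherited directly'' from the Label Cover instance. It is not: the hypergraph produced by~\cite{DGKR05} is \emph{weighted} (vertices carry $p$-biased product weights), and the paper's Theorem~\ref{thm:dgkr} converts it to an unweighted instance by duplicating each vertex in proportion to its weight, with the nontrivial observation that for rational $\epsilon$ the ratios of weights within each layer are bounded by a function of $k$ and $\epsilon$, so that the post-duplication degree stays bounded by a constant depending only on $k$ and $\epsilon$. The density guarantee (every $\delta$-fraction of vertices induces a $\rho$-fraction of hyperedges in the soundness case) is then deduced from the soundness bound plus the degree bound by a short counting argument. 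These are the only steps the paper itself carries out, and they are precisely the ones your sketch waves away.
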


In some cases, we need a fact that a given hypergraph $P$ has small degrees (only function of $k$ and $\epsilon$) as well as the following additional {\em density} property in the soundness case. In a hypergraph, we define the degree of a vertex to be the number of hyperedges containing it. 
\begin{itemize}
\item The maximum degree of $P$ is bounded by $d$. 
\item In the soundness case above, every set of measure at least $\delta > 0$ contains $\rho > 0$ fraction of hyperedges in the induced subgraph. 
\end{itemize}
For example, the NP-hardness of $k$-HVC with $c = \frac{3}{k}$, $s = 1 -\frac{1}{k}$, $d = 2^{k^{\beta}}$, $\delta = \frac{2}{k}$, and $\rho = \frac{1}{k2^{k^{\beta}}}$ for some $\beta$ is made explicit in~\cite{CGHKKKN05}.
However, careful examination of other results, especially that of~\cite{DGKR05}, yields a better result.
\begin{theorem} [\cite{DGKR05}]
\label{thm:dgkr}
For any rational $\epsilon > 0$, Theorem~\ref{thm:dgkr_body} holds with
$d = O(1), \delta > 0, \rho > 0$ are some constant depending on $k$ and $\epsilon$.
\end{theorem}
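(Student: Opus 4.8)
The statement has two components: (i) the hypergraph produced by the DGKR reduction can be taken to have degree $d$ bounded by a constant depending only on $k$ and $\epsilon$, and (ii) in the soundness case it additionally enjoys the density property (every vertex set of measure $\ge\delta$ induces at least a $\rho$ fraction of the hyperedges). The plan is to obtain (i) by observing that the DGKR reduction is ``local'' over an instance of Label Cover that can itself be taken bounded-degree, and then to derive (ii) as a short combinatorial consequence of bounded degree together with the soundness guarantee already asserted in Theorem~\ref{thm:dgkr_body}.

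\textbf{Bounded degree.} DGKR reduce from Label Cover via a long-code gadget: each Label Cover vertex $u$ (over a label alphabet $\Sigma$ with $|\Sigma|$ a function of $k,\epsilon$) is replaced by a ``cloud'' of at most $2^{|\Sigma|}$ hypergraph vertices, and each Label Cover edge $(u,v)$ contributes a family of $k$-uniform hyperedges supported on $\mathsf{cloud}(u)\cup\mathsf{cloud}(v)$ whose number is bounded by an explicit function $g(k,\epsilon)$ of the test parameters (it does not grow with the size of the Label Cover instance). Hence any hypergraph vertex in $\mathsf{cloud}(u)$ occurs only in hyperedges arising from Label Cover edges incident to $u$, and in at most $g(k,\epsilon)$ of them per such edge; so if the Label Cover instance has both left- and right-degree at most an absolute constant $D$, the hypergraph has degree at most $D\cdot g(k,\epsilon)$. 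The remaining point is that Label Cover hardness does hold at bounded degree: the PCP theorem yields a bounded-occurrence gap-3SAT instance, the clause--variable game on it has bounded degree, and $t$-fold parallel repetition multiplies the degree by a constant ($3^t$ on one side, a constant on the other), with $t$ a constant chosen as a function of $k$ and $\epsilon$ so that the Label Cover soundness is small enough for the DGKR analysis. Running DGKR from such an instance leaves the completeness value $\tfrac{1+\epsilon}{k-1}$ and the soundness value $1-\epsilon$ intact while bounding the degree by $O_{k,\epsilon}(1)$. (Alternatively one can apply a generic expander-based degree-reduction directly to the hypergraph, at the cost of slightly worse constants and some extra care for $k$-uniformity.)

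\textbf{Density.} Fix such a bounded-degree instance $P=(V_P,E_P)$ with $|V_P|=n$, degree at most $d$, and the soundness property, which is immediately equivalent to: every $S\subseteq V_P$ with $|S|>\epsilon n$ contains some hyperedge entirely inside it (else $V_P\setminus S$ would be a vertex cover of measure $<1-\epsilon$). Put $\delta:=2\epsilon$ (any constant exceeding $\epsilon$ works). Given $S$ with $|S|\ge\delta n$, greedily extract pairwise disjoint hyperedges contained in $S$: so long as the part of $S$ not yet used by previously chosen hyperedges still has more than $\epsilon n$ vertices, soundness supplies another hyperedge inside it, and each step consumes $k$ vertices; hence the process runs for at least $\big\lfloor(\delta-\epsilon)n/k\big\rfloor$ steps, producing that many pairwise disjoint (in particular distinct) hyperedges of $P$ contained in $S$. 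On the other hand, counting incidences gives $k\,|E_P|=\sum_{v\in V_P}\deg_P(v)\le d\,n$, so $|E_P|\le d\,n/k$. Therefore, for $n$ large, the fraction of hyperedges of $P$ lying inside $S$ is at least $\frac{\lfloor(\delta-\epsilon)n/k\rfloor}{d\,n/k}\ge\frac{\delta-\epsilon}{2d}=:\rho>0$, a constant depending only on $k$ and $\epsilon$. Together with the unchanged completeness/soundness measures and the degree bound, this is exactly the claimed statement (a harmless adjustment of $\epsilon$ absorbs any loss from a degree-reduction step, if one is used).

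\textbf{Main obstacle.} The only real work is the bounded-degree claim. It is not deep, but it involves two things that must be checked rather than quoted: that the DGKR long-code gadget attaches only $g(k,\epsilon)$-many hyperedges to each cloud per Label Cover edge (reading off the support of the test), and that Label Cover stays NP-hard at bounded degree (standard, via bounded-occurrence 3SAT and parallel repetition). Once bounded degree is in hand, the density property is the two-line packing argument above and requires nothing further.
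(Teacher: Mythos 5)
Your density argument is correct and gives essentially the same constants as the paper's, just phrased differently: the paper argues by contraposition (if a $\delta$-fraction set contained too few hyperedges, you could patch up the vertex cover by deleting one vertex per uncovered hyperedge, contradicting soundness $s$), whereas you greedily pack disjoint hyperedges. Both rest on the same two facts (soundness gives a hyperedge inside any $\epsilon$-heavy set, and bounded degree gives $|E_P| \le dn/k$), and either argument is two lines; no issue there.

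The bounded-degree half is where your account diverges from what the paper actually has to do, and two points you wave at are the ones that carry the weight. First, the DGKR reduction is not a plain two-prover Label Cover plus long code: to reach the $(k-1)$ inapproximability threshold they need the \emph{multi-layered} PCP (parameters $l$ and $R$ in the paper's proof), whose vertices are $l$-tuples of prover questions and whose degree is $l(d_R)^l$ rather than a simple clause--variable degree. Your description of a cloud of $2^{|\Sigma|}$ vertices per Label Cover vertex and $g(k,\epsilon)$ hyperedges per edge is the right intuition but is a different (and weaker) reduction, so the constants and even the completeness/soundness thresholds you'd inherit would not match Theorem~\ref{thm:dgkr_body} as stated. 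Second, and more importantly, the DGKR construction produces a \emph{weighted} hypergraph: the biased long code assigns each vertex a weight $\frac{1}{l|X_i|}p^r(1-p)^{R_i-r}$, and turning this into an unweighted instance while keeping degree bounded requires showing that all these weights are integer multiples of a common minimum weight with quotients bounded by a function of $k$ and $\epsilon$ only (this is why the theorem restricts to rational $\epsilon$, so that $p$ and $1-p$ are rational). Your proposal never acknowledges that the instance is weighted, so it skips the step that occupies most of the paper's proof. Neither omission is a flaw in the high-level plan, but ``check that Label Cover stays bounded-degree'' is not the only thing that must be checked; the weighted-to-unweighted conversion is the genuinely nontrivial part and needs to appear.
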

\begin{proof}
Theorem 4.1 of~\cite{DGKR05} requires a multi-layered PCP with parameters $l$ (number of layers) and $R$ (number of labels), which both depend on $k$ and $\epsilon$. Note that in the original Raz verifier, the degree $d_R$ is a function of $R$. 
Given a Raz verifier which consists of a bipartite graph $G = (V_G, E_G)$ such that $V_G = Y \cup Z$, Theorem 3.3 yields a multilayered PCP where variables of layer $i$ are of the form $(z_1, \ldots, z_{i}, y_{i+1}, \ldots, y_l)$ where $z_j \in Z$ and $y_j \in Y$. The number of labels for any vertex is bounded by $R^l$. For $i < j$, there exists a constraint between $(z_1, \ldots, z_{i}, y_{i+1}, \ldots, y_l)$ and $(z'_1, \ldots, z'_{j}, y'_{j+1}, \ldots, y'_l)$ if and only if
\begin{itemize}
\item $z_q = z'_q$ where $q \leq i$. 
\item $y_q = y'_q$ where $q > j$. 
\item $(y_q, z'_q) \in E_G$ for $i < q \leq j$. 
\end{itemize}
Therefore, the degree is at most $l(d_R)^l$, which is still a function of $k$ and $\epsilon$. After the reduction from a multilayered PCP to a weighted hypergraph, the degree of each vertex is still bounded by a function of $k$ and $\epsilon$, since each variable of the PCP is replaced by at most $2^{R^l}$ vertices and each PCP constraint is replaced by at most $2^{kR^l}$ hyperedges.

Given such a weighted instance, we convert it to an unweighted instance by duplicating vertices according to their weights. 
The weight of each vertex in the $i$th layer is of the form
\[
\frac{1}{l |X_i|} p^{r}(1-p)^{R_i - r}
\]
where $X_i = |Z_i|^i |Y_i|^{l-i}$ is the set of vertices in the $i$th layer, $R_i = R^{O(l)}$ is the number of labels in $i$th layer, $p = 1 - \frac{1}{k - 1 - \epsilon}$, and $0 \leq r \leq R_i$.
The original paper set the weight as above so that the sum of weights becomes 1. Multiply weight of each vertex by $|Y_i|^l$ so that the weight of each vertex in the $i$th layer is of the form 
\[
\frac{1}{l} \left(\frac{|Y_i|}{|Z_i|}\right)^i p^{r}(1-p)^{R_i - r}
\]

 Let $\alpha$ be a rational that divides both $p$ and $1-p$ with both quotients bounded. Then $\alpha^{R^l}$ divides any $p^{r}(1-p)^{R_i - r}$ as well with quotient bounded by a function of $\epsilon$ and $k$. 
Therefore, if we set the minimum weight to be
\[
\frac{1}{l}\cdot \frac{|Y_i|}{|Z_i|} \cdot \alpha
\]
the weight of each vertex must be divisible by the minimum weight, and the quotient will be bounded by a function of $k$ and $\epsilon$. We replace each weighted vertex by (weight / minimum weight) number of unweighted vertices, and for each hyperedge $(v^1, \ldots, v^k)$, add all hyperedges $(u^1, \ldots, u^k)$ where $u^i$ is a copy $v^i$. Since each quotient and the original degree of the weighted instance are bounded by a function of $k$ and $\epsilon$, so is the degree of the unweighted instance. 

Now we have an unweighted problem with completeness $c$, soundness $s$, and degree bounded by $d$. 
Let $\delta = 2(1 - s)$. 
Suppose in soundness case, we have $1 - \delta$ fraction of vertices cover more than $1 - \frac{k(1 - s)}{d}$ fraction of hyperedges.
Cover the remaining hyperedges with one vertex each. Since $|E_P| \leq \frac{d}{k}|V_P|$, this process requires less than $\frac{k(1 - s)}{d} \cdot \frac{d}{k} = 1 - s$ fraction of vertices, and we have a vertex cover of measure less than $1 - \delta + (1 - s) = s$. This contradicts the original soundness, so any $\delta := 2(1-s)$ fraction of vertices should contain at least $\rho := \frac{k\delta}{2d}$ fraction of edges, both depending only on $k$ and $\epsilon$. 
\end{proof}

\subsection{Labeling Gadget}
\label{subsec:labeling_gadget}

We now give another proof of hardness of $k$-Cycle Transversal. 
It is weaker than Theorem~\ref{thm:random} in the sense that a small subset intersects cycles of length at most $O(\log \log n)$ in the completeness case while in Theorem~\ref{thm:random}, we are able to intersect cycles of length $O(\frac{\log n}{\log \log n})$). 
However, it has an advantage of being derandomized so that the result assumes only $\mathsf{P} \neq \mathsf{NP}$ instead of $\mathsf{NP} \not \subseteq \mathsf{BPP}$.
It crucially uses the fact that graphs are directed, so we hope that further improvements on this technique will allow more progress on hardness of FVS, which is hard to approximate only on directed graphs. 

\begin{theorem}
Fix an integer $k \geq 3$ and $\epsilon \in (0,1)$. 
Given a directed graph $G = (V_G, E_G)$, unless $\mathsf{NP} \subseteq \mathsf{P}$, there is no polynomial time algorithm that distinguishes between the following two cases.
\label{thm:labeling}
\begin{itemize}
\item Completeness: There exists $F \subseteq V_G$ with $\frac{1}{k-1} + \epsilon$ fraction of vertices that intersects every cycle of length at most $O(\log \log n)$ (hidden constant in $O$ depends on $k$ and $\epsilon$).
\item Soundness: Any subset with more than $\epsilon$ fraction of vertices has a cycle of length exactly $k$ in the induced subgraph. 
\end{itemize}
\end{theorem}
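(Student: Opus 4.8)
The plan is to derandomize the \emph{random matching} reduction underlying Theorem~\ref{thm:main_v}/\ref{thm:random} by replacing the randomly chosen directed $k$-cycles with an explicit \emph{labeling gadget}, exploiting that the target graph is directed. I would reduce from the bounded-degree, dense version of $k$-HVC hardness (Theorem~\ref{thm:dgkr}): fixing $\epsilon'=\epsilon'(k,\epsilon)$, obtain a $k$-uniform hypergraph $P=(V_P,E_P)$ with $n=|V_P|$, maximum degree $d=O_{k,\epsilon}(1)$, completeness $c=\tfrac{1+\epsilon'}{k-1}$, soundness $s=1-\epsilon'$, and the density guarantee that in the soundness case every vertex set of measure $\geq\delta$ spans a $\rho$-fraction of the hyperedges ($\delta,\rho$ constants depending on $k,\epsilon$). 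From $P$ I build a directed multigraph $G$ on $V_G=V_P\times\Lambda$ for a label set $\Lambda$ of size $B=\mathrm{poly}(n)$, with $\mathsf{cloud}(v)=\{v\}\times\Lambda$: for each hyperedge $e=(v^1,\dots,v^k)$ I attach a family of directed $k$-cycles, one for each index $r$ in a set of size $aB$, namely $(v^1,\lambda_1^e(r))\to(v^2,\lambda_2^e(r))\to\dots\to(v^k,\lambda_k^e(r))\to(v^1,\lambda_1^e(r))$, called the \emph{canonical} cycles; $a=a(k,\epsilon)$ is a constant and the maps $\lambda_i^e$ come from an explicit pseudorandom object. Concretely I would realize $\Lambda$ as an abelian group and the canonical cycles as an explicit lift, so that each hyperedge's cycle family is simultaneously a \emph{hitting set for combinatorial rectangles} (needed for soundness) and \emph{generic} (needed for completeness) --- e.g.\ a small-bias/linear construction gives the former while keeping $B$ polynomial.

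For soundness I would follow Lemma~\ref{lem:random_soundness} and replace only its probabilistic core. If $I\subseteq V_G$ has measure $>\epsilon$, an averaging argument produces a set $S\subseteq V_P$ of measure $\geq\delta$ each of whose clouds meets $I$ in an $\Omega_{k,\epsilon}(1)$-fraction, density of $P$ produces a hyperedge $e\subseteq S$, and it remains to exhibit a canonical cycle of $e$ lying inside $I$. Since the $r$-th canonical cycle of $e$ meets $\mathsf{cloud}(v^i)$ exactly at $(v^i,\lambda_i^e(r))$, this is the statement that the $aB$ ``lines'' $\{(\lambda_1^e(r),\dots,\lambda_k^e(r))\}_r$ hit every product set $A^1\times\dots\times A^k\subseteq\Lambda^k$ with $|A^i|\geq\Omega_{k,\epsilon}(1)\cdot B$; an explicit rectangle hitting set --- which exists of polynomial size for constant $k$ and constant relative side length --- supplies exactly this, deterministically.

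The completeness statement is the heart of the matter and the main obstacle. Put $F=(C\times\Lambda)\cup F_0$ where $C$ is a vertex cover of $P$ of measure $c$; $F$ hits every canonical cycle, so the task is a \emph{small} $F_0$ hitting every \emph{non-canonical} cycle of length at most $L=O(\log\log n)$. After deleting $C\times\Lambda$, such a cycle projects to a closed walk $W$ of length $\leq L$ in the underlying $V_P$-graph using only arcs whose both endpoints lie outside $C$; since $C$ meets every hyperedge, the $C$-free arcs of a single hyperedge's canonical family form short directed paths, so $W$ concatenates pieces of $\Omega(L/k)$ distinct hyperedges and is not a single canonical cycle. The algebraic structure of the labeling gadget is chosen so that such a $W$ cannot ``close up'': in the abelian-lift instantiation, a lift of $W$ is a cycle only if the associated offset sum vanishes, and because this sum is the same for all $B$ lifts, a sufficiently generic signing --- whose genericity is enforced by the method of conditional expectations over the polynomially many closed walks of length $\leq L$, which is what makes the reduction deterministic --- leaves no non-canonical cycle of length $\leq L$ at all, so $F_0$ can be taken empty (or, allowing for a few boundary cases, of measure $o(1)$); the slack between $\tfrac{1}{k-1}+\epsilon$ and $c$ then absorbs it, and soundness yields a $k$-cycle in every $\epsilon$-fraction as above. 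The difficulty --- the same one that blocks a direct attack on FVS --- is precisely keeping $\Lambda$ polynomial while making the gadget both a genuine rectangle hitting set and generic enough to annihilate non-canonical short cycles; the tension between these demands is what caps the cycle length at $O(\log\log n)$ here, as opposed to $O(\log n/\log\log n)$ for the randomized reduction of Theorem~\ref{thm:random}.
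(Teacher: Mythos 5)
Your proposal takes a genuinely different route from the paper. You try to derandomize the random-matching reduction directly: one label space $\Lambda$ of size $\mathrm{poly}(n)$, with each hyperedge contributing $aB$ canonical directed $k$-cycles via explicit pseudorandom maps $\lambda^e_i$, using rectangle hitting sets for soundness and conditional expectations plus algebraic (abelian-lift) structure for completeness. The paper does something structurally different and avoids all of this. It builds a fixed \emph{labeling gadget} $L = ([B]^k, E_L)$ with $B$ a constant, where an edge of color $i$ strictly increases coordinate $i$ and strictly decreases coordinate $i+1$; this gives two combinatorial facts for free: (i) every directed cycle of $L$ uses every color, and (ii) every $\delta$-fraction of $[B]^k$ contains a colorful $k$-cycle. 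It then colors the hyperedges of $P$ with $d' = (dk)^{k'}+1$ ``outer'' colors so that hyperedges within distance $k'$ get distinct colors, sets $V_G = V_P \times (V_L)^{d'}$, and lets the edge for hyperedge $e$ of outer color $q$ move only the $q$th coordinate of the $L$-part (by an $L$-edge of inner color matching the position in $e$). Now every cycle of length at most $k'$, having used some outer color $q$, must use all $k$ inner colors in coordinate $q$ (by fact (i)), and since within distance $k'$ there is only one hyperedge of that outer color, the cycle visits all $k$ clouds of that hyperedge. So $F = C \times (V_L)^{d'}$ for a vertex cover $C$ hits \emph{every} cycle of length $\leq k'$ outright --- there is no residual ``$F_0$'' to construct, no non-canonical short cycles to enumerate, and no conditional expectations. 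The $O(\log\log n)$ bound is transparent: it is exactly where $|V_L|^{(dk)^{k'}}$ stays polynomial.

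The gap in your proposal is concentrated in the completeness step, and I don't see how to close it. You assert that a suitably generic signing (chosen by conditional expectations over the polynomially many closed walks of length $\leq L$) can make all short non-canonical cycles vanish, while simultaneously retaining a linear/small-bias structure on the $\lambda^e_i$ that yields a rectangle hitting set. These two demands pull in opposite directions: conditional expectations requires the freedom to set labels one at a time against an arbitrary potential function, but a small-bias or linear construction commits to a rigid algebraic form in advance. You acknowledge this tension, but a proof needs a concrete object realizing both properties, and none is supplied. Even the randomized version (Theorem~\ref{thm:random}) does not get zero short non-canonical cycles --- it gets $o(N)$ in expectation and then spends $\epsilon N$ extra vertices to hit them --- so the stronger zero-count claim would need its own argument. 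Also, the $O(\log\log n)$ ceiling is asserted but not derived from your parameters. The paper's labeling-gadget route sidesteps all of this precisely because the gadget's rigidity forces every short cycle to be canonical in the sense that matters (it contains a hyperedge), so there simply are no bad short cycles to worry about.
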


\medskip\noindent {\bf Intuition.}
We call this technique {\em labeling gadget}, which explicitly controls the structure of every cycle.
The idea of labeling gadgets to prove hardness of approximation has been used previously to show inapproximability of edge-disjoint paths problem with congestion and directed cut problems~\cite{AZ08,CGKT07,CK09}.

In this work, the labeling gadget is a directed graph $L = (V_L, E_L)$ with roughly the following properties: (i) its girth is $k$, and (ii) every subset of vertices of measure at least $\delta$ has at least one cycle of length $k$.

To highlight the main idea, we introduce a valid reduction from FVS that increases the size of instances exponentially --- the actual proof increases the size polynomially but only works for cycles of bounded length. 
Given a hypergraph $P$ and the labeling gadget $L$, $G = (V_G, E_G)$ is constructed in the following way. 
$V_G = V_P \times V_L^m$, where $m$ is the number of hyperedges in $P$ (say the hyperedges are $e^1,e^2,\dots,e^m$), so that the cloud for each vertex $v \in V_P$ becomes $V_L^m$. 
Each copy of $L$ corresponds to one of the $m$ hyperedges. 
Consider the naive approach introduced earlier where we added $k$ edges for each hyperedge (multiple edges possible), without duplicating vertices. Call this graph $P' = (V_P, E_{P'})$. In $G$, we add an edge from $(v, x^1, \ldots , x^m)$ to 
$(u, y^1, \ldots , y^m)$ if and only if
\begin{itemize}
\item There is an edge $(v, u) \in E_{P'}$ created by a hyperedge $e^i$ for some $i$.
\item $x^j = y^j$ for all $j \neq i$, and 
\item $(x^i, y^i) \in E_L$. 
\end{itemize}
Intuitively, if we want to move from $(u, \ldots)$ to $(v, \ldots)$ where the edge $(u, v) \in E_{P'}$ is created by a hyperedge $e^i$,
then we need to move the $i$th coordinate by an edge of $L$ (other coordinates stay put). 
Once we changed the $i$th coordinate, since $L$ has girth $k$, we have to use an edge formed by $e^i$ at least $k$ times to move $i$th coordinate back to the original solution. 

Suppose $\mathcal{C}=((v^1, \ldots), \cdots , (v^{k'}, \ldots))$ is a cycle in $G$. By the above argument, $(v^1, \ldots, v^{k'})$ is a cycle of $P'$, and must use at least $k$ edges formed by a single hyperedge, say $e^l$.
This is not quite enough to argue that this cycle intersects a vertex cover of $P$ as the same edge of $P'$ that is created by hyperedge $e^i$ may be used multiple times. 
To fix this problem, we color each edge of $L$ by one of $k$ colors and associate a different color to the $k$ edges formed by a hyperedge.
If we ensure the stronger property in the labeling gadget that every cycle of $L$ must be colorful (which implies that the girth is at least $k$), then the cycle $\mathcal{C}= ((v^1, \ldots), \cdots , (v^{k'}, \ldots))$ uses all $k$ edges formed by a single hyperedge, so it must intersect any vertex cover of $P$. 
See Figure~\ref{fig:label} for an example.




\begin{figure}
   \centering
    \includegraphics[height=6cm]{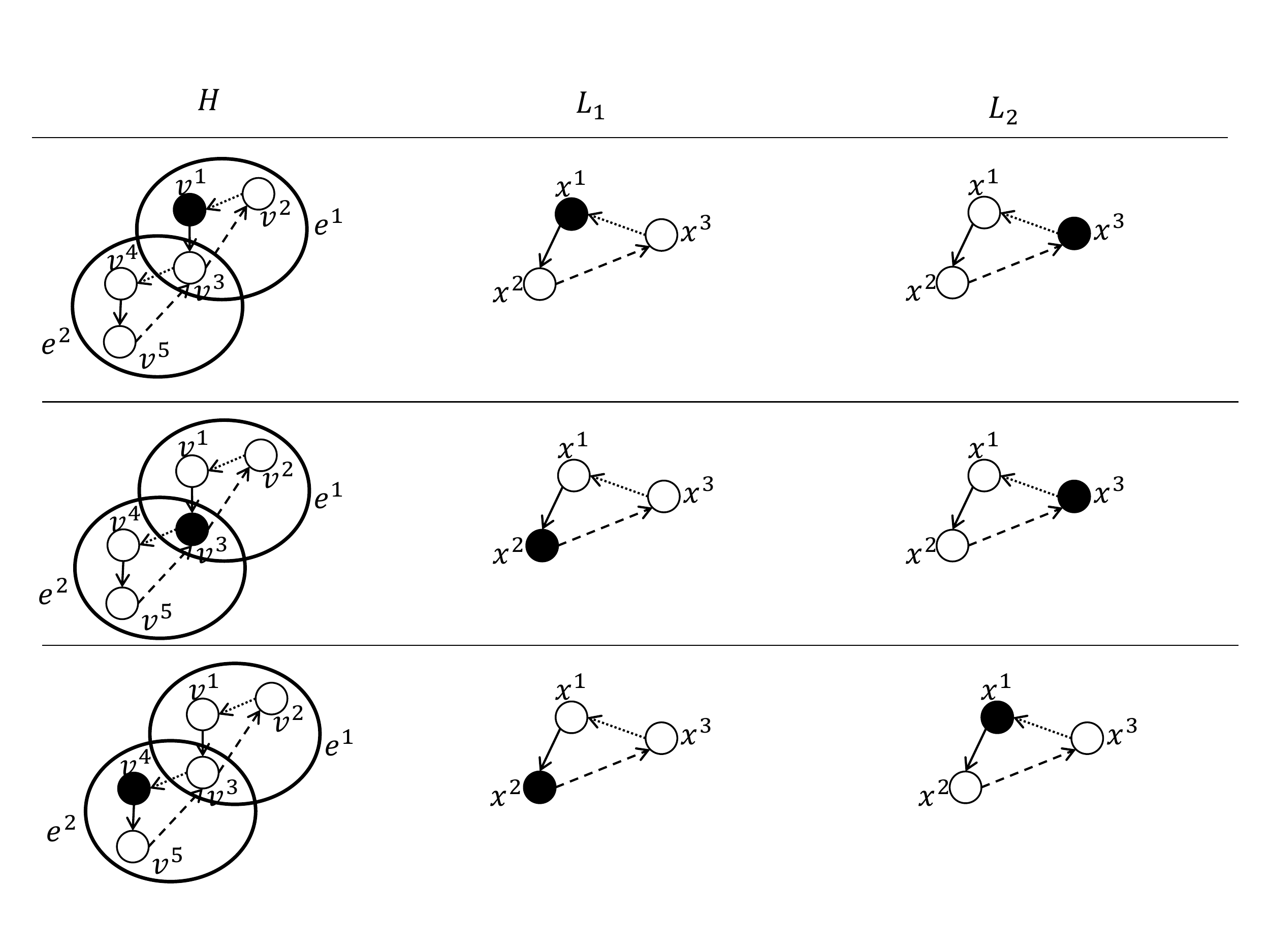}
  \caption{{\small Example with $k = 3$. Each row corresponds to a vertex of $G$ ($(v^1, x^1, x^3)$ in the first row), and each edge of $P$ and $L$ has one of 3 types. From $(v^1, x^1, x^3)$, we used $e_1$ and the solid edge to get to
$(v^3, x^2, x^3)$. The position in $L_2$ stays the same. From $(v^3, x^2, x^3)$, we used $e_2$ and the dotted edge to get to $(v^4, x^2, x^1)$. 
 }}
  \label{fig:label}
\end{figure}

For soundness, given a subset $F \subseteq V_G$ of measure $\delta$, we find a hyperedge $e = ( v^1, \ldots, v^k )$ such that 
$(\bigcap\, \mathsf{\mathsf{cloud}}(v^i)) \cap F$ is large. This follows from averaging arguments and needs a {\em density} guarantee in the soundness case of $k$-HVC. Then we focus on the copy of $L$  associated with $e$, find a colorful $k$-cycle in $L$, and produce the final cycle by combining two cycles $(v^1, v^2), \cdots , (v^k, v^1)$ (from $V_P$) and the colorful cycle in $L$. 

This is a complete and sound reduction from $k$-HVC to the {\em original} FVS problem, except that it blows up the size of the instance exponentially. To get a polynomial time reduction, we compress the construction by coalescing different copies of $L$, retaining only a constant number (dependent on the degree of the original hypergraph) out of the $m$ coordinates.
However, as a result we are not able to control the behavior of long cycles, and we may not intersect all cycles in the completeness case of Theorem~\ref{thm:labeling}. Since we have good
control over the structure of cycles using labeling gadgets, and the only issue is to reduce the size of labels, we hope that  more sophisticated variants of this technique might be able to prove inapproximability of FVS itself.

\noindent{\bf Labeling Gadget.} 
A $(k, \delta)$-labeling gadget is a directed graph $L = (V_L, E_L)$ with each edge colored with a color from $[k]$ that satisfies the following three properties. 
\begin{enumerate}
\item Its girth is exactly $k$.
\item Every cycle has at least one edge for each color.
\item Every subset of vertices of measure at least $\delta$ has at least one cycle $(x^1, x^2, \dots , x^k)$ such that
\begin{itemize}
\item Its length is exactly $k$.
\item After an appropriate shifting, the color of $(x^i, x^{(i+1)})$ is $i$.
\end{itemize}
\end{enumerate}
Let $V_L = [B]^k$, where $B$ will be determined later depending on $\delta$ and $k$. 
For each $1 \leq i \leq k$, and for each $x_1, \dots , x_k$ and $y_i > x_i, y_{(i+1)} > x_{(i+1)}$, we add an edge of color $i$ from 
\[ (x_1, \dots , x_i, y_{(i+1)}, \dots , x_k) \quad \text{to} \quad (x_1, \dots , y_i, x_{(i+1)}, \dots , x_k) \ . \]
 Intuitively, edges of color $i$ {\em strictly increase ith coordinate, strictly decrease $(i+1)$th coordinate, and do not change the others}.

With this construction, properties 1. and 2. can be shown easily. If a cycle uses an edge of color $i$, the $i$th coordinate 
was decreased by using this edge, and the cycle should use at least one edge of color $(i+1)$ to return. The same argument can be applied to color $(i+1)$, $(i+2)$, $\dots$, until the cycle uses all the colors. 
The following lemma shows property 3.

\begin{lemma}
\label{lem:gadget}
For $k \in \mathbb{N}$ and $\delta > 0$, there exists an integer $B := B(k, \delta)$ such that a subset $S \subseteq [B]^k$ with measure at least $\delta$ contains a $k$-cycle that has one edge of each color.
\end{lemma}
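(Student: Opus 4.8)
The plan is to first pin down exactly which subgraphs of $L$ are colorful $k$-cycles, and then to produce one inside $S$ by appealing to a density Ramsey theorem in $[B]^k$.

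\emph{Structure of colorful cycles.} Recall that a color-$i$ edge strictly increases coordinate $i$, strictly decreases coordinate $(i+1)$, and leaves the other coordinates unchanged. I would first observe that for \emph{any} choice of integers $q_j < p_j$ in $[B]$ ($j \in [k]$), the $k$ points $x^i = (p_1, \dots, p_{i-1}, q_i, p_{i+1}, \dots, p_k)$ are pairwise distinct (since $x^i$ and $x^{i'}$ disagree in coordinates $i$ and $i'$) and form a $k$-cycle in which the edge $x^i \to x^{(i+1)}$ has color $i$: indeed, passing from $x^i$ to $x^{(i+1)}$ raises coordinate $i$ from $q_i$ to $p_i$, lowers coordinate $i+1$ from $p_{i+1}$ to $q_{i+1}$, and fixes the rest, which matches the definition of a color-$i$ edge. (Conversely, tracking a single coordinate $j$ around any colorful cycle shows it is touched only by the color-$j$ edge, an increase, and the color-$(j-1)$ edge, a decrease, which forces this shape; but we only need the existence direction.) Thus it suffices to find integers $q_j < p_j$ in $[B]$ with all $k$ vertices $x^i$ lying in $S$. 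Specialising to the case $p_j - q_j = r$ for one common integer $r \geq 1$ only shrinks the family of candidates, so in fact it is enough to exhibit points $p - r e_i \in S$ for all $i \in [k]$, for some $p \in [B]^k$ and integer $r \geq 1$ with $r + 1 \leq p_j \leq B$ for every $j$.

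\emph{Finding the configuration.} The set $\{p - r e_i : i \in [k]\}$ is an axis-parallel corner in $[B]^k$ with the tip $p$ omitted. I would therefore invoke the multidimensional Szemer\'edi theorem of Furstenberg and Katznelson---equivalently, the $k$-dimensional corners theorem, which also follows from the hypergraph removal lemma---providing an integer $N_0 = N_0(k, \delta)$ such that every subset of $[N_0]^k$ of density at least $\delta$ contains a configuration $\{a, a + r e_1, \dots, a + r e_k\}$ for some $a$ and some integer $r \geq 1$. Apply this to the point-reflected set $\bar S = \{\, (N_0 + 1)\mathbf{1} - s : s \in S \,\}$, which has the same density as $S$; it contains such a corner, and reflecting back, $p := (N_0 + 1)\mathbf{1} - a$ satisfies $p - r e_i \in S$ for all $i \in [k]$, while the inclusion $a + r e_i \in [N_0]^k$ translates exactly into the bounds $r + 1 \leq p_i \leq N_0$. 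Taking $B := N_0(k, \delta)$ and $q_j := p_j - r$ then yields a colorful $k$-cycle all of whose vertices lie in $S$, which proves the lemma.

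\emph{Main obstacle.} The only non-elementary ingredient is the density Ramsey input of the second step; everything else is bookkeeping about the gadget. The part requiring the most care is the structural analysis of colorful cycles---getting the cyclic indexing of colors and the ``appropriate shifting'' exactly right---so that what must be found in $S$ really is a (corner) configuration to which the multidimensional Szemer\'edi theorem applies verbatim.
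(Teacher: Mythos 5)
Your proof is correct, but it takes a genuinely different and much heavier route than the paper's. The key structural observation is the same in spirit: a colorful $k$-cycle corresponds to $k$ points of the form ``one privileged coordinate perturbed, the others fixed.'' But you then insist the $k$ perturbations all have a \emph{common} increment $r$, so that the required configuration becomes a rigid ``corner'' $\{p-re_1,\dots,p-re_k\}$, and you invoke the Furstenberg--Katznelson multidimensional Szemer\'edi theorem (equivalently the $k$-dimensional corners theorem) to produce one in any dense subset of $[B]^k$. That does prove the lemma, since the statement only asks for \emph{some} finite $B(k,\delta)$.

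The paper's proof drops exactly that rigidity and thereby becomes completely elementary. It uses the dual configuration: pick $x=(x_1,\dots,x_k)\in S$, and for each $i$ independently find some $y_i>x_i$ with $x+(y_i-x_i)e_i\in S$; these $k$ points form the colorful cycle, and the increments $y_i-x_i$ need not match. To find such an $x$, the paper defines $S'$ to be the set of points of $S$ that are the last point of $S$ hit along one of the $k$ axis-parallel lines through them; there are only $kB^{k-1}$ directed lines, so $|S'|\le kB^{k-1}$, and taking $B>k/\delta$ forces some $x\in S\setminus S'$, which by definition admits the needed $y_i$ in every direction. So the two proofs differ both in the exact target configuration (common-difference corner versus free increments) and in the tool used (density Ramsey theory versus a one-line pigeonhole). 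The practical upshot is quantitative: the paper gets the sharp bound $B(k,\delta)=\lceil k/\delta\rceil+1$, while the corners theorem gives bounds of tower (or worse) type, which would hurt the constants hidden in the reduction (e.g.\ the blow-up factor $|V_L|^{(dk+1)^{k'}}$ in Theorem~\ref{thm:labeling}), even though it still yields a valid $B(k,\delta)$ and hence a correct proof of the lemma as stated.
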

\begin{proof}
Fix a subset $S \subseteq [B]^k$ of measure at least $\delta$. 
For each $x \in [B]^k$ and $i \in [k]$, define $\Line(x, i) := \left\{ y \in [B]^k : (y)_j = (x)_j \mbox{ for all } j \neq i \right\}$ to be the axis-parallel line containing $x$ and parallel to the $i$th unit vector $e_i$. Let 
$\Surface_S$ map each directed line to the first point in $S$ that the line hits. Precisely, 
\begin{equation*}
\Surface_S(x, i) := 
\begin{cases}
\argmax_{y \in S \cap \Line(x, i)} (y_i) & S \cap \Line(x, i) \neq \emptyset \\
\emptyset & S \cap \Line(x, i) = \emptyset,
\end{cases}
\end{equation*}
and $S' := \cup_{x, i} \Surface_S(x, i)$. 
There are $k \cdot B^{k - 1}$ lines total ($B$ points for each line), so $|S'| \leq k \cdot B^{k - 1}$.
If $B > \frac{k}{\delta}$, there is an element in $S \setminus S'$. Call this point $(x_1, \dots, x_k)$. 
For any $i \in [k]$, $(x_1, \dots, x_{i - 1}, y_i, x_{i + 1}, x_k)$ is also in $S$ for some $y_i > x_i$.
$((x_1, x_2, \dots , x_{k-1}, y_k)$, 
$(x_1, x_2, \dots , y_{k-1}, x_k)$, $\dots$,
$(x_1, y_2, \dots , x_{k-1}, x_k)$,
$(y_1, x_2, \dots , x_{k-1}, x_k)$,
$(x_1, x_2, \dots , x_{k-1}, y_k))$
is a cycle we wanted.
\end{proof}

\noindent{\bf Reduction. }
We show a reduction from $k$-HVC to Directed $k$-Cycle Transversal, proving Theorem~\ref{thm:labeling}.
Fix $k$ and let $c := \frac{1 + \epsilon}{k - 1}, s := 1 - \epsilon, d, \delta, \rho$ be the parameters we have from Theorem~\ref{thm:dgkr}. 

Let $k'$ be the maximum length of cycles that we want to intersect in the completeness case, which will be determined later.
Let $L = (V_L, E_L)$ be a $(k, \rho \delta)$-labeling gadget. 
We are given a hypergraph $P = (V_P, E_P)$ with the maximum degree $d$. Since each vertex has a degree at most $d$, each hyperedge shares a vertex with at most $dk$ other hyperedges. Consider a graph $P' = (V_{P'}, E_{P'})$ where $V_{P'} = E_P$ and there exists an edge between $e$ and $f$ if and only if they intersect. Define the distance between two hyperedges $e$ and $f$ to be the minimum distance between $e$ and $f$ in $P'$.  
The maximum degree of $P'$ is bounded by $dk$, and for each $e \in V_{P'}$, there are at most $(dk)^{k'}$ neighbors within distance $k'$.
Therefore, each hyperedge can be colored with $d' = (dk)^{k'} + 1$ colors so that two hyperedges within distance $k'$ are assigned different colors. To distinguish it from the coloring of $L$, we call the former {\em outer coloring} and the latter {\em inner coloring}. We use letters $u, v$ to denote the vertices of $V_P$, $x, y$ for $V_L$, and $a, b$ for $(V_L)^{d'}$. Furthermore, since some vertices are indexed by a vector, we use superscripts to denote different vertices (e.g. $x^1, x^2 \in V_L$) and subscripts to denote different coordinates of a single vertex (e.g. $x = (x_1, \ldots, x_k)$). 

Our reduction will produce a directed graph $G = (V_G, E_G)$ where $V_G = V_P \times (V_L)^{d'} = V_P \times ([B]^k)^{d'}$. 
The number of vertices (from $P$ to $G$) is increased by a factor of $|V_L|^{d'} = |V_L|^{(dk+1)^{k'}}$. 
Since $|V_L|$ and $dk+1$ only depend on $k$, this quantity is polynomial in $|V_P|$ if $k' = O(\log \log |V_P|)$. 
The edges of $G$ are constructed as the following: 
\begin{itemize}
\item For any $e = ( v^1, \ldots, v^k ) \in E_P$, let $q \in [d']$ be its (outer) color. 
\item For any $i \in [k]$,
\item For any $x, y \in V_L$ such that $(x, y) \in E_L$ with inner color $i$, 
\item For any $a \in (V_L)^{d'}$,
\item We put an edge $(v^i, a_{q \mapsto x})$ to $(v^{(i+1)}, a_{q \mapsto y})$ with outer color $q$ and inner color $i$, where $a_{q \mapsto x}$ means that the $q$th {\em outer} coordinate of $a$ (which is an element of $V_L$) is replaced by $x$. 
\end{itemize}
For a vertex $(v, a) \in V_G$, consider $a = (x^1, \ldots, x^{d'})$ as a label which is a $d'$-dimensional vector and each coordinate $x^i$ corresponds to a vertex of $L$. Following one edge with outer color $q$ changes only $x^q$ (according to $L$), while leaving the other coordinates unchanged. Based on this fact, it is easy to prove the following lemmas.
\begin{lemma}
$G$ has girth at least $k$.
\end{lemma}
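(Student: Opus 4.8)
The plan is to exploit the single structural feature the construction was built to have: a directed edge of $G$ with outer color $q \in [d']$ changes only the $q$th outer coordinate of the label vector $a \in (V_L)^{d'}$, and it changes that coordinate along a genuine edge of $E_L$, leaving all other outer coordinates fixed. So I would fix an arbitrary directed cycle $\mathcal{C}$ in $G$, of length $\ell$, and argue $\ell \ge k$.

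First I would note that $\mathcal{C}$ uses at least one edge and hence at least one outer color appears on it; fix such a color $q$, and let $f_1, \dots, f_t$ (with $t \ge 1$) be the edges of $\mathcal{C}$ carrying outer color $q$, listed in the cyclic order in which $\mathcal{C}$ traverses them. Writing $\pi_q : (V_L)^{d'} \to V_L$ for the projection onto the $q$th outer coordinate, the next step is to check that $\pi_q$ of the current label is unchanged as $\mathcal{C}$ traverses any edge of outer color $\ne q$, and changes along an edge of $E_L$ exactly when $\mathcal{C}$ traverses one of the $f_r$. Consequently, if $f_r$ runs (on its $q$th coordinate) from $x_r$ to $y_r$, then $y_r = x_{r+1}$ for each $r$ (indices cyclic), so $x_1 \to x_2 \to \dots \to x_t \to x_1$ is a closed walk in $L$ of length exactly $t$.

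Then I would invoke property~1 of the labeling gadget: $L$ has girth exactly $k$, so every directed cycle of $L$ has length at least $k$, and a closed walk of positive length in a directed graph always contains a directed cycle; hence $t \ge k$. Since the $f_r$ are distinct edges of $\mathcal{C}$, this gives $\ell \ge t \ge k$, so the girth of $G$ is at least $k$.

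I do not expect a genuine obstacle here; the only point that needs care is the middle step --- verifying cleanly that the $q$-colored edges of a directed cycle concatenate, in cyclic order, into a closed walk of $L$ --- which holds because the stretches of $\mathcal{C}$ between consecutive $q$-colored edges consist entirely of edges that fix the $q$th coordinate. (For the sharper fact used later, that a length-$k$ cycle of $G$ actually meets every vertex cover of $P$, one additionally needs property~2, that every cycle of $L$ uses all $k$ inner colors; but that is not needed for the present lemma.)
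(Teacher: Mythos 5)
Your proof is correct and follows essentially the same approach as the paper: project the cycle onto the $q$th outer coordinate of the label, observe that this traces a closed walk in $L$ (of length equal to the number of $q$-colored edges), and invoke the girth bound for $L$. Your version is somewhat more careful than the paper's one-line argument (which loosely says each coordinate of $a^i$ ``is a cycle in $L$''), in that you explicitly isolate a single color $q$ and pass from a closed walk of positive length to a directed cycle contained in it.
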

\begin{proof}
From the above discussion, each edge of $G$ acts like an edge for exactly one copy of $L$ and acts like a self-loop for the other copies of $L$. If $((v^1, a^1), \ldots, (v^l, a^l))$ is a cycle in $G$, then each coordinate of $a^i$ is a cycle in $L$ as well. Since $L$ has girth $k$, $G$ also has girth at least $k$. 
\end{proof}
\begin{definition}[Canonical cycles]
For any hyperedge $e = ( v^1, \ldots, v^k )$ of $P$ with outer color $q$, for any cycle $x^1 , \ldots, x^k$ of $L$ such that $(x^i, x^{(i+1)})$ is colored $i$ for $i = 1, 2, \ldots, k$, and for any $a \in (V_L)^{d'}$, $((v^1, a_{q \rightarrow x^1}), \ldots , (v^k, a_{q \rightarrow x^k}))$ is also a cycle of $G$ of length exactly $k$. Call such cycles {\em canonical}. 
\end{definition}

\begin{lemma}
Suppose $k \leq l \leq k'$, and $((u^1, a^1),  \ldots, (u^{l}, a^{l}))$ be a cycle. 
Then, there exists a hyperedge $e$ such that $e \subseteq \left\{ u^1, \ldots, u^l \right\}$.
\label{lem:k_canonical}
\end{lemma}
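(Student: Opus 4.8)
The plan is to track how the cycle $\mathcal{C} = ((u^1, a^1), \ldots, (u^l, a^l))$ moves through the coordinates of the label space $(V_L)^{d'}$, and use the two structural properties of the labeling gadget (girth exactly $k$ and colorfulness of every cycle) together with the outer coloring of hyperedges to force $k$ edges coming from a single hyperedge. First I would observe, as in the preceding girth lemma, that each edge of $G$ carries an outer color $q \in [d']$ and an inner color $i \in [k]$, and that following such an edge changes exactly the $q$th outer coordinate of the label (advancing it along an inner-color-$i$ edge of $L$) while fixing all other outer coordinates. Projecting $\mathcal{C}$ onto the first component gives a closed walk $u^1, u^2, \ldots, u^l, u^1$ in $P'$ (the multigraph on $V_P$ obtained by adding the $k$ edges $(v^j, v^{(j+1)})$ for each hyperedge, multiplicities allowed), where each step is an edge created by some hyperedge.

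The key step is a \emph{return argument} on outer coordinates. Pick any outer color $q$ that is used by at least one edge of $\mathcal{C}$ (such a $q$ exists since $\mathcal{C}$ is nonempty). Restricting attention to the steps of $\mathcal{C}$ that have outer color $q$: these and only these steps alter the $q$th outer coordinate $a^t_q \in V_L$, and the sequence of values $a^t_q$ as $t$ runs over the $q$-colored steps (in cyclic order, ignoring the other steps) traces a closed walk in $L$. Decompose this closed walk into simple cycles of $L$; since $L$ has girth exactly $k$ and, more importantly, every cycle of $L$ uses at least one edge of each of the $k$ inner colors, the $q$-colored steps of $\mathcal{C}$ include, for each inner color $i \in [k]$, at least one step with outer color $q$ and inner color $i$. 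Now the crucial point: an edge of $G$ with outer color $q$ and inner color $i$ was, by construction, created by \emph{some} hyperedge $e$ whose outer color is $q$ and it realizes the $P'$-edge $(v^i, v^{(i+1)})$ of that hyperedge. Since $\mathcal{C}$ has length $l \le k'$, any two hyperedges that contribute steps to $\mathcal{C}$ are within distance $k'$ in $P'$ (the vertices $u^1, \ldots, u^l$ lie on a closed walk of length $l \le k'$, so the hyperedges touching them are pairwise close), hence by the defining property of the outer coloring they must all have \emph{distinct} outer colors. Therefore all the $q$-colored steps of $\mathcal{C}$ come from one and the same hyperedge $e = (v^1, \ldots, v^k)$, and we have shown this single $e$ contributes, for every $i \in [k]$, a step realizing $(v^i, v^{(i+1)})$; in particular every vertex $v^i$ of $e$ appears among $u^1, \ldots, u^l$, so $e \subseteq \{u^1, \ldots, u^l\}$, as desired.

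The main obstacle I anticipate is making the ``distance in $P'$'' bookkeeping fully rigorous: I need to argue carefully that if two hyperedges $e, f$ each contribute at least one edge to the cycle $\mathcal{C}$ of length $l \le k'$, then $e$ and $f$ are within distance $k'$ in $P'$ (so the outer coloring separates them). This follows because the first components of the endpoints of those contributed edges all lie on the closed walk $u^1 \to \cdots \to u^l \to u^1$ of length $l$ in $P'$, and consecutive $u^t$'s belong to a common hyperedge (the one creating that step), so one can walk from $e$ to $f$ in $P'$ using at most $l \le k'$ hyperedge-hops; I would spell this out by exhibiting the explicit path $e = f_1, f_2, \ldots, f_r = f$ in $P'$ where $f_t$ is the hyperedge creating the $t$th step of $\mathcal{C}$, noting consecutive $f_t$'s share the vertex $u^{t+1}$ and hence are adjacent in $P'$. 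Everything else — the girth/colorfulness facts about $L$, the fact that a single $G$-edge behaves as a self-loop on all but one outer coordinate — is already established or immediate from the construction.
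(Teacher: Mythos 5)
Your proof is correct and takes essentially the same route as the paper: pick an outer color $q$ used by the cycle, apply the colorfulness property of $L$ to the closed walk traced in the $q$th outer coordinate to get all $k$ inner colors, and use the outer coloring of hyperedges (distinct colors within distance $k'$ in the hyperedge-intersection graph) to conclude that all $q$-colored edges of the cycle originate from a single hyperedge $e$, whence $e \subseteq \{u^1,\ldots,u^l\}$. You actually spell out the distance-in-$P'$ bookkeeping (the path $f_1,\ldots,f_l$ of contributing hyperedges, with consecutive ones sharing $u^{t+1}$) that the paper leaves implicit, which is a welcome addition; your only rough edge is conflating the paper's two uses of the symbol $P'$ (once as the multigraph on $V_P$ in the intuition, once as the hyperedge-intersection graph on $E_P$ used for the outer coloring), but your intent is clear and the argument is sound.
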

\begin{proof}
Let one of the edges of the cycle have outer color $q$. 
By the properties of $L$ (corresponding to outer color $q$), 
for each $i \in [k]$, there must be an edge with outer color $q$ and inner color $i$.
Since the distance between two hyperedges with the same outer color is at least $k'$,
every edge with outer color $q$ must be from the same hyperedge, say $e = ( v^1, \ldots, v^k )$. 

By the property 2. of the labeling gadget corresponding to outer color $q$ (equivalently hyperedge $e$), for every inner color $j$, $((u^1, a^1),  \ldots, (u^{l}, a^{l}))$ must use an edge with inner color $j$ and outer color $q$.
Notice that if $((u^i, a^i), (u^{(i+1)}, a^{(i+1)}))$ is with outer color $q$ and inner color $j$, $u^i = v^j$ and $u^{(i+1)} = v^{(j+1)}$. Therefore, $e \subseteq \left\{ u^1, \ldots, u^l \right\}$. 
\end{proof}
%
\noindent{\bf Completeness.}
\begin{lemma}
Recall that $k' = O(\log \log |V_G|)$. 
If $P$ has a vertex cover of measure $c$, $G$ has a $k'$-cycle transversal of measure $c$.
\label{lem:labeling_completeness}
\end{lemma}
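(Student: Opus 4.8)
The completeness direction is the easy half of the reduction: if $C \subseteq V_P$ is a vertex cover of $P$ of measure $c$, we simply lift it to $F := C \times (V_L)^{d'} \subseteq V_G$, which has measure exactly $c$ (the cloud structure means the lift of a measure-$c$ set has measure $c$). It remains to argue that $F$ hits every cycle of $G$ of length at most $k'$. By the girth lemma, $G$ has no cycle of length less than $k$, so we only need to worry about cycles of length $\ell$ with $k \le \ell \le k'$.

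First I would invoke Lemma~\ref{lem:k_canonical}: any cycle $((u^1, a^1), \ldots, (u^\ell, a^\ell))$ of $G$ with $k \le \ell \le k'$ contains a hyperedge $e \subseteq \{u^1, \ldots, u^\ell\}$ among the first coordinates of its vertices. Since $C$ is a vertex cover of $P$, $C$ intersects $e$, so some $u^i \in C$. But then the vertex $(u^i, a^i)$ of the cycle lies in $C \times (V_L)^{d'} = F$. Hence $F$ intersects the cycle. This handles all cycles of length in $[k, k']$, and together with the girth bound we conclude $F$ is a $k'$-cycle transversal of measure $c$.

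The only subtlety — and the reason the statement carries the caveat $k' = O(\log\log|V_G|)$ — is making sure the construction is actually legal: we need $d' = (dk)^{k'} + 1$ to be a valid number of outer colors and $|V_L|^{d'}$ to be polynomial in $|V_P|$. As noted in the reduction description, $|V_L| = B^k$ and $dk+1$ depend only on $k$, so $|V_L|^{d'} = \exp(O((dk)^{k'}))$ is polynomial in $|V_P|$ precisely when $(dk)^{k'} = O(\log |V_P|)$, i.e. $k' = O(\log\log|V_P|) = O(\log\log|V_G|)$. This bookkeeping is where the length of cycles we can control in the completeness case is capped; it is not really an obstacle so much as the price paid for a polynomial-size reduction. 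Since Lemma~\ref{lem:k_canonical} is already proved and does all the structural work, the proof of this completeness lemma is essentially the two-line argument above plus this remark on parameters.

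If one wanted to be fully careful, the one place to double-check is the claim ``measure of $F$ equals measure of $C$'': because every vertex $v \in V_P$ is blown up into the same cloud $\{v\} \times (V_L)^{d'}$ of identical size $|V_L|^{d'}$, the uniform measure on $V_G$ restricted to the lift of any $S \subseteq V_P$ is exactly $|S|/|V_P|$, so this is immediate. Thus the whole lemma follows, and the main obstacle (if any) is purely the parameter accounting that forces $k' = O(\log\log|V_G|)$ rather than the larger $O(\log n / \log\log n)$ achievable by the randomized construction of Theorem~\ref{thm:random}.
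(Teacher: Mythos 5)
Your proof is correct and follows the paper's argument exactly: lift the vertex cover $C$ to $F = C \times (V_L)^{d'}$, invoke Lemma~\ref{lem:k_canonical} to extract a hyperedge from the projection of any cycle of length in $[k,k']$, and conclude that $F$ meets the cycle because $C$ covers that hyperedge. Your explicit appeal to the girth lemma for cycles of length below $k$, and your remarks on why $k' = O(\log\log|V_G|)$ is forced by polynomiality of the reduction, are accurate elaborations of what the paper leaves implicit.
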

\begin{proof}
Let $C \subseteq V_P$ be such that it has measure $c$ and intersects every hyperedge $e \in E_P$.
Let $F = C \times (V_L)^{d'} \subseteq V_G$. It is clear that $F$ has measure $c$. 
We argue that $F$ indeed intersects every cycle of length at most $k'$. 
For every cycle $((u^1, a^1), \ldots, (u^l, a^l))$ of length $k \leq l \leq k'$, by Lemma~\ref{lem:k_canonical}, there exists a hyperedge $e = ( v^1, \ldots, v^k )$ such that $e \subseteq \left\{ u^1, \ldots, u^l \right\}$. 
Since $C$ is a vertex cover for $P$, there exists $v^i \in C$, so $F \supseteq v^i \times (V_L)^{d'}$ intersects this cycle. 
\end{proof}
\noindent{\bf Soundness.}
\begin{lemma}
\label{lem:soundness-labeling}
If every subset of $V_P$ with measure at least $\delta$ contains a $\rho$ fraction of hyperedges in the induced subgraph, 
every subset of $V_G$ with measure $2\delta$ contains a canonical cycle. 
\label{lem:labeling_soundness}
\end{lemma}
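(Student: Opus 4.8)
The plan is to locate a canonical cycle by first identifying a single hyperedge $e=(v^1,\dots,v^k)$ of $P$ whose \emph{common label set} $L_e:=\bigcap_{i\in[k]}\pi\!\left(S\cap\mathsf{cloud}(v^i)\right)\subseteq (V_L)^{d'}$ has measure at least $\rho\delta$, where $\pi$ denotes the coordinate projection of a cloud onto $(V_L)^{d'}$, and then by pulling a colorful cycle out of a suitable copy of $L$. Suppose such an $e$ has been found, with outer color $q$. For a background $a\in (V_L)^{[d']\setminus\{q\}}$ put $T_i(e,a):=\{x\in V_L:(v^i,a_{q\mapsto x})\in S\}$. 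Since $(a,x)\mapsto a_{q\mapsto x}$ is a bijection from $(V_L)^{[d']\setminus\{q\}}\times V_L$ onto $(V_L)^{d'}$, summing the indicator of ``$(v^i,a_{q\mapsto x})\in S$ for all $i$'' over $(a,x)$ gives $\sum_a\bigl|\bigcap_iT_i(e,a)\bigr|=|L_e|$; hence the average of the $V_L$-measure of $\bigcap_iT_i(e,a)$ over the $|V_L|^{d'-1}$ backgrounds equals the $(V_L)^{d'}$-measure of $L_e$, which is $\ge\rho\delta$. So some background $a$ satisfies $\bigl|\bigcap_iT_i(e,a)\bigr|\ge\rho\delta|V_L|$, and property~3 of the $(k,\rho\delta)$-labeling gadget applied to $\bigcap_iT_i(e,a)\subseteq V_L$ produces a length-$k$ cycle $(x^1,\dots,x^k)$ of $L$ contained in $\bigcap_iT_i(e,a)$ whose edge colors are $1,2,\dots,k$ up to a cyclic shift. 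Reindexing $x^1,\dots,x^k$ cyclically removes the shift, so $\bigl((v^1,a_{q\mapsto x^1}),\dots,(v^k,a_{q\mapsto x^k})\bigr)$ is a canonical cycle, and every vertex of it lies in $S$ because $x^i\in\bigcap_jT_j(e,a)\subseteq T_i(e,a)$.

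To produce $e$, I would invoke the density hypothesis on $P$. Choose $b\in (V_L)^{d'}$ uniformly and let $V_P^b:=\{v\in V_P:(v,b)\in S\}$; then $\E_b[\text{measure of }V_P^b]=|S|/|V_G|\ge2\delta$, so by a one-line averaging the set of ``rich'' labels---those $b$ with $\text{measure}(V_P^b)\ge\delta$---has measure at least $\delta$. For each rich $b$, the hypothesis says the subhypergraph of $P$ induced on $V_P^b$ contains at least a $\rho$ fraction of all hyperedges, i.e.\ at least $\rho|E_P|$ hyperedges $e=(v^1,\dots,v^k)$ with $\{v^1,\dots,v^k\}\subseteq V_P^b$, equivalently $b\in L_e$. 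Double counting the pairs $(b\ \text{rich},\ e\ \text{with}\ b\in L_e)$ shows their number is at least $\bigl(\delta|(V_L)^{d'}|\bigr)\bigl(\rho|E_P|\bigr)$, so averaging over the $|E_P|$ hyperedges yields one $e$ with $|L_e|\ge\rho\delta|(V_L)^{d'}|$, which is exactly what the first paragraph needs.

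I expect the delicate point---and the reason the argument is routed through $L_e\subseteq(V_L)^{d'}$ rather than staying inside one copy of $L$---to be the control of the intersection $\bigcap_iT_i(e,a)$. For any fixed hyperedge one only knows that each $T_i(e,a)$ has $V_L$-measure about $\delta$ on average over $a$, and the naive union bound gives measure only $\ge k\delta-(k-1)$ for the intersection, which is vacuous since $\delta$ is a small constant. The fix is that the density property hands us \emph{many} ($\ge\rho|E_P|$) fully-rich hyperedges per rich label, so that carrying out the counting over the entire background space and then pigeonholing over hyperedges recovers a genuinely large common label set; only after that does property~3 of the labeling gadget finish the job. The remaining points are routine: the bijection bookkeeping in the averaging step, and the cyclic reindexing that converts property~3's ``colors correct up to a shift'' into the exact pattern $1,2,\dots,k$ required by the definition of a canonical cycle.
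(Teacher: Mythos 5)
Your proof is correct and follows essentially the same route as the paper's: identify a hyperedge $e$ whose common label set $L_e$ (the paper's $\{a:e\subseteq\mathsf{slice}(a)\cap I\}$) has measure at least $\rho\delta$ via an averaging over labels, the density hypothesis, and an averaging over hyperedges; then average out the non-$q$ coordinates (your bijection bookkeeping is the paper's ``averaging with respect to $x^2,\dots,x^{d'}$'') to find a single copy of $L$ where $\bigcap_iT_i(e,a)$ has measure at least $\rho\delta$; then apply property~3 of the labeling gadget and cyclically reindex to get a canonical cycle. The only cosmetic difference is that the paper assumes WLOG $q=1$ so it can write the background as $(y^2,\dots,y^{d'})$ directly, while you carry the explicit $a_{q\mapsto x}$ notation throughout.
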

\begin{proof}
Let $I \subseteq V_G$ has measure at least $2 \delta$. 
For $a \in (V_L)^{d'}$, we let $\mathsf{slice}(a) := V_P \times a$ to be the copy of $V_P$ associated with $a$. 
Let $A = \left\{ a \in (V_L)^{d'} : \mu_{P} (\mathsf{slice}(a) \cap I) \geq \delta  \right\}$. An averaging argument shows that $\mu_{(V_L)^{d'}} (A) \geq \delta$.
By the soundness property (with density) of $k$-HVC, for each $a \in A$, $\mathsf{slice}(a) \cap I \subseteq V_P$ contains at least $\rho$ fraction of hyperedges. Therefore, if we consider the product space $E_P \times V_L^{d'}$, at least $\rho \delta$ fraction of tuples $(e, a)$ satisfy $e \subseteq \mathsf{slice}(a) \cap I$. 

By an averaging argument with respect to $E_P$, we can conclude that there exists a hyperedge $e = (v^1, \ldots, v^k)$ such that $\rho \delta$ fraction of $a = (x^1, \ldots, x^{d'}) \in (V_L)^{d'}$ satisfies $e \subseteq \mathsf{slice}(a) \cap I$. Without loss of generality, assume that its outer color is 1. 
Another averaging argument with respect to $x^2, \ldots, x^{d'}$ shows that there exists $(y^2, \ldots, y^{d'})$ such that 
$X := \left\{x \in L ~\mid ~ e \subseteq \mathsf{slice}((x, y^2, \ldots, y^{d'})) \cap I \right\}$ satisfies $\mu_L(X) \geq \rho \delta$.

Since $L$ is a $(k, \rho \delta)$-labeling gadget, there exists a cycle $(x^1, \ldots, x^k) \subseteq X$ such that $(x^i, x^{i+1})$ is colored with $i$. 
Our final cycle of $G$ consists of 
\[
((v^i, x^i, y^2, \ldots, y^{d'}), (v^{(i+1)}, x^{(i+1)}, y^2, \ldots, y^{d'}))
\]
for each $i \in [k]$. Note that $(v^i, x^i, y^2, \ldots, y^{d'}) \in I$ for each $i$ since by the definition of $X$, for each $x^i \in X$, $e \subseteq \mathsf{slice}(x^i, y^2, \ldots, y^{d'}) \cap I$. The edge 
\[
((v^i, x^i, y^2, \ldots, y^{d'}), (v^{(i+1)}, x^{(i+1)}, y^2, \ldots, y^{d'}))
\]
exists by the construction. 
\end{proof}
\end{document}